\newcommand{\CC}{\mathbb{C}}
\newcommand{\RR}{\mathbb{R}}
\newcommand{\NN}{\mathbb{N}}
\newcommand{\ZZ}{\mathbb{Z}}
\newcommand{\EE}{\mathbb{E}}
\newcommand{\PP}{\mathbb{P}}
\newcommand{\dd}{\,\mathrm{d}}
\newcommand{\iid}{\overset{iid}{\sim}}
\newtheorem{thm}{Theorem}
\newtheorem{prop}[thm]{Proposition}
\newtheorem{eg}[thm]{Example}
\newtheorem{Lem}[thm]{Lemma}
\newcommand{\Gmax}{G_{\max}(f, \mathcal{G} )}
\numberwithin{equation}{section}
\numberwithin{thm}{section}
\title[Estimating Maximal Symmetries]{Estimating Maximal Symmetries of Regression Functions via Subgroup Lattices}
\author{Louis G. Christie}
\address{University of Cambridge,
			Cambridge,
			United Kingdom.}
\email{lgc26@cam.ac.uk}
\author[L. G. Christie \& J. A. D. Aston]{John A. D. Aston}
\address{University of Cambridge,
			Cambridge,
			United Kingdom.}
\email{j.aston@statslab.cam.ac.uk}
\begin{document}

\begin{abstract}
We present a method for estimating the maximal symmetry of a continuous regression function. Knowledge of such a symmetry can be used to significantly improve modelling by removing the modes of variation resulting from the symmetries. Symmetry estimation is carried out using hypothesis testing for invariance strategically over the subgroup lattice of a search group $\mathcal{G}$ acting on the feature space. We show that the estimation of the unique maximal invariant subgroup of $\mathcal{G}$ generalises useful tools from linear dimension reduction to a non linear context. We show that the estimation is consistent when the subgroup lattice chosen is finite, even when some of the subgroups themselves are infinite.  We demonstrate the performance of this estimator in synthetic settings and apply the methods to two data sets: satellite measurements of the earth's magnetic field intensity; and the distribution of sunspots.
\end{abstract}

\keywords{Invariant models, non linear dimensionality reduction }

\section{Introduction}

Many objects we wish to model in statistics obey symmetries. Time series can be seasonal \citep{box2015time}, which is a form of discrete translation symmetry. In biology, viral capsids can exhibit icosahedral symmetries, as shown in figure \ref{fig:protiens} below \citep{jiang2017atomic}. Linear models can be invariant to one or more of specific features, which is equivalent to a continuous translation symmetry along that feature's axis as in figure \ref{fig:demo_alg}. In statistical shape analysis \citep{kendall1984shape}, we consider collections of $k$ landmark points in $\RR^m$ under the symmetries of rotation, scaling, and translation simultaneously. Graph neural networks utilise the permutation symmetries of nodes and edges \citep{atz2021geometric}. \\
%
%Similarly, directional statistics are formulated as real angles invariant to translations by $2 \pi$ radians \citep{mardia2009directional}. Another class of examples is found in image analysis. Suppose we wish to classify images as either containing a face or not. This can be formulated as learning the indicator function $\mathbf{1}_{\text{Face}} : [0,1]^p \rightarrow \{ 0, 1 \}$ (here assuming that images contain $p$ greyscale pixels). It is often assumed that this function is invariant to reflections or rotations of the images.  \\
%\\
\begin{figure}[h!]
	\centering
%	\begin{tabular}{cccc}
%				\includegraphics[scale = 0.15]{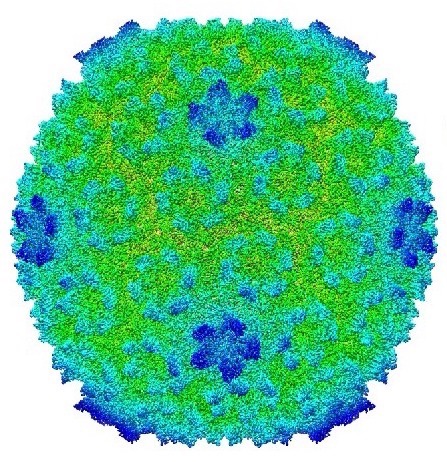} &
%				\includegraphics[scale = 0.15]{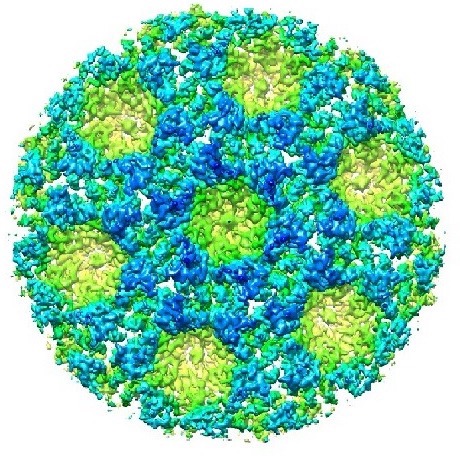} &
%				(a) Sf6 Capsid & (b) HIV-1 CA Capsid \\ 	 \\			
%				\includegraphics[scale=0.085]{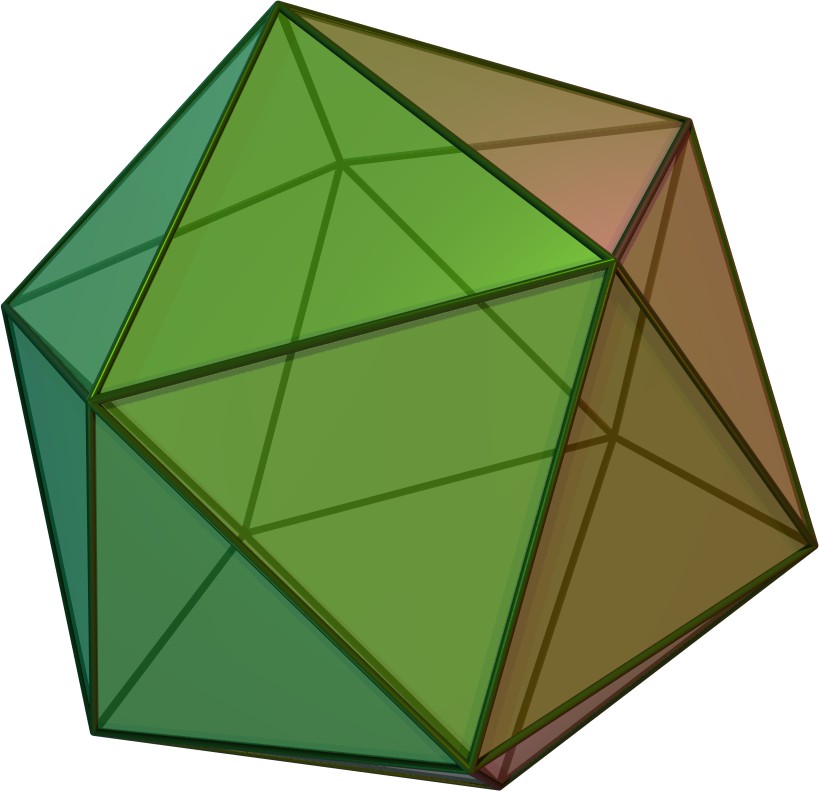} &
%				\includegraphics[scale = 0.45]{figs/Sphere_symmetry_group_ih.png} \\(c) Regular Icosahedron & (d) Icosahedral Symmetries
%	\end{tabular}
	\begin{tabular}{ >{\centering}p{4cm}  >{\centering}p{4cm}  >{\centering}p{4cm}}
				\includegraphics[scale = 0.15]{figs/Sf6.jpg} &
				\includegraphics[scale = 0.15]{figs/HIV.jpg} &
				\includegraphics[scale=0.085]{figs/Icosahedron.jpg} 
				\tabularnewline			
				(a) Sf6 Capsid & (b) HIV-1 CA Capsid & (c)  Icosahedron
	\end{tabular}

	\caption{Subfigures (a) and (b) show Cryo-EM reconstructions of representative viruses (Figure 1 in \cite{jiang2017atomic}). Figure (c) shows an icosahedron - a platonic solid with the same symmetries as the viruses.  }
	\label{fig:protiens}
\end{figure}

In general, consider estimating a continuous square integrable function $f : \mathcal{X} \rightarrow \RR$ with an invariance property:
\begin{equation}
	\label{eq:inv_prop}
 f( \phi( x ) ) = f(x)  
\end{equation}
for all $x \in \mathcal{X}$ and all invertible transformations $\phi : \mathcal{X} \rightarrow \mathcal{X}$ in some set $\Phi$. We say that such an $f$ is \textbf{strongly $\Phi$-invariant}. Any such set $\Phi$ must naturally have the properties that:
%\begin{enumerate}[	 (G1)]
%	\item $\phi \circ \psi \in \Phi$ for all $\psi, \phi \in \Phi$; and
%	\item $\phi^{-1} \in \Phi$ for all $\phi \in \Phi$
%\end{enumerate}
\begin{equation}
	\text{(G1) } \phi \circ \psi \in \Phi \text{ for all } \psi, \phi \in \Phi; \text{ and } \quad \text{ (G2) } \phi^{-1} \in \Phi \text{ for all } \phi \in \Phi,
\end{equation}
because we can use \ref{eq:inv_prop} to check that for all $\phi, \psi \in \Phi$ we have:
\begin{align}
	f  (\phi (\psi(x) ) ) &= f ( \psi(x) ) = f (x)  \\
	f ( \phi^{-1} (x) ) &= f ( \phi ( \phi^{-1} (x) ) ) = f(x) 
\end{align}
This means that the set of transformations $\Phi$ is an example of a \textit{group}, defined in section \ref{sec:background}, though we will avoid technical group theory in this introduction. This will be important later as the properties of groups are used both for statistical and computational benefit.  \\
\\
Suppose that we have collected data $\mathcal{D} = \{ (X_i, Y_i) \}_{i = 1}^n $ with each $(X_i, Y_i) \in \mathcal{X} \times \RR$, $\EE( Y_i \mid X_i ) = f(X_i)$, and $\mathrm{Var}( Y_i \mid X_i) = \sigma^2 < \infty$. The basis of this paper will be such regression contexts with independent identically distributed data. This covers classification situations, either using an indicator function or a probabilistic regression function. It can also describe other functional objects, for example the map $V : \RR^3 \rightarrow \RR_{\geq 0}$ that describes the mass density of a viral capsid $V$, or the relationship of conditional means of one coordinate of a joint vector.   \\
\\
There are three main methods for including the information of the symmetries of $\Phi$ in an estimator $\hat{f}$ of $f$:
\begin{itemize}
	\item Firstly, we can sample and apply transformations to construct a new data set $\mathcal{D}' = \{ ( \phi_{ij} ( X_i ) , Y_i ) : j \in \{0, \dots, J_i \} \}_{i=1}^n $ for some randomly or deterministically chosen $\phi_{ij} \in \Phi$ (known in machine learning as \textit{data augmentation} \citep{shorten2019survey, chen2020group}, which is distinct from Bayesian data augmentation \citep{tanner1987calculation, van2001art}). 
	\item Secondly, we can consider projecting data points (\textit{data projection}) in the quotient space $\mathcal{X} / \Phi$, which is the space of equivalence classes under the relation $X_i \sim X_j$ if and only if $X_i = \phi( X_j )$ for some $\phi \in \Phi$. This quotient space is often represented by a set of unique points from each class, obtained by moving the data points using the transformations from $\Phi$.
	\item  Lastly, we can transform the estimate $\hat{f}$ into a $\Phi$-invariant estimate by taking $S_\Phi \hat{f} = \EE ( \hat{f}( \phi (x), \mathcal{D} ) \mid \mathcal{D})$ (where $\phi$ is a uniform random variable on $\Phi$, which exists under certain topological conditions on $\Phi$, see section \ref{ssec:group_theory}), known as \textit{feature averaging} \citep{elesedy2021provablyLin}.
\end{itemize}

These methods for particular transformations have each been used in numerous statistical contexts. Shape statistics uses data projection where we have explicit projection maps, while the Cryo-EM images are improved by feature averaging. These methods and other symmetrisation techniques have been shown to be broadly applicable in many statistical and machine learning contexts; a good overview of these contexts is \citet{bronstein2021geometric}. \\
\\
Recent studies have quantified the benefit of these methods, such as in \citet{lyle2020benefits, bietti2021sample, elesedy2021provably, huang2022quantifying}. In summary, one can prove that such symmetrisation techniques outperform the base estimator, almost surely over the data. For example, in the case of feature averaging under topological conditions on the transformation set $\Phi$ we have:
\begin{equation}
	\label{eq:consistency}
	\| f - S_\Phi \hat{f} \|_2 = \| S_\Phi f - S_\Phi \hat{f} \|_2 \leq \| S_\Phi \| \| f - \hat{f} \|_2 = \| f - \hat{f} \|_2
\end{equation}
using the $\Phi$-invariance of $f$, which implies $f = S_\Phi f$, and the fact that $S_\Phi$ is a projection (i.e., an idempotent linear operator) \citep{elesedy2021provablyLin} and thus has operator norm at most $1$. This can be seen as averaging out some modes of variation in the estimate $\hat{f}$ in a flexible non-linear manner, or by reducing the entropy of the function class we are estimating over. A simple example of this is averaging a function $f$ on $\RR^3$ with the symmetry of the special orthogonal group $SO(3) = \{  x \mapsto Ax : A \in \RR^{3 \times 3}, A^T A = I, \det(A) = 1 \}$ which describes rotations around the origin in 3D space. In this case the function $S_{SO(3)} f$ varies only along the radial direction at any point $x \in \RR^3$ and is constant in the orthogonal directions. This situation occurs for spherically symmetric density functions as in \citet{baringhaus1991testing, garcia2020optimal}. \\
\\
Data projection can be thought of as picking a suitable non-linear transformation of the covariates. In the $SO(3)$ example above, we can project the data points to a ray emanating from the origin: $x \mapsto ( \| x \|_2 , 0 ,0 )$. Estimation of $f$ then becomes much simpler knowing that the inputs all lie along a one dimensional subspace of $\RR^3$. \\
\\
This of course relies heavily on the assumption that $f$ truly is $\Phi$-invariant. Otherwise we create obvious problems: we will struggle to identify images of the digits 6 or 9 if we assume that their classification functions are invariant to any rotations, for example. More generally, if $\hat{f}$ is a consistent estimator of $f$ then $S_\Phi \hat{f}$ converges in probability to $S_\Phi f$ (from equation \ref{eq:consistency}). This means that inappropriately used symmetries (i.e., when $f$ is not $\Phi$-invariant) can create asymptotic bias (of $S_\Phi f - f \neq 0$). Therefore, we want to use as much symmetry as possible without using too much. \\
\\
There have been a few efforts to develop methods that use the data to select appropriate symmetries, such as \citet{cubuk2019autoaugment, lim2019fast, benton2020learning}. These methods lose the global structure by estimating sub\textit{sets} of transformations rather than sub\textit{groups}. Moreover, little if any attention has been given to the statistical properties of these methods. Other methods in statistics have focused on testing for the symmetries of  distributions \citep{baringhaus1991testing, garcia2020optimal}, but these methods cannot be used for this problem because the conditional means can exhibit symmetry even if the marginal distribution of the covariates does not. \\
\\
Thus in this paper we present a statistical method for establishing symmetries of an object, to be used when estimating it. This first requires us to develop an inferential framework for group objects. Unlike other work that does this for group \textit{valued} random objects \citep{grenander2008probabilities}, in our framework the objects are the entire groups themselves. With this framework it is possible to solve a particular problem - estimating the maximal symmetries of regression functions.   \\
\\
This framework is an explicit generalisation of some techniques in variable subset selection \citep{mallows2000some, berk1978comparing}. Consider a linear regression function $f : \RR^p \rightarrow \RR$ given by $f(X) = X^T \beta$ for which some features $i \in I$ have $\beta_i = 0$, where $\beta_i$ are the regression coefficients in $\beta$. We could estimate $I$ by considering subsets $A \subseteq \{ 1, \dots, p \}$ and testing whether $\beta_A = 0$ (where the set subscript indicates the vector $( \beta_a : a \in A)$). These subsets can be structured in a \textit{lattice} (see section \ref{ssec:lattices} and figure \ref{fig:demo_alg}), a directed graph with a node for each subset and an edge from $A \rightarrow B$ if $B \subseteq A$ (though we omit the directions from the figures as they always point down in this paper). This is important because it shows how we can avoid running tests for some of the subsets $A$. If we test at the bottom of the lattice first, any rejection immediately tells us about the larger subsets containing the smaller subset so we can avoid testing for invariance to these subsets unnecessarily. \\
\\
This linear example is in fact an explicit special case of the symmetries considered here. In this case the domain is $\mathcal{X} = \RR^p$, and the symmetries are translations $\Phi_{\mathbb{T}} = \{ x \mapsto x + u : u \in \RR^p \}$. If $f$ is invariant to translations along a particular axis (in the standard orthonormal basis $\{ e_i \}$) then $\beta ( X + a e_i ) = \beta X$ for all $a \in \RR$, which is equivalent to $\beta_i = 0$. This shows that variable selection in linear regression is precisely a symmetry estimation problem. There are of course many other sub-symmetries of this translation action, for example translations along the line from the origin through $(1,\dots, 1)$. This is the idea of sufficient dimension reduction \citep{li1991sliced}, which is also a special case of this symmetry estimation problem.  \\
\\
 % In figure \ref{fig:demo_alg}, we show both the equivalent lattice structure of the variable selection and translation invariance problems, and indicate that there are infinitely more subgroups than are shown in this sub-lattice. This generalises the notion of \citet{allen1974relationship} that variable selection is a limiting case of data augmentation\footnote{The finite sub-lattice of variable selection is contained within the lattice of all possible translation symmetries and so we can take a sequence of nodes in the translation lattice that approximates the variable selection nodes.}.  In this figure the sets $\RR_i$ contain all translations along the $i^{th}$ axis, so $\RR_i = \{ x \mapsto x + a e_i : a \in \RR \}$. 
So in general, the question we seek to answer is: ``Given a large set of transformations $\Phi$ (that satisfies (G1) and (G2)), what is the largest subset of transformations $\Psi$ that satisfy equation \ref{eq:inv_prop} for our particular regression $f$?'' In the special cases above, $\Phi$ is either finite (as in variable selection) or linear (as in sufficient dimension reduction), but in our case it can be infinite and non-linear. To answer this, we will be using the idea that $\Psi$ must also satisfy the group properties (G1) and (G2) (and so must be a \textit{subgroup} of $\Phi$) and use the lattice structure of these subgroups to allow for the lattice testing regime as above.

\begin{figure}[h]
	\centering
	\begin{tabular}{ccc}
	
	\begin{tikzpicture}[scale = 0.8]
	\draw[fill] (0,0) circle (0.07cm) node[right]{$\varnothing$};		
	\draw[fill] (-2,2) circle (0.07cm) node[right]{$\{ 1 \}$};
	\draw[fill] (0,2) circle (0.07cm) node[right]{$\{ 2\}$};
	\draw[fill] (2,2) circle (0.07cm) node[right]{$\{ 3 \}$};
	\draw[fill] (-2,4) circle (0.07cm) node[right]{$\{ 1, 2 \}$};
	\draw[fill] (0,4) circle (0.07cm) node[right]{$\{ 1,3 \}$};
	\draw[fill] (2,4) circle (0.07cm) node[right]{$\{ 2,3  \}$};		
	\draw[fill] (0,6) circle (0.07cm) node[right]{$\{ 1,2 ,3 \}$};
	
	\draw (0,0) -- (-2,2);
	\draw (0,0) -- (0,2);
	\draw (0,0) -- (2,2);
	
	\draw (-2,2) -- (-2, 4);
	\draw (-2,2) -- (0, 4);
	\draw (0,2) -- (-2, 4);
	\draw (0,2) -- (2, 4);
	\draw (2,2) -- (0, 4);		
	\draw (2,2) -- (2, 4);
	
	\draw (-2,4) -- (0, 6);
	\draw (0,4) -- (0, 6);
	\draw (2,4) -- (0, 6);
	\end{tikzpicture}
	
	&
	
	$\phantom{00}$
	
	&
	\begin{tikzpicture}[scale = 0.8]
	\draw[fill] (0,0) circle (0.07cm) node[right]{$I$};
	\draw[fill] (-2,2) circle (0.07cm) node[right]{$\RR_1$};
	\draw[fill] (0,2) circle (0.07cm) node[right]{$\RR_2$};
	\draw[fill] (2,2) circle (0.07cm) node[right]{$\RR_3$};
	\draw[fill] (-2,4) circle (0.07cm) node[right]{$\langle \RR_1, \RR_2 \rangle $};
	\draw[fill] (0,4) circle (0.07cm) node[right]{$\langle \RR_1, \RR_3 \rangle$};
	\draw[fill] (2,4) circle (0.07cm) node[right]{$\langle \RR_2, \RR_3 \rangle$};		
	\draw[fill] (0,6) circle (0.07cm) node[right]{$\RR^3$};
	
	\draw (0,0) -- (-2,2);
	\draw (0,0) -- (0,2);
	\draw (0,0) -- (2,2);
	
	\draw (-2,2) -- (-2, 4);
	\draw (-2,2) -- (0, 4);
	\draw (0,2) -- (-2, 4);
	\draw (0,2) -- (2, 4);
	\draw (2,2) -- (0, 4);		
	\draw (2,2) -- (2, 4);
	
	\draw (-2,4) -- (0, 6);
	\draw (0,4) -- (0, 6);
	\draw (2,4) -- (0, 6);
	
	\draw[fill, gray] (0.8,1.6) circle (0.07cm) node[below right]{$\langle (0,1,1) \rangle$};
	\draw[fill, gray] (1.5,3) circle (0.07cm) node[below right]{$\langle (0,1/2,1/2) \rangle$};
	\draw[dashed, gray] (0,0) --(2,4);
	\end{tikzpicture} \\
	
	(a) Lattice of subsets $A \subseteq \{ 1, 2, 3 \}$. & & (b) Lattice of (some) subgroups of $\RR^p$.
		
	\end{tabular}

	\caption{Variable subset selection lattice of subsets, and symmetry estimation lattice of subgroups. The symmetry here is the translation action of $\Phi_{\mathbb{T}}$ on $\RR^3$, and we have taken a finite sub-lattice of the lattice of closed subgroups where $\RR_i$ is the translation group along the $i^{th}$ orthonormal basis vector of $\RR^3$. We have also included gray and dashed edges to other symmetries such as $\{ x \mapsto x + a( e_2 + e_3) : a \in \ZZ \}$ and the larger symmetry $\{ x \mapsto x + (a/2)( e_2 + e_3 ) : a \in \ZZ \}$. }
	\label{fig:demo_alg}
\end{figure}
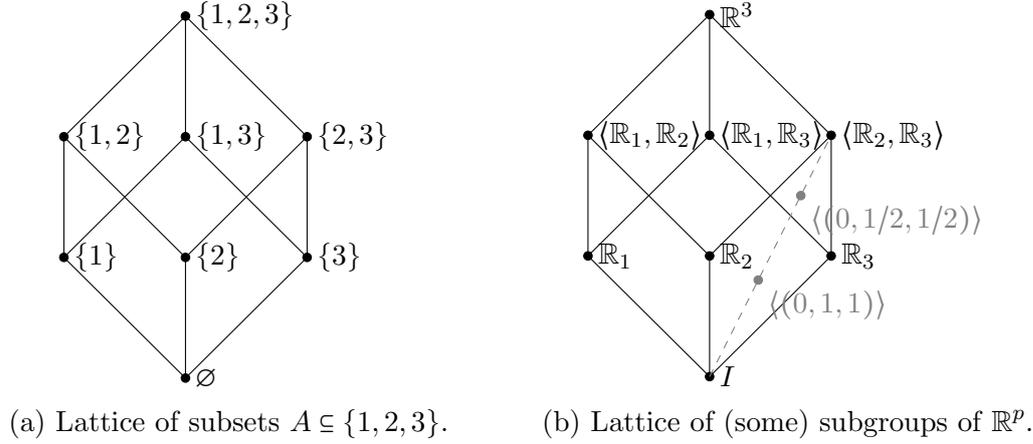

\subsection{Our Contributions}

We first present an inferential framework for the space of subgroups of some topological group $\mathcal{G}$. This is based on the ordering of the subgroups into a \textit{lattice}, an algebraic object that describes the relationships between the subgroups and is defined in section \ref{sec:inf_framework}. We then apply this framework to give a method for estimating the best symmetry to use from $\mathcal{G}$ when estimating a regression function $f$. We also show that this estimate falls within a $1 - \alpha$ confidence region in the chosen lattice for the true symmetry. This is done by testing the hypotheses \textit{$H_0^{(i)} : f$ is $G_i$-invariant} for certain subgroups $G_i \leq \mathcal{G}$ in the chosen lattice, and borrowing strength from these tests to make inferences about other subgroups. We also present two possible nonparametric tests to use in the nonparametric regression setting and the properties associated with these tests. The tests we present require mild technical assumptions and trade theoretical guarantees, computational time, and statistical power. Once a symmetry has been estimated we can then use it to adaptively smooth our estimate $\hat{f}$ non linearly, even reducing the dimension of the regression problem. \\
\\
The remainder of the paper is organised as follows. Section \ref{sec:background} covers essential background material and definitions (all in \textbf{bold face text}) particularly for groups and $G$-invariant functions. We then describe in section \ref{sec:inf_framework} how to structure the subgroups of the search group $\mathcal{G}$ into a lattice, describe confidence regions in this lattice, and give rules for how invariance behaves over this lattice. In section  \ref{sec:lattice} we then describe the estimation algorithm that searches for $G_{\max}(f, \mathcal{G} )$ by testing the invariance of $f$ to certain nodes of the lattice, and shows how we can borrow information across tests to improve the estimation. Section \ref{sec:testing} introduces two tests one can use in the estimation algorithm, each requiring different assumptions for the statistical problem. We then demonstrate the performance of the tests used, the symmetry estimator, and its use in nonparametric regression in simulation in section \ref{sec:exp}. We then apply these methods to satellite measurements of the earth's magnetic field in section \ref{sec:app}. Lastly, we conclude with a discussion of these methods in section \ref{sec:con}. All proofs can be found in the appendices, along with additional simulated and synthetic numerical results.

\section{Background}
\label{sec:background}

\subsection{Data Generating Mechanism}
\label{ssec:stats}
 Let $(\mathcal{X}, d_\mathcal{X})$ be a metric space and suppose that $\mu_X$ is a known Borel probability measure on $\mathcal{X}$. Suppose that we collect a data set $\mathcal{D} = \{ (X_i, Y_i) \}_{i =1}^n$ in $(\mathcal{X} \times \RR)^n$ where $X_i \iid \mu_X$ and with $\EE (Y_i \mid X_i = x) = f(x) \in L^2( \mathcal{X} )$ and $\EE( (Y_i - f(X_i) )^2 ) < \sigma^2 < \infty$ for all $i \in \NN$. We also assume that $\epsilon_i = Y_i - f(X_i)$ are independent for all $i$, so that the training data is independent and identically distributed (iid). In this paper we estimate the symmetries of the conditional mean function $f$, which we usually assume is a continuous \textbf{regression function} in some nonparametric function class $\mathcal{F} \subseteq L^2(\mathcal{X})$. This nonparametric regression setting will be the focus of this paper. We do not focus explicitly on the estimation of $f$, though the symmetries estimated can be used for this purpose with data augmentation, data projection, or feature averaging. \\
\\ 
  We write $\mathcal{D}_X = \{ X_i \}_{i = 1}^n$, with analogues for other random variables. We write $\hat{\mu}_X$ for the empirical distribution of $\mathcal{D}_X$. We will often consider functions classes $\mathcal{F}$ that have bounded variation, for example, the $(\beta, L)$-H\"{o}lder class on $\RR^d$ of $m = \lceil \beta \rceil - 1 $ times differentiable functions with $|f^{(m)}(x) - f^{(m)}(y)| \leq L d_\mathcal{X}( x, y )^{ \beta - m }$. These assumptions are fairly broad, as if $\mathcal{X}$ is compact then the Stone-Weierstrass theorem means that the space of continuous square integrable functions can be uniformly approximated by polynomials which are within the $(1, L)$-H\"{o}lder class for some $L$. We sometimes use $f$ as an index for $\PP_f$ and $\EE_f$ denoting the probability and expectation when the dataset is generated with a specific regression function $f$. 

\subsection{Group Theory}
\label{ssec:group_theory}

The symmetries that we focus on in this paper are best described mathematically with the language of group theory. In this section we present a quick introduction to the essential concepts needed for this methodology, and further details can be found in \citet{artin2018algebra} or any other abstract algebra textbook.  \\
\\
A \textbf{group} is set $G$ and an associative binary operation $(g, h) \mapsto gh \in G$ such that $G$ contains an \textbf{identity} $e \in G$ (such that $eg = ge = g$) and \textbf{inverses} $g^{-1}$ for each $g \in G$ such that $g^{-1} g = g g^{-1} = e$. We say that a subset $H$ of a group $G$ is a \textbf{subgroup} of $G$ if it is itself a group with the same binary operation, and write $H \leq G$. If $A \subseteq G$ then we can construct a subgroup of $G$ containing $A$ by adding all multiplications and inverses, or equivalently by taking the intersection of all subgroups containing $A$. We say that this subgroup is \textbf{generated} by $A$ and is written $\langle A \rangle$, or $\langle a_1, a_2, \dots \rangle$ when $A = \{ a_1, a_2, \dots \}$. The binary operation is associative but not necessarily commutative (i.e., $gh$ is frequently distinct from $hg$). The \textbf{order} of an element $g \in G$ is the smallest positive integer $a$ such that $g^a = e$. The \textbf{order} of a group is the number of elements, which can be infinite. Groups describe many objects considered in statistics; for example additive models for diffusion tensors (symmetric positive definite matrices) are built from their group structure in  \citet{lin2020additive}. \\
\\
A group encodes the structure of the symmetries of a regression function $f : \mathcal{X} \rightarrow \RR$ through a \textbf{group action}, a function $\cdot : G \times \mathcal{X} \rightarrow \mathcal{X}$ that obeys the rules $e \cdot x = x$ and $g \cdot ( h \cdot x ) = (gh) \cdot x$ for all $g, h \in G$ and $x \in \mathcal{X}$. These are the transformations in $\Phi$ described in the introduction. We say that an action is \textbf{faithful} if $g \cdot x = x$ for all $x$ only if $g = e$ (or equivalently, every non identity $g \in G$ has some point $x \in \mathcal{X}$ such that $g \cdot x \neq x$). Note that if $G$ acts faithfully then any subgroup of $G$ acts faithfully too. This notion is required for the identifiability of the subgroups we wish to estimate, if the action is not faithful then there are multiple subgroups that we cannot distinguish from the data. If $G$ acts on $\mathcal{X}$ and $H$ is a subgroup of $G$ then $H$ inherits an action on $\mathcal{X}$ from $G$ (i.e., $h \cdot_H x = h \cdot_G x$ as $h \in H \subseteq G$).  \\
\\
Let $G$ be a group that acts on $\mathcal{X}$. For any point $x \in \mathcal{X}$, the symmetries in $G$ map $x$ to a set of points called an \textbf{orbit}, written $[x]_G = \{ g \cdot x : g \in G \}$.  We also sometimes write $[n] = \{1, \dots, n \}$ and so keep the subscript $G$ for clarity. The set of orbits is called the \textbf{quotient space} of the action and is written $\mathcal{X} / G = \{ [x]_G : x \in \mathcal{X} \}$. These quotient spaces have a rich structure derived from the properties of the group, but it is only really used here when describing methods for estimating the function $f$ using the information of a symmetry $G$, so we omit further details here. Kendall's shape spaces \citep{dryden2016statistical} form an important example of these quotient spaces, where the orbits are the possible orientations, translations, and scalings of a set of landmark points.   \\
\\
%{\cred 
%If $G$ is a group that is also equiped with a topology such that the map $(g, h) \mapsto g^{-1} h$ is continuous (as a function from the product $G \times G \rightarrow G$), then we call such a group a \textbf{topological group}. This is a slightly stronger requirement than just a topology on the set $G$, as we require that it agrees with the group multiplication in a particular way. However it is very natural as a requirement and easily satisfied by normal groups. Every group can be made a topological group with the discrete topology, which we naturally impart to finite groups. Infinite groups discussed in this paper are all matrix groups, with the topology induced by the Frobenius norm. We say that a subset $S \subseteq G$ is a \textbf{topological generator} of $G$ if the only closed (in the topology of $G$) subgroup of $G$ containing $S$ is $G$ itself (so $G = \overline{\langle S \rangle}$). \\
%\\
%  \\
% \\
%
%}
Lastly we require notions of random variables that take values on groups. To do this, we introduce a topology to the group, so that we can construct a Borel probability space on it. We require that the topology agrees with the group operation in the sense that $(g, h) \mapsto g^{-1} h$ is continuous. We frequently use topological adjectives (e.g. compact, second countable) when referring to topological groups. We also require that the action of the group is continuous, and therefore measurable. If $G$ is compact then there exists a left (and right) shift invariant Haar measure on $G$ \citep{haar1933massbegriff} which we denote $\Gamma_G$ (or just $\Gamma$ if $G$ is clear from context). This means that $\Gamma( \{ ga : a \in A \} ) = \Gamma (A)$ for all open sets $A \subseteq G$ and $g \in G$. This measure can then be normalised to form an analogue of the uniform distribution on $G$, also sometimes written $U(G)$. In the case of directional statistics \citep{mardia2009directional}, the direction space $S^1 = \{ z \in \CC : |z| = 1 \}$ with the operation of complex multiplication forms a topological group, and the Haar measure precisely agrees the uniform distribution in the usual sense.

\subsubsection{Important Groups}
\label{sssec:imp_groups}

We now consider several important groups that will be used in the examples in this paper. All groups considered are second-countable and Hausdorff, and are almost always \textbf{matrix groups}, i.e.,  subgroups of the \textbf{ general linear group } $GL( \mathcal{X} ) = \{ T : T \text{ is an invertible linear operator } \mathcal{X} \rightarrow \mathcal{X} \}$ when $\mathcal{X}$ is a linear space such as $\RR^d$. These groups all act by applying the transformation to the points unless specified otherwise. When $\mathcal{X}$ is a $d$-dimensional vector space over $\RR$ we sometimes write $GL(d)$, and identify the linear operators in $GL(d)$ with their matrix representations $A = (T e_1, Te_2, \dots, T e_d )$ in some chosen basis $\{ e_i \}_{i = 1}^d$.  \\
\\
The \textbf{special linear group} written $SL(d)$ is the subgroup of $GL(d)$ consisting of all $A$ with $\det ( A) = 1$. The \textbf{ orthogonal group } written $O(d)$ consists of all $A$ such that $A^T A = I$. The \textbf{special orthogonal group} written $SO(d)$ consists of the intersection of the special linear and orthogonal groups on $\mathcal{X}$, and is the prototypical example of a compact group in this paper. It has appeared in statistics in nearly all of the cited work so far \citep{kim1998deconvolution, mardia2009directional, kendall1984shape}.  The group $SO(\RR^2)$ is homeomorphic (i.e., topologically equivalent) to the complex unit circle $S^1 = \{ z \in \CC : |z| = 1 \}$, and in fact this circle forms a group under complex multiplication that is also algebraically equivalent\footnote{i.e., isomorphic, which means that there is a bijection $\psi: SO(\RR^2) \rightarrow S^1$ with $\psi(R_1 R_2) = \psi(R_1) \phi(R_2)$ for all $R_1, R_2 \in SO(\RR^2)$.} to $SO(\RR^2)$. We frequently use these notations interchangeably. There are infinitely many subgroups of $SO(\RR^3)$ that are equivalent to $S^1$, corresponding to rotations of the hyperplane orthogonal to a unit vector $u$. To distinguish these as subgroups of $SO(\RR^3)$ we write $S^1_u$.    \\
\\
Another important set of transformations are those that preserve specific structures. The first example is the \textbf{dihedral group} of the square embedded in $\RR^2$, given by $D_4 =  \{ I,  R_{\pi/2}, R_{\pi}, R_{3 \pi / 2}, R_h , R_v, R_/, R_\backslash \} $ where:
	\begin{align*}
		I &= \begin{bmatrix} 1 & 0 \\ 0 & 1 \end{bmatrix}	&
		R_{\pi/2} &= \begin{bmatrix} 0 & - 1 \\ 1 & 0 \end{bmatrix}	&
		R_{\pi} &= \begin{bmatrix} -1 & 0 \\ 0 & -1 \end{bmatrix} &
		R_{3 \pi / 2} &= \begin{bmatrix} 0 & 1 \\ -1 & 0 \end{bmatrix} \\
		R_h &= \begin{bmatrix} -1 & 0 \\ 0 & 1 \end{bmatrix}	&
		R_v &= \begin{bmatrix} 1 & 0 \\ 0 & -1 \end{bmatrix}	&
		R_/ &= \begin{bmatrix} 0 & 1 \\ 1 & 0 \end{bmatrix} &
		R_\backslash &= \begin{bmatrix} 0 & -1 \\ -1 & 0 \end{bmatrix}	
	\end{align*}

This group of matrices has the property that if $A = \{ a_1, a_2, a_3, a_4 \} \subseteq \RR^2$ form the vertices of a square centred at the origin then $g \cdot A = \{ g \cdot a_i : a_i \in A \} = A$, thus preserving the square. This group is a subgroup of the orthogonal group on $\RR^2$, and has clear actions in image analysis.

\begin{eg}
	Consider the MNIST data set of handwritten digits \cite{lecun1998gradient}. The symmetries of $D_4$ act on these images, for example $R_h$ would take an image of a $3$ to an $\mathcal{E}$, but would preserve an $8$. In contrast $R_\backslash$ would take $3$ to an $\omega$, and $8$ to $\infty$, but would leave $O$ as $O$ (perhaps depending on the roundness of the $O$).  
\end{eg}

\subsection{$G$-invariant Functions}
\label{ssec:invs}

Let $f \in L^2(\mathcal{X}, \mu_X)$ be a function and let $G$ be any group acting measurably on $\mathcal{X}$. We say that $f$ is \textbf{strongly $G$-invariant} if $f(g \cdot x) = f( x) $ for all $g \in G$ and $x \in \mathcal{X}$. If $\| f - f_0 \|_2 = 0$ for some strongly $G$-invariant function $f_0$, we say that $f$ is \textbf{$G$-invariant}. Since we will primarily be restricting ourselves to the case where $f$ is continuous (as with the H\"{o}lder classes above) and when the support of $\mu_X$ is dense in $\mathcal{X}$, we can usually just consider the strongly $G$-invariant function in the $L^2$ equivalence class. If the action of $G$ and the function $f$ are both continuous and $G$ and $\mathcal{X}$ are separable spaces then $G$-invariance can be characterised by $\PP( f(g \cdot X) = f(X) ) = 1$ where $X \sim \mu_X$ and $g \sim \Gamma_G$.  \\
\\
 We say that a measure $\mu$ on $\mathcal{X}$ is \textbf{$G$-invariant} if $\mu( g \cdot A ) = \mu(A)$ for all $g \in G$ and measurable $A$, noting that the measurability of the action guarantees that $g \cdot A$ is measurable. The $G$-invariant square integrable functions form a linear subspace of $L^2(\mathcal{X}, \mu_X)$ (as invariance is trivially preserved by point wise addition and scalar multiplication). In \cite{elesedy2021provablyLin} they show that if $G$ is compact, Hausdorff, and second countable then this subspace is closed and $S_G$ is an orthogonal projection operator.

%\begin{rem}
%	\label{rem:group_just}
%	If the definition of strong $G$-invariance is replaced with a set of bijective transformations $\Phi = \{ \phi : \mathcal{X} \rightarrow \mathcal{X} \}$ with the property that $f \circ \phi = f$ for all $\phi \in \Phi$, we can naturally extend $\Phi$ to form a group $G = \langle \Phi \rangle$ that acts on $\mathcal{X}$ by applying the transformation. To see this, note that
%	\begin{equation}
%		f \circ (\phi \circ \psi)(x) = (f \circ \phi) ( \psi(x) ) = f \circ \psi (x) = f(x) 
%	\end{equation} 
%	and
%	\begin{equation}
%		f \circ \phi^{-1} (x) = f \circ \phi ( \phi^{-1} (x) ) = f(x)
%	\end{equation}
% 	for all $\phi, \psi \in \Phi$ and $x \in \mathcal{X}$. This means that if $f$ is $\Phi$-invariant (in the sense here) then $f$ is also $\langle \Phi \rangle$-invariant in the group sense, and so we can focus our study in the group context. This opens up a range of tools available from group theory that would otherwise be inaccessible for this problem. 
%\end{rem}

\section{Inference for Subgroups}
\label{sec:inf_framework}

We now describe the structure used to do inference on the subgroups of some group $\mathcal{G}$. The first step is to structure the subgroups into a lattice. This is a well known mathematical result, see \citet{schmidt2011subgroup,sankappanavar1981course}. We then show that the set of closed subgroups of $\mathcal{G}$ also form a lattice, which is not always a sub-lattice of the lattice of all subgroups. With this structure we are then able to ask the first statistical questions: how can one do estimation and inference on a lattice of subgroups? We also show that we can reduce the lattice to the information stored in a particular set of subgroups, which is used to minimise the number of tests conducted for both statistical and computational benefit.

\subsection{Lattices}
\label{ssec:gen_lattices}

A \textbf{lattice} $L$ is a partially ordered set with unique supremums and infimums. Given two elements $A, B \in L$ the unique supremum is called their \textbf{join} and is written $A \vee B$, and their unique infimum is written $A \wedge B$. Joins and meets are associative, commutative, idempotent binary operations that satisfy the \textbf{absorption laws} $a \wedge (a \vee b) = a$ and $a \vee (a \wedge b) = a$.  We say that a subset of a lattice $L$ is a \textbf{sub-lattice} of $L$ if it is closed under the joins and meets of the original lattice. Given some set $A \subseteq L$, we can take $\llangle A \rrangle$ as the smallest sub-lattice of $L$ that contains $A$ and say that $A$ \textbf{generates} $\llangle A \rrangle$. We can arrange a lattice $L$ into a \textbf{Hasse diagram}, a directed graph with nodes given by the elements $A, B \in L$ and edges from $A$ to $B$ if $B \leq A$, and there is no $C \in L$ with $B < C < A$. We typically omit the arrows on the edges in this paper as they all point down. 

\begin{eg}
In the context of variable selection we have a lattice of subsets of the variable indices, where the index subsets $A, B \subseteq \{ 1 , \dots, d \}$ have joins $A \vee B = A \cup B$ and meets $A \wedge B = A \cap B$.	
\end{eg}

\subsection{Subgroup Lattices}
\label{ssec:lattices}
The subgroups of a group $\mathcal{G}$ form a lattice with the order given by subgrouping \citep{schmidt2011subgroup,sankappanavar1981course}. We write $L(\mathcal{G})$ for this lattice of subgroups. In this lattice the join is given by the groups generated by unions, i.e., $A \vee B = \langle A, B \rangle$ and meet by intersections, i.e., $A \wedge B = A \cap B$, for $A, B \in L$.  Note that in the group context we can't simply take unions as these are not guaranteed to be groups, but the idea is essentially taking a union and adding in all the elements needed to form a group.

\begin{eg}
\label{eg:dihedral}
Consider the \textbf{dihedral group} of symmetries of the square $D_4$ (see section \ref{sssec:imp_groups}). This group has ten subgroups, including itself and the trivial group $I$. These are shown in the table of figure \ref{fig:hasse_D4} These can be ordered into the lattice in Figure \ref{fig:hasse_D4}, represented as a Hasse diagram.

	\begin{figure}[h]
	\centering
	\small
	
	\begin{tabular}{p{5.2cm} m{0.4cm} m{7cm} }
		
	\begin{tabular}{p{1.5cm} p{3cm}} 
	Subgroup  & Elements \\ \hline \hline
	$D_4$ & $\{ I, R_{\pi / 2}, R_{\pi}, R_{3\pi / 2}, $ \\
		  & $\phantom{\{} R_h, R_v, R_/, R_\backslash \}$ \\
	$\langle R_h, R_\pi \rangle$ & $\{ I, R_h, R_v, R_\pi \}$ \\
	$\langle R_{\pi/2} \rangle$ & $\{ I, R_{\pi / 2}, R_{\pi}, R_{3 \pi / 2 } \}$ \\
	$\langle R_h, R_\pi \rangle$ & $\{ I, R_h, R_v, R_\pi \}$ \\
	$\langle R_h \rangle$ & $\{ I, R_h \}$ \\
	$\langle R_v \rangle$ & $\{ I, R_v \}$ \\
	$\langle R_\pi \rangle$ & $\{ I, R_\pi \}$ \\
	$\langle R_/ \rangle$ & $\{ I, R_/ \}$ \\
	$\langle R_\backslash \rangle$ & $\{ I, R_\backslash \}$ \\
	$I$ & $\{ I \}$
	\end{tabular}
	
	&
	
	&
	\begin{tikzpicture}[scale = 0.7]
		\draw[fill] (0,0) circle (0.07cm) node[below]{$I$};
		\draw[fill] (-4,2) circle (0.07cm) node[below left]{$\langle R_h \rangle  $};
		\draw[fill] (-2,2) circle (0.07cm) node[below left]{$\langle R_v \rangle$};
		\draw[fill] (0,2) circle (0.07cm) node[below right]{$\langle R_{\pi} \rangle $};
		\draw[fill] (2,2) circle (0.07cm) node[below right]{$ \langle R_{/} \rangle  $};
		\draw[fill] (4,2) circle (0.07cm) node[below right]{$\langle R_{\backslash} \rangle  $};
		\draw[fill] (-3,4) circle (0.07cm) node[above left]{$\langle R_h, R_\pi \rangle$};
		\draw[fill] (0,4) circle (0.07cm) node[above right]{$\langle R_{\pi/2} \rangle $};
		\draw[fill] (3,4) circle (0.07cm) node[above right]{$\langle R_/, R_\pi \rangle$};
		\draw[fill] (0,6) circle (0.07cm) node[above]{$ D_4 $};
		\draw (0,0) -- (-4,2) -- (-3,4);
		\draw (0,0) -- (-2,2) -- (-3,4) -- (0,6);
		\draw (0,0) -- (0,2) -- (-3,4);
		\draw (0,0) -- (2,2) -- (3,4) -- (0,6);
		\draw (0,0) -- (4,2) -- (3,4);
		\draw (0,2) -- (0,4) -- (0,6);
		\draw (0,2) -- (3,4);
	\end{tikzpicture}
	
	\end{tabular}
	
	\caption{The list of subgroups and their elements of the dihedral group $D_4$ (see section \ref{sssec:imp_groups}), and the Hasse diagram of the subgroup lattice. Each dot represents a subgroup, with edge from $A$ down to $B$ if $A \leq B$ and there is no other group that can be ordered between them.}
	\label{fig:hasse_D4}
	\end{figure}
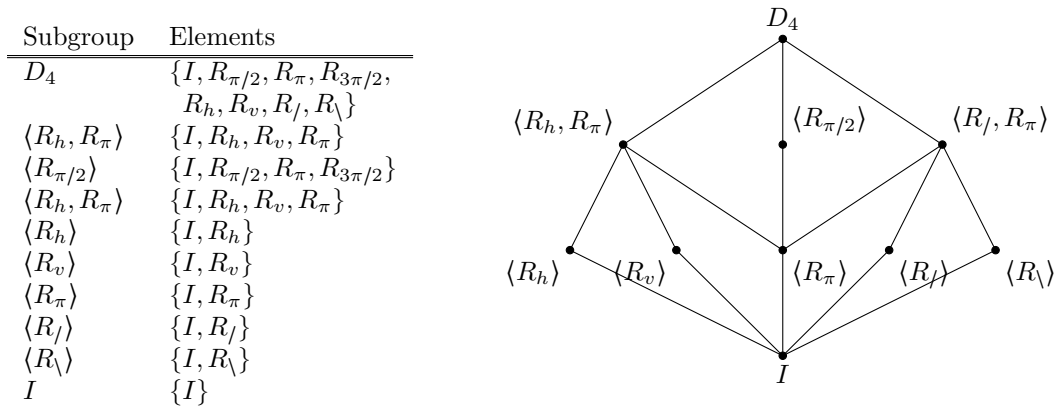
\end{eg}

One important subset of $L(\mathcal{G})$ is the collection of \textit{closed} subgroups (closed in the topology of $\mathcal{G}$), which we write $K(\mathcal{G})$ here. These do not necessarily form a sub-lattice of $L(\mathcal{G})$ for arbitrary $\mathcal{G}$ as we can generate non closed subgroups with the original join operator $\langle G, H \rangle$ (consider two copies of high order finite rotations around distinct axes generating a proper dense subgroup of $SO(3)$). It does form a lattice as it still has unique supremums (just distinct ones from the supremum in $L$) and infimums, where the join is the topological closure of the group generated from $G$ and $H$, i.e., $\overline{ \langle G, H \rangle}$. This is established in the following proposition. In some cases (e.g. finite $\mathcal{G}$) this is equivalent to the original join, so the lattice in example \ref{eg:dihedral} is both $L(D_4)$ and $K(D_4)$. 

\begin{prop}
	\label{prop:closed_sublattice}
	Let $\mathcal{G}$ be a topological group. The closed subgroups $K(\mathcal{G})$ form a lattice with partial ordering given by subgrouping, meets given by intersection, and joins given by the topological closure of the group generated by the two subgroups.
\end{prop}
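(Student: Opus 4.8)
The plan is to check directly that $(K(\mathcal{G}),\subseteq)$ is a partially ordered set in which every pair of elements has a greatest lower bound and a least upper bound; a poset with this property is by definition a lattice, and the algebraic identities (idempotency, commutativity, associativity, absorption) of $\wedge$ and $\vee$ then follow automatically. Inclusion is evidently a partial order on $K(\mathcal{G})$, so all the work is in producing, for two closed subgroups $G,H\le\mathcal{G}$, their meet and join, and identifying these with $G\cap H$ and $\overline{\langle G,H\rangle}$.

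The meet is immediate. The set $G\cap H$ is a subgroup (an intersection of subgroups) and closed (an intersection of closed sets), so $G\cap H\in K(\mathcal{G})$; it is contained in both $G$ and $H$, and any $K\in K(\mathcal{G})$ with $K\subseteq G$ and $K\subseteq H$ satisfies $K\subseteq G\cap H$. Hence $G\cap H$ is the greatest lower bound of $G$ and $H$ in $K(\mathcal{G})$.

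For the join, the one substantive point — and the step I expect to be the main obstacle — is that the closure of a subgroup of a topological group is again a subgroup; this is where continuity of the group operations enters. Let $S\le\mathcal{G}$. Inversion is a self-homeomorphism of $\mathcal{G}$, so $\overline{S}^{\,-1}=\overline{S^{-1}}=\overline{S}$. For $a\in S$, left translation $L_a$ is a self-homeomorphism, so $a\overline{S}=\overline{aS}=\overline{S}$, i.e. $S\,\overline{S}=\overline{S}$; then for $b\in\overline{S}$, right translation $R_b$ is a self-homeomorphism, so $\overline{S}\,b=\overline{Sb}\subseteq\overline{\overline{S}}=\overline{S}$, whence $\overline{S}\cdot\overline{S}\subseteq\overline{S}$. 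Since $e\in S\subseteq\overline{S}$, the set $\overline{S}$ is a subgroup, and it is closed by construction. Applying this with $S=\langle G,H\rangle$ shows $\overline{\langle G,H\rangle}\in K(\mathcal{G})$.

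Finally I verify this is the least upper bound. We have $G,H\subseteq\langle G,H\rangle\subseteq\overline{\langle G,H\rangle}$, so it is an upper bound. Conversely, if $K\in K(\mathcal{G})$ contains both $G$ and $H$, then $K$ contains the smallest subgroup containing $G\cup H$, namely $\langle G,H\rangle$, and since $K$ is closed it contains $\overline{\langle G,H\rangle}$. Thus $\overline{\langle G,H\rangle}$ is the join, and $K(\mathcal{G})$ is a lattice. (The same argument with arbitrary intersections, and with the closure of the subgroup generated by an arbitrary family, shows $K(\mathcal{G})$ is in fact a complete lattice; and when $\mathcal{G}$ is discrete — in particular finite — every subgroup is already closed, so $\overline{\langle G,H\rangle}=\langle G,H\rangle$ and we recover the ordinary join on $L(\mathcal{G})$.)
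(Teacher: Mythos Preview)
Your proof is correct and follows essentially the same approach as the paper's: verify that $G\cap H$ is the infimum and $\overline{\langle G,H\rangle}$ is the supremum in $K(\mathcal{G})$ under inclusion. The paper's argument is terser and tacitly assumes that the closure of a subgroup is again a subgroup, whereas you supply this detail explicitly via the homeomorphism properties of translations and inversion; otherwise the two proofs coincide.
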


%\begin{eg}
%For the finite group $D_4$ (as with all finite groups with the discrete topology), all subgroups are closed and so $K(D_4)$ is the same as $L(D_4)$.  
%\end{eg}

\begin{eg}
\label{eg:circle}
Consider the circle group $S^1 = \{ z \in \CC : |z| = 1 \}$. The finite subgroups are those of rational rotations, and are of the form $ C_a = \langle \exp( 2 \pi i / a ) \rangle $ of order $a$. These have subgroups corresponding to the factors of $a$, and so the lowest parts of the subgroup lattice of $S^1$ are the groups of prime order. Above these you have groups with two (counting with repetition) prime factors, and so on. There are also subgroups of countable infinite order, for example $\langle \exp( i ) \rangle$, and one subgroup of uncountable order - itself. Importantly, all of these infinite subgroups are all dense in $S^1$ (under the subset topology from $\CC$). Thus the key difference between $L(S^1)$ and $K(S^1)$ is that these dense subgroups are missing.
\end{eg}

\subsection{Confidence Regions in Lattices}

As statistical objects, groups typically have relationships characterised by inclusions - if a symmetry $G$ applies in a particular situation then any subgroup $H \leq G$ surely must also apply. Therefore we are typically interested in confidence sets that contain all of their subgroups. We formalise this below, and give a measure of the optimality of these confidence sets.  \\

%Of course, the information in the region can be stored entirely in its maximal objects: the $G \in R$ such that there are no $H \in R$ with $G \leq H$. If we define $\max R$ as the set of these maximal objects, then $R$ can be reconstructed by taking $R = \{ H \leq G : G \in \max R \}$. \\
%\\
%All confidence regions in this paper will be of the form of $R$; i.e., containing the subgroups. 

Suppose we wish to estimate a group object $G_0$ within some lattice $L$, and that the lattice has a measure $\nu$. We say that a region $\hat{R} \subseteq L$ is a $(1-\alpha)$-\textbf{confidence region} for $G_0$ if it is a measurable function of the data, contains all subgroups of all of its elements, and 
\begin{equation}
	\PP_f( G_0 \in \hat{R} ) \geq 1 - \alpha.
\end{equation}
The ideal confidence region would be the set of subgroups of $G$, which we denote $A$, so we judge the optimality of the region $\hat{R}$ by the measure of the groups that are different to $A$:
\begin{align}
%	\mathcal{E}_1( \hat{R} ) &= | \nu( \hat{R} ) {\cred - \nu( A )} | \\
%	\mathcal{E}_2( \hat{R} ) &= \nu( \{ G \in \max \hat{R}  : G \neq G_0 \} ) \\
	\mathcal{E}( \hat{R} ) = \nu( \{ G \in \hat{R} : G \notin A \} \cup \{ G \in A : G \notin \hat{R} \} ) = \nu\big( ( \hat{R} \cup A ) \setminus ( \hat{R} \cap A ) \big)
\end{align}
where we assume that each of these is a measurable set. In our method latter each lattice $L$ will be finite and $\nu$ will be the discrete counting measure, so this will be true. 
%\\
%{\cred 
%Another possible option is $\mathcal{E}'( \hat{R} ) = \nu( \{ G \in \max \hat{R}  : G \neq G_0 \} )$, which only focuses on the maximal elements in the region $\hat{R}$. This is sometimes more readily computable than $\mathcal{E}$ to within $\pm 1$ even without knowledge of $G_0$. However, it ignores the possibility of $G_0$ being very low in the lattice, and so taking $\hat{R} = L$ for any dataset gives $\mathcal{E}'(\hat{R}) = 1$. 
%}

\begin{eg}
Consider the group $D_4$ and its subgroup $G_0 = \langle R_h, R_{\pi} \rangle$. In figure \ref{fig:d4_cr_eg}, we show three examples and non-examples of possible samples of confidence regions for $G_0$, and show the excess over $A$. 

\begin{figure}[h]
	\centering
	\small
	\begin{tabular}{m{4.5cm} m{4.5cm} m{4.5cm}}
	\centering	
	\begin{tikzpicture}[scale = 0.5]
		\draw[fill, blue] (0,0) circle (0.07cm) node[below]{$I$};
		\draw[fill, blue] (-4,2) circle (0.07cm);
		\draw[fill, blue] (-2,2) circle (0.07cm);
		\draw[fill, blue] (0,2) circle (0.07cm);
		\draw[fill] (2,2) circle (0.07cm);
		\draw[fill] (4,2) circle (0.07cm);
		\draw[fill, blue] (-3,4) circle (0.07cm) node[above]{$\langle R_h, R_\pi \rangle$};
		\draw[fill] (0,4) circle (0.07cm);
		\draw[fill] (3,4) circle (0.07cm);
		\draw[fill] (0,6) circle (0.07cm) node[above]{$ D_4 $};
		\draw[blue] (0,0) -- (-4,2) -- (-3,4);
		\draw[blue] (0,0) -- (-2,2) -- (-3,4);
		\draw (-3,4) -- (0,6);
		\draw[blue] (0,0) -- (0,2) -- (-3,4);
		\draw (0,0) -- (2,2) -- (3,4) -- (0,6);
		\draw (0,0) -- (4,2) -- (3,4);
		\draw (0,2) -- (0,4) -- (0,6);
		\draw (0,2) -- (3,4);
	\end{tikzpicture}
	&
	\centering
	\begin{tikzpicture}[scale = 0.5]
		\draw[fill, blue] (0,0) circle (0.07cm) node[below]{$I$};
		\draw[fill, blue] (-4,2) circle (0.07cm);
		\draw[fill, blue] (-2,2) circle (0.07cm);
		\draw[fill, blue] (0,2) circle (0.07cm);
		\draw[fill] (2,2) circle (0.07cm);
		\draw[fill] (4,2) circle (0.07cm);
		\draw[fill, blue] (-3,4) circle (0.07cm) node[above]{$\langle R_h, R_\pi \rangle$};
		\draw[fill, red] (0,4) circle (0.07cm);
		\draw[fill] (3,4) circle (0.07cm);
		\draw[fill] (0,6) circle (0.07cm) node[above]{$ D_4 $};
		\draw[blue] (0,0) -- (-4,2) -- (-3,4);
		\draw[blue] (0,0) -- (-2,2) -- (-3,4);
		\draw (-3,4) -- (0,6);
		\draw[blue] (0,0) -- (0,2) -- (-3,4);
		\draw (0,0) -- (2,2) -- (3,4) -- (0,6);
		\draw (0,0) -- (4,2) -- (3,4);
		\draw[red] (0,2) -- (0,4);
		\draw (0,4) -- (0,6);
		\draw (0,2) -- (3,4);
	\end{tikzpicture}
	&
	\begin{tikzpicture}[scale = 0.5]
		\draw[fill, blue] (0,0) circle (0.07cm) node[below]{$I$};
		\draw[fill, blue] (-4,2) circle (0.07cm);
		\draw[fill, blue] (-2,2) circle (0.07cm);
		\draw[fill, blue] (0,2) circle (0.07cm);
		\draw[fill] (2,2) circle (0.07cm);
		\draw[fill] (4,2) circle (0.07cm);
		\draw[fill, blue] (-3,4) circle (0.07cm) node[above]{$\langle R_h, R_\pi \rangle$};
		\draw[fill, red] (0,4) circle (0.07cm);
		\draw[fill] (3,4) circle (0.07cm);
		\draw[fill, red] (0,6) circle (0.07cm) node[above]{$ D_4 $};
		\draw[blue] (0,0) -- (-4,2) -- (-3,4);
		\draw[blue] (0,0) -- (-2,2) -- (-3,4);
		\draw[red] (-3,4) -- (0,6);
		\draw[blue] (0,0) -- (0,2) -- (-3,4);
		\draw (0,0) -- (2,2) -- (3,4) -- (0,6);
		\draw (0,0) -- (4,2) -- (3,4);
		\draw[red] (0,2) -- (0,4);
		\draw[red] (0,4) -- (0,6);
		\draw (0,2) -- (3,4);
	\end{tikzpicture} \\
	\centering (a) $\mathcal{E}( \hat{R}_1 ) = 0$ &
	\centering (b) $\mathcal{E}( \hat{R}_2 ) = 1$ &
	\centering (c) Not a confidence region 
	\end{tabular}
	
	\caption{Two samples of confidence region for the subgroup $G_0 =  \langle R_h, R_{\pi} \rangle$ of $D_4$ shown in blue with excess in red. Also shown is an example of a subset that is not a confidence region, which fails because it does not include all of the subgroups of its elements. The errors when $\nu$ is the counting measure are given underneath each picture.}
	\label{fig:d4_cr_eg}
\end{figure}
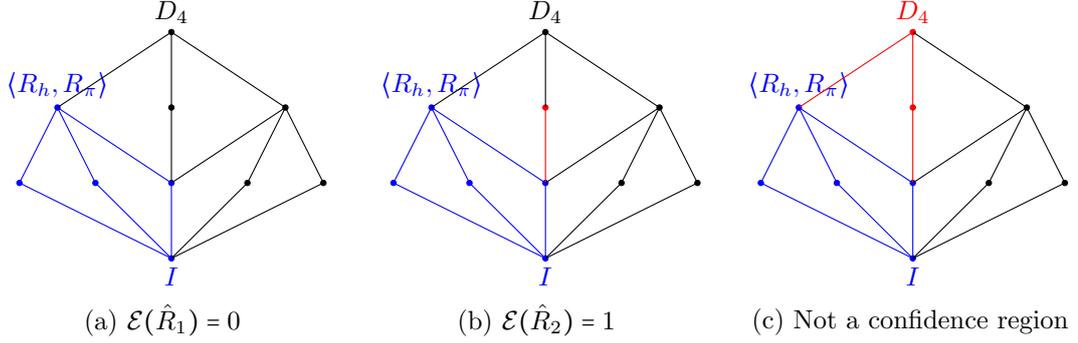
\end{eg}

\subsection{Maximal Subgroups in Regression}

We now define the object that we seek to estimate in section \ref{sec:lattice}. Suppose $f: \mathcal{X} \rightarrow \RR$ is a regression function and $\mathcal{G}$ acts measurably on $\mathcal{X}$. We say that a subgroup $G \leq \mathcal{G}$ is a \textbf{maximal invariant subgroup} for $f$ if: (1) $f$ is $G$-invariant; and (2) if $f$ is $H$ invariant then $H \leq G$. \\
\\
To prove the existence and uniqueness of such maximal invariant subgroups, we first characterise how invariance behaves over a lattice. In essence, invariance is preserved by taking joins and meets. It is also preserved by the meet of an invariant group and a non-invariant group. 

\begin{prop}[Rules for Invariance]
	\label{prop:rules}
	Suppose that $\mathcal{G}$ is a group acting measurably on $\mathcal{X}$, and let $f : \mathcal{X} \rightarrow \RR$ be a regression function. Suppose that $G, G'$ and $H$ are subgroups of $\mathcal{G}$. Then:
	\begin{enumerate}
	\item If $f$ is $G$-invariant, then $f$ is $H$-invariant for any $H \leq G$;
	\item If $f$ is not $G$-invariant , the $f$ is not $H$-invariant for any $H \geq G$;
	\item If $f$ is both $G$- and $H$-invariant, then $f$ is $\langle G, H \rangle$-invariant; and
	\item If $f$ is $G$-invariant but not $H$-invariant, then $f$ is not $G'$ invariant for any $G' \leq \mathcal{G}$ where $\langle G, H \rangle = \langle G, G' \rangle$.
\end{enumerate}
Moreover, if $\mathcal{G}$ is a closed group acting continuously and $f$ is continuous then the groups $\langle G, H \rangle$ and $\langle G, G' \rangle$ can be replaced by their topological closures above.
\end{prop}

Using these rules we can prove the following result that guarantees the existence and uniqueness of a maximal invariant subgroup, and we can utilise them in the estimation algorithm in section \ref{sec:lattice}.

\begin{prop}
\label{prop:max_subgroup}
For any group $\mathcal{G}$ acting on $\mathcal{X}$, every function $f : \mathcal{X} \rightarrow \RR$ has a unique maximal invariant subgroup $G_{\max}(f, \mathcal{G})$. If $f$ is continuous and $\mathcal{G}$ acts continuously on $\mathcal{X}$ then $G_{\max}(f, \mathcal{G})$ is closed.  
\end{prop}

Unlike the variable selection problem, the lattice $K( \mathcal{G} )$ is often infinite both breadth-wise and depth-wise. Therefore it is important for us to consider finite portions of this lattice in our estimation method. Given a chosen sub-lattice $K$ of $K(\mathcal{G})$, the following proposition shows that there is a well defined maximal object that we can look to estimate from the data.

\begin{prop}
	\label{prop:sub_lattice_restriction}
	Suppose that $\mathcal{G}$ acts faithfully and continuously on $\mathcal{X}$, and that $f$ is continuous. Let $K$ be a finite sub-lattice of $K(\mathcal{G})$ that contains the trivial group $I$. Then there exists a unique node $G_{\max}(f, K )$ in $K$ that $f$ is invariant to, and for which $H \leq G_{\max}(f, K )$ for all $H \in K$ where $f$ is $H$-invariant.
\end{prop}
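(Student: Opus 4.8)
The plan is to produce $G_{\max}(f,K)$ as the join, taken inside $K$, of every node to which $f$ is invariant. Set $\mathcal{S} = \{ H \in K : f \text{ is } H\text{-invariant} \}$. This is non-empty, since the trivial subgroup sits at the bottom of the sub-lattices we consider and $f$ is (vacuously) invariant to it. Since $K$ is finite, it is then enough to show that $\mathcal{S}$ is closed under the binary join of $K$: granting this, $G_{\max}(f,K) := \bigvee_{H \in \mathcal{S}} H$ is an iterated binary join of elements of $\mathcal{S}$, hence lies in $\mathcal{S}$; by construction it dominates every element of $\mathcal{S}$; and it is the unique node of $K$ enjoying both properties, because any other such node lies below it (being in $\mathcal{S}$) and above it (being an upper bound for $\mathcal{S}$), so the two coincide by antisymmetry.

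The heart of the argument is the claim that if $f$ is $G$-invariant and $H$-invariant then $f$ is invariant to $G \vee H$, which by Proposition \ref{prop:closed_sublattice} and the fact that $K$ is a sub-lattice is $\overline{\langle G, H \rangle}$. I would prove this in two stages. First, $f$ is $\langle G, H \rangle$-invariant: each element of $\langle G, H \rangle$ is a finite product $s_1 \cdots s_k$ with $s_i \in G \cup H$, and a short induction composing the single-generator invariances gives $f(s_1 \cdots s_k \cdot X) = f(X)$ almost surely. Second, $f$ is invariant to the closure $\overline{\langle G, H \rangle}$: the set $\{ g \in \mathcal{G} : \PP(f(g \cdot X) = f(X)) = 1 \}$ is a subgroup, and it is topologically closed, so it contains $\overline{\langle G, H \rangle}$. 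Closedness is where continuity of the action enters — it makes the map sending $g$ to the $L^2(\mu_X)$-element $x \mapsto f(g \cdot x)$ continuous, so the set above is the stabiliser of $f$ under the induced $\mathcal{G}$-action on $L^2(\mu_X)$, and stabilisers of points under continuous actions on Hausdorff spaces are closed. (Faithfulness guarantees that distinct nodes of $K$ act distinctly, so $G_{\max}(f,K)$ records a genuine symmetry and not an artefact of a kernel.)

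For the final inequality, note that $G_{\max}(f,K)$ is in particular a closed subgroup of $\mathcal{G}$ to which $f$ is invariant. The same two ingredients — invariance passes through finite products of generators and the set of fixing elements is closed — applied to all of $K(\mathcal{G})$ show that $\overline{\langle H : H \le \mathcal{G} \text{ closed and } f \text{ is } H\text{-invariant} \rangle}$ is a closed subgroup of $\mathcal{G}$ to which $f$ is invariant and which contains every such subgroup; that subgroup is $G_{\max}(f,\mathcal{G})$. Since $G_{\max}(f,K)$ is one of the subgroups appearing in the generating set, $G_{\max}(f,K) \le G_{\max}(f,\mathcal{G})$.

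The step I expect to be the real obstacle is the closure part of the key claim — verifying that $\{ g : \PP(f(g\cdot X) = f(X)) = 1 \}$ is closed in $\mathcal{G}$ — together with the companion point in the first stage, that the single-generator invariances may be composed at all. Both are smooth under mild regularity (for instance when $\mu_X$ has full support and the members of $\mathcal{F}$ admit continuous versions, so that the a.s. statements become pointwise ones and continuity of the action does everything), but in the bare $L^2$ formulation one must be careful that $g \mapsto (x \mapsto f(g \cdot x))$ is honestly $L^2(\mu_X)$-continuous and that the relevant $\mu_X$-null sets are carried to null sets — properties that hold when $\mu_X$ is $\mathcal{G}$-invariant (or quasi-invariant). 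The remaining lattice bookkeeping — that a non-empty finite join-closed subset contains the join of all its elements, and that the join in the sub-lattice $K$ agrees with the join in $K(\mathcal{G})$ — is routine.
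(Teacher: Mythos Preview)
Your proposal is correct and follows essentially the same route as the paper: both define $G_{\max}(f,K)$ as the join, inside $K$, of all nodes to which $f$ is invariant, and both rely on the two ingredients you isolate --- invariance passes to $\langle G, H \rangle$ (the paper's Lemma~\ref{lem:generate_inv}) and thence to the closure $\overline{\langle G, H \rangle}$ (the paper's Lemma~\ref{lem:closure_invariance}). The paper's proof is terser and simply cites those lemmas, while you spell out the mechanism and are more candid about the regularity needed at the closure step; indeed the paper's Lemma~\ref{lem:closure_invariance} assumes $f$ continuous, an assumption not restated in the proposition itself, so your caution there is well placed rather than excessive.
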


\subsection{Lattice Bases}
\label{ssec:lattice_bases}
Lastly we require a structure theorem for lattices, giving a condition on a set of subgroups that simplifies the structure of the lattice we will work over. Importantly, it allows us to extrapolate from $\ell$ tests to a lattice of size up to $2^\ell$. This is a huge benefit of considering the group structure of the transformations. \\
\\
 We say that a set of closed subgroups $\{ G_i \}_{i = 1}^\ell$ of $\mathcal{G}$ satisfies the \textbf{intersection property} (I) if $G_i \cap G_j \in \{ I, G_i, G_j \}$ whenever $i \neq j$. Any set of subgroups satisfying the intersection property will be called a \textbf{lattice base}.

\begin{prop}	
	\label{prop:lattice_as_vees}
	 Suppose the set of subgroups $\{ G_i \}_{i = 1}^\ell$ of $\mathcal{G}$ satisfies the intersection property (I). Then the smallest lattice (in the sub lattice ordering) containing $G_1, \dots, G_\ell$ consists only of groups of the form $G_{i_1} \vee \cdots \vee G_{i_m}$ for some distinct $i_j \in [\ell]$, as well as the empty product $I$. 
\end{prop}

This result allows us to identify the lattice $K = \llangle \{  G_i \}_{ i = 1 }^{\ell} \rrangle$ with subsets of $\{ 1, \dots \ell \}$, but where some subsets represent the same group in the lattice. 

\begin{eg}
	In the variable selection lattice of figure \ref{fig:demo_alg}a, a lattice base is the singletons set $\{ \{ i \} : i \leq p \}$. Equivalently for the translation group framing in figure \ref{fig:demo_alg}b, a lattice base is the collection $\{ \RR_i : i \leq p \}$. 
\end{eg}

\begin{eg}
	Consider the group $D_4$. One lattice base that generates the entire lattice $L(D_4)$ is given by:
	\begin{equation}
		B = \big\{ \langle R_h \rangle, \langle R_v \rangle, \langle R_{\pi} \rangle, \langle R_/ \rangle, \langle R_\backslash \rangle, \langle R_{\pi / 2} \rangle \big\} 
	\end{equation}
	It is easy to check that all but one on the pairwise intersections are trivial; the only non-trivial intersection being $\langle R_{\pi / 2} \rangle \cap \langle R_\pi \rangle = \langle R_\pi \rangle$.
\end{eg}

%Lastly, the rules for invariance (Proposition \ref{prop:rules}) ensures that there is a unique subset of the lattice base that 

%Next, consider taking a sequence of increasing nested finite lattices $K_n$ from $K(\mathcal{G})$. We would like $G_{\max}(f, K)$ to converge in some sense to $\Gmax$, but this is not always guaranteed as shown in example \ref{eg:circle_approx}.
%
%\begin{eg}
%\label{eg:circle_approx}
%Consider the group $S^1 \cong \{ z \in \CC : |z| = 1 \}$ and its lattice of closed subgroups $K(S^1)$. Consider the finite sub-lattices $K_n$ generated by the groups $\{ \langle \exp( 2 i \pi / p ) \rangle : p \in \{ 1, \dots, n \} \}$. These lattices contains only finite subgroups, the maximal element in $K_n$ being the group $\langle \exp( 2 i \pi / \mathrm{lcm}( \{ 1, \dots, n \} ) ) \rangle$. Thus if $\Gmax = S^1$ then we will always have $G_{\max}(f, K_n) = \max{ K_n }$ which is never the group $S^1$. We could instead consider adding $S^1$ to each lattice $K_n$. Each $K_n \cup \{ S^1 \}$ is still a lattice, and each every $K_n$ contains $\Gmax$.
%\end{eg}
%
%This means that the lattices does have to be well chosen. We will discuss good choices of the lattice $K$ to use in section \ref{ssec:choose_lattice}.

\section{Estimation of Maximal Invariant Subgroups}
\label{sec:lattice}

In the context of nonparametric regression data as per section \ref{ssec:stats}, we seek to estimate the unique maximal invariant subgroup for $f$ of some search group $\mathcal{G}$ acting continuously and faithfully on the domain $\mathcal{X}$. We assume that $f$ is continuous, so that this maximal subgroup is closed. We will estimate $G_{\max}( f, \mathcal{G} )$ by considering a sequence of finite sub lattices $K_\ell$ that approximate $K(\mathcal{G})$. Each finite sub-lattice $K_\ell$ will be generated by lattice bases $\{ G_i \}_{i = 1}^\ell$ such that $\{ G_i \}_{i = 1}^\ell \subseteq \{ G_i \}_{i = 1}^{\ell + 1}$. Choices for this base are discussed in section \ref{ssec:choose_lattice}.  \\
\\
Suppose that we have a test for the hypothesis \textit{$H_0^{(i)}: f$ is $G_i$-invariant} against the alternative \textit{$H_1^{(i)} : f$ is not $G$-invariant} for each of the groups $G_i$ in the lattice base. We will provide two possibilities for such a test in the next section. We will test these hypotheses at significance levels $\alpha_i$, such that $\sum_{i \in \NN} \alpha_i = \alpha \in [0,1]$. The choice of $\alpha_i$ will depend on how many terms of the lattice base we wish to test, and could either have finitely many non-zero terms or infinitely many, as long as the sum converges. Choices for $\alpha_i$ are discussed in section \ref{ssec:choose_alpha_i}. \\
\\
We then use the information of these tests to infer whether $f$ is $G$-invariant for each $G \in K_\ell$ using the rules for invariance of Proposition \ref{prop:rules}. We will first describe the process when finitely many tests have been conducted, and provide theoretical and computational results for the algorithm to that point.

\subsection{Estimating Over a Finite Lattice}

Let $\{ G_i \}_{i = 1}^\ell$ be a lattice base, and consider the lattice $K_\ell = \llangle \{ G_i \}_{i = 1}^\ell \rrangle$, which by Proposition \ref{prop:lattice_as_vees} contains only groups of the form $G_{i_1} \vee \cdots \vee G_{i_m}$ for some distinct $i_j \in [\ell]$. The rules for invariance (Proposition \ref{prop:rules}) guarantee that knowledge of invariance of the lattice generators $G_i$ gives knowledge of invariance at all other groups $G$ in the lattice $K_\ell$. Therefore after we test each of the hypotheses $H_0^{(i)}$, we can then define:
\begin{equation}
	\hat{K}_{\ell} = \big\{ G \in K_\ell : \mathrm{Test}_{\alpha_i} ( \mathcal{D}, G_i ) = 1 \text{ for all } G_i \leq G \big\}.
\end{equation}
where $\mathrm{Test}_{\alpha_i}( \mathcal{D} , G_i )$ is the test for the null hypothesis $H_0^{(i)}$ at significance level $\alpha_i$, taking the value $0$ if this hypothesis is rejected and $1$ otherwise. This object can be computed (after testing) as long as we can compute the maps $\mathbf{1}_{\{ G_i \leq G\} }$ for each $G \in K_\ell$. We describe this computation in section \ref{ssec:comp_K}. 

\begin{prop}
	\label{prop:conf_region}
	The region $\hat{K}_{\ell}$ is a $(1 - \alpha)$-confidence region for $G_{\max}(f, K_\ell)$. In particular, for all $f \in \mathcal{F}$ we have:
	\begin{equation}
		\PP_f( G_{\max}(f, K_\ell) \in \hat{K}_{\ell} ) \geq 1 - \sum_{i = 1}^\ell \alpha_i \geq 1 - \alpha.
	\end{equation} 	
\end{prop}

From this confidence region $\hat{K}_{\ell}$, we can take a point estimate $\hat{G}_\ell$ as any maximal element. This will usually be unique, but if not then it can be chosen via any distribution on this set (usually the uniform distribution). 

\subsection{Computation of $\hat{K}_\ell$}
\label{ssec:comp_K}

In Algorithm \ref{algo:comp_K} we provide one possible method for computation of $\hat{K}_\ell$. The main requirement for this is the ability to compute the inclusions $G \leq H$ for $G, H \in K_\ell$. This algorithm has the advantage of being iterable; if we wish to compute $\hat{K}_{\ell + 1}$ then we simply store the $\mathrm{TestResults}$ vector as well as $\hat{K}_\ell$ and add another term to the for loop. \\

\begin{algorithm}[h]
\caption{Computation of $\hat{K}_\ell$}
\label{algo:comp_K}
\begin{algorithmic}[1]
\Procedure{ComputeKHat}{$\mathcal{D}$, $\{ G_i \}_{i = 1}^\ell$} 
	\State Initialise $\hat{K}_0 = \{ I \}$
	\State Initialise $\mathrm{Test Results} = ()$ as the empty tuple
	\For{ $i \in \{ 1, \dots \ell \}$ }
		\State Compute $\mathrm{Test}_{\alpha_i}( \mathcal{D}, G_i )$ and append to $\mathrm{Test Results}$
		\State Set $\hat{K}_i = \{ \}$. 
		\If{ $\mathrm{Test}_{\alpha_i}( \mathcal{D}, G_i ) = 1 $}
			\State Set $\hat{K}_i = \hat{K}_{ i - 1} \cup \{ G \vee G_i : G \in \hat{K}_{i - 1} \}$
			\For{ $j \in \{1, \dots, i - 1\}$ }
				\If{ $\mathrm{TestResults}_j = 0$ }
					\State Set $\hat{K}_i = \{ G \in \hat{K}_i : G_j \not\leq G \}$
				\EndIf
			\EndFor
		\Else
			\State Set $\hat{K}_i = \{ G \in \hat{K}_{i - 1} : G_i \not\leq G \}$
		\EndIf
	\EndFor
	\State Return $\hat{K}_\ell$
\EndProcedure
\end{algorithmic}
\end{algorithm}

The computational complexity of this algorithm will depend on the complexity of the $\ell$ tests conducted. For the tests we propose in this paper, see sections \ref{sssec:avt_comp} and \ref{sssec:pv_comp} for their complexities. A naive search for $G_{\max}$ would conduct up $2^\ell$ tests for every symmetry in $K_\ell$. Since the testing is usually the main computation, reducing to $O(\ell)$ tests saves significant computational time, and also allows for each test to be conducted at higher significance levels improving the statistical performance of the estimate.  \\
\\
Aside from this, we require (in the worst case) potentially $O(\ell^2  2^\ell)$ checks of inclusion $G \leq H$ (because $\hat{K}_\ell$ can be of size $O(2^\ell)$ and we might have to check inclusion $\ell^2$ times for each). This is entirely determined by the choice of lattice base and is independent of $n$ and the dimension of $\mathcal{X}$.

\subsection{Computation of $\hat{G}$}

Once $\hat{K}_\ell$ has been computed, we need to select a point estimate of $G_{\max}(f, K(\mathcal{G}))$. This amounts to filtering out the elements that are not maximal and then sampling uniformly from the remainder. \\

%This is complicated by the fact that not all representations in $\hat{K}_\ell$ as computed by algorithm \ref{algo:comp_K} are maximal, in the sense that there can be subgroups $G_i \leq G$ but where $s(G)_i = 0$. However, there is a simple relationship between representations that can filter many out: if $s_i \leq s'_i$ for all $i \leq \ell$ then we do not need the representation $s$ in the maximal element set. This is because either $s$ is not maximal or $s'$ already represents the same group. Using this filter we will be left with a set of representatives of the maximal groups in $\hat{K}_\ell$. This is explicitly formulated in algorithm \ref{algo:comp_G} \\
%
%\begin{algorithm}[h]
%\caption{Computation of $\hat{G}_\ell$}
%\label{algo:comp_G}
%\begin{algorithmic}[1]
%\Procedure{ComputeGHat}{$\hat{K}_{\ell}$} 
%	\State Set $\mathrm{MaxElems} = \{ \}$. 
%	\For{ $s \in \hat{K}_\ell$ } 
%		\For{ $s' \in \mathrm{MaxElems}$ }
%			\If{ $\sum_{i = 1}^\ell \mathbf{1}_{s_i > s_i'} > 0$ }
%				\State Remove $s'$ from $\mathrm{MaxElems}$
%				\State Add $s$ to $\mathrm{MaxElems}$ as a set
%			\EndIf
%		\EndFor
%	\EndFor 
%	\State Sample $\hat{s}$ from $\mathrm{MaxElems}$ uniformly
%	\State Return $G(\hat{s})$
%\EndProcedure
%\end{algorithmic}
%\end{algorithm}

\begin{algorithm}[h]
\caption{Computation of $\hat{G}_\ell$}
\label{algo:comp_G}
\begin{algorithmic}[1]
\Procedure{ComputeGHat}{$\hat{K}_{\ell}$} 
	\State Set $\mathrm{MaxElems} = \{ I \}$
	\For{ $G \in \hat{K}_\ell$ } 
		\For{ $H \in \mathrm{MaxElems}$ }
			\If{ $G \geq H$ }
				\State Remove $H$ from $\mathrm{MaxElems}$
			\EndIf
			\If{ $G < H$ }
				\State Continue the $G$ for loop
			\EndIf
		\EndFor
		\State Add $G$ to $\mathrm{MaxElems}$
	\EndFor 
	\State Return $\hat{G}$ sampled from $\mathrm{MaxElems}$ uniformly
\EndProcedure
\end{algorithmic}
\end{algorithm}

Note that it is not a problem to remove $H$ from the maximal element set in line 6, because it is not maximal and this will not affect whether any maximal elements are added later. The worst case computational time can be bounded trivially by $| \hat{K}_\ell | \times | \max \hat{K}_\ell | \leq 2^{2\ell} $, though it is likely that the true worst case performance is faster because of the trade off in the sizes between $\hat{K}_\ell$ and $\max \hat{K}_\ell$.

%In this method we have sampled uniformly form $\mathrm{MaxElems}$, which does not necessarily mean give the uniform distribution on the groups $\{ G(s) : s \in \mathrm{MaxElems} \}$. The theoretical results in section \ref{ssec:theory_g_hat} show that these distributions both converge to a point mass on $G_{\max}(f, K_\ell)$ as $n \rightarrow \infty$ so this does not typically cause issues. It is also biased towards groups with more representations, which are groups that have had more opportunity for rejection (i.e., more tests of their subgroups have been conducted).

\subsection{Choosing the Lattice Base}
\label{ssec:choose_lattice}

%The choice of $K$ will greatly affect the usefulness of this method. If we pick a trivial sub-lattice of only the identity subgroup then we are guaranteed to estimate $\hat{G} = I$. A suggested strategy is to take a finite number of closed subgroups that give some level of coverage of $\mathcal{G}$ (if it is metrisable, as with a compact Lie group, then this could be the groups generated by an $\epsilon$-net of $\mathcal{G}$) and then generate a lattice from them by taking meets and joins. By the following Lemma we can then add in supergroups (of the entire lattice) to ensure that we see infinite subgroups of $\mathcal{G}$ if we choose.
%
%\begin{Lem}
%\label{lem:lattice_adding}
%If $K$ is a sub-lattice of $K(\mathcal{G})$ and $H \leq G \leq \mathcal{G}$ for all $H \in K$, then $K \cup \{ G \}$ is also a sub-lattice of $K(\mathcal{G})$.
%\end{Lem}

The choice of the lattice base $\{ G_i \}_{i = 1}^\ell$, is fundamental to the method. We want to cover large portions of $K(\mathcal{G})$ with the lattices $K_\ell$ for minimal numbers of tests conducted. Recall that the Hausdorff distance $d_{K(\mathcal{G})}$ between compact subgroups $G, H \leq \mathcal{G}$ is given by:
\begin{equation}
	d_{K(\mathcal{G})} ( G, H ) = \max \big( \sup_{g \in G} \inf_{h \in H} d( g, h) , \sup_{h \in H} \inf_{g \in G} d(h, g) \big)
\end{equation}
when $\mathcal{G}$ has a metric $d$, as with the matrix groups we have used as examples. 
The choice will depend on the goals of the lattice base selection. We have optimised for the following objectives: 
\begin{enumerate}[	(B1)]
	\item $K_\ell$ contains symmetries of interest to the practitioner; 
	\item Maximise the number of groups, $| K_\ell |$;
	\item Maximise the pairwise Hausdorff distances $d_{K(\mathcal{G})}(G_i,G_j)$ for the compact $G_i$;
	\item Minimise $|G_i|$ or $\dim G_i$ for all $G_i$ in the base; and
	\item Maximise $|G|$ or $\dim G$ for all $G \in K_\ell$.
\end{enumerate}
These goals are roughly in order of importance. If there are no symmetries of interest then the method described can still proceed, but of course any prior interest should be the top priority. The second goal speaks to the efficiency of this method: as the ratio $| K | / 2^\ell$ approaches 1 we get maximal information about the whole lattice $K(\mathcal{G})$ out of the $\ell$ tests conducted. The third spaces the groups around $K$, which ensures that you don't waste tests on groups that the data cannot distinguish. The fourth and fifth mean that we should run the tests on smaller groups where we can cover with samples more easily, but that they give information about larger groups. \\
\\
The following recursive method for selecting the lattice base provides a general solution that achieves a balance of the goals above. Suppose we are given a metric group $\mathcal{G}$, a desired number of groups $\ell$, and some starting symmetries of interest $B_0 = \{ G_1, \dots, G_k \} \subseteq K( \mathcal{G} )$ that form a lattice base. Then for $i = 1, 2, \dots$, we pick an element $g \notin \bigvee_{G \in B_i} G$ such that:
	\begin{enumerate}
		\item $\overline{ \langle g \rangle } \cap G \in \{ I, G, \overline{ \langle g \rangle} \}$ for all $G \in B_i$;
		\item If $\mathcal{G}$ is finite then $ | g | = \min_{h \notin \bigvee_{G \in B_i}} |h|$;
		\item $\dim \overline{ \langle g \rangle } \leq 1$; and
		\item $\prod_{G \in B_i} d_\mathcal{G}(g, G)$ is (roughly) maximised.
	\end{enumerate} 
	Then set $B_{i+1} = B_i \cup \{ \overline{ \langle g \rangle } \}$. The product in (iv) encourages equal spacing between the other groups for a given sum of the distances. This of course requires the practitioner to know about various aspects of the group theory of $\mathcal{G}$, which are well known for all of the groups in section \ref{sssec:imp_groups} among many others. The methods presented here are robust to the choice of the lattice, as demonstrated in example \ref{eg:est_f_sims} and appendix \ref{app:lattice_base_effects}. There we see that using the estimated symmetries gives predictive performance similar to or better than an unsymmetrised estimator, even when there is no symmetry present in the lattice chosen. 

%We now walk through a particular example of this for the case of $SL(3)$. 
%
%\begin{eg}
%\cred 
%Suppose that $\mathcal{G} = SL(3)$ acts on $\RR^3$ by matrix multiplication. We could be interested in the initial lattice base consisting of $\{ S_{e_1}^1, S_{e_2}^1, S_{e_3}^1, SO(3), SL(3) \}$ which we can easy see satisfies the intersection condition (I). However, this doesn't cover the $SO(3)$ portion particularly well - if the true maximal subgroup is $S^1_u$ for some $u \in S^2$ then the closest subgroup will be around an axis at worst $45^\circ$ off. To improve on this, we can take a set of $\ell - 5$ vertices $\{ u_1, \dots, u_{\ell - 5} \}$ equally spaced in angular distance around the hemisphere $\{ u  \in S^2 : u_1 \geq 0 \}$ and adding the groups $S^1_{u_i}$. Each of these has $\dim S^1_{u_i} = 1$ and satisfies the intersection property (trivial intersection with the other circle groups and nested intersections with the two supergroups). The last condition is not formally optimised, but the spacing of the rotational axes (including the original axes) does roughly optimise this. 
%\end{eg}

\subsection{Choice of $\alpha_i$}

\label{ssec:choose_alpha_i}

There are a few simple choices for the significance levels $\alpha_i$ for a finite sample $n$. The first is to set $\alpha_i = \alpha / m$ for $i \leq m$, which is a Bonferroni correction for multiple testing. This is attractive in its simplicity but allows for only a finite portion of $K(\mathcal{G})$ to be searched. \\
\\
Another choice is to pick a sequence with positive values for all $i$, for example $\alpha = \tfrac{1}{2^i}$. This would allow for infinitely many groups to be tested, though with likely very low powers for the later groups.  \\
\\
Lastly, we can pick a fixed level $\alpha$ and simply test \textit{all} groups in the lattice base at this level, as is done in many practical variable selection problems \citep{heinze2018variable} (although not necessarily recommended given issues of post model selection inference). This of course can affect the probability $\PP_f( G_{\max}( f, K_\ell) \in \hat{K}_\ell)$ but in practice often seems not to increase it beyond $\alpha$ because the tests run on the same dataset are not independent. See appendix \ref{app:correct_containment} as a continuation of example \ref{eg:low_dim_G_sims} for an example of this.

\subsection{Theoretical Properties of $\hat{G}_\ell$ and $\hat{K}_\ell$}

\label{ssec:theory_g_hat}

We now turn to the understanding how the properties of the hypothesis test used determines the properties of the estimate $\hat{G}_\ell$ and the confidence region $\hat{K}_\ell$ for a fixed lattice base of size $\ell$. 

%Recall that $\hat{K}_\ell$ is a $(1- \alpha)$-confidence region in $K_\ell$ for $G_{\max}(f, K_\ell)$. If the tests used are consistent, then an oracle can ensure that the vector of tests results will converge to the truth as long as the $\alpha_i$ decrease to $0$ sufficiently slowly. That is, consistency of the region $\hat{K}$ is inherited from the consistency of the test used. This is formalised in the Theorem \ref{thm:cons_of_hat_G}.

\begin{thm}
	\label{thm:cons_of_hat_G}
	Suppose that the hypothesis test used is consistent at every significance level $\alpha \in (0,1)$. For any fixed significance levels $\alpha_i$ (that do not depend on $n$), we have:
	\begin{enumerate}[	(1)]
		\item $\PP_f\big( \hat{G}_\ell \leq G_{\max}(f, K_\ell) \big) \rightarrow 1$ when $\hat{G}_\ell$ is sampled uniformly from $\max \hat{K}_\ell$.
	\end{enumerate}
	Moreover, there exist deterministic sequences of significance levels $\alpha_i(n)$ for which:
	\begin{enumerate}[	(1)]
		\setcounter{enumi}{1}
		\item $( \mathrm{Test}_{\alpha_i(n)}(\mathcal{D}, G_i ) : i  = 1, \dots, \ell ) \overset{\PP_f}{\longrightarrow} 1 - 2 ( \mathbf{1}_{ G_i \leq G_{\max}(f, K_\ell )} : i = 1, \dots, \ell )$;
		\item $\hat{G}_\ell$ is a consistent estimator of $G_{\max}( f, K_\ell )$, when $\hat{G}_\ell$ is sampled uniformly from $\max \hat{K}_\ell$;
		\item $\hat{K}_\ell \overset{\PP_f}{\longrightarrow} \{ G \in K_\ell : G \leq G_{\max}( f, K_\ell ) \}$; and
		\item $\mathcal{E}(\hat{K}_\ell)  \overset{\PP_f}{\longrightarrow} 0$,
	\end{enumerate}
	where $\overset{\PP_f}{\longrightarrow}$ denotes convergence in probability under the measure $\PP_f$. The sequence of significance values can be calculated if the power of $\mathrm{Test}_\alpha$ has a known lower bound.  
\end{thm}

The first statement says that we are probabilistically guaranteed eventually not to \textit{over}-estimate the symmetry. This means that the method presented is robust to type I errors of the tests used, as errors on the true symmetries of $f$ result result in missed symmetries but do not introduce errant ones. The statements (2)-(5) are equivalent formulations of the stronger statement of consistency.

\section{Hypothesis Testing for a Particular Symmetry}
\label{sec:testing}

We now turn to constructing tests for the hypothesis \textit{ $H_0 : f$ is $G$-invariant} against the alternative \textit{$H_1 : f$ is not $G$-invariant} for use in the subgroup estimation, where $G$ is any group acting measurably on $\mathcal{X}$. We present two options that  make trade-offs in terms of generality and guarantees. The first requires stronger assumptions but has a theoretical guarantee of convergence, whereas the second makes weaker assumptions but has fewer guarantees and requires more computation.

\subsection{Asymmetric Variation Test} 
\label{ssec:asym_var_test}
This test relies on the idea that if we know how quickly $f$ can vary locally (i.e., we have a relationship between $|f(x) - f(y)|$ and $d_\mathcal{X}( x, y)$) then we can construct a test statistic that measures the effect of permuting the covariates in $X_i$ under random transformations from the symmetry to test $G$. \\
\\
If we can assume that the regression function $f$ is in a class $\mathcal{F}$ of \textit{bounded variation}, then we can define $V(x,y) = \sup_{f \in \mathcal{F} }  |f(x) - f(y) |$. This bound $V$ is used to identify if $f$ varies more quickly than $V( g\cdot x, y)$. An example of such a class are $\beta$-H\"{o}lder continuous functions $\mathcal{F}(L, \beta)$ (defined in section \ref{ssec:stats}) for which $V(x,y) = L d_\mathcal{X}(x,y)^\beta$ when $\beta \in (0, 1]$. The only real choice here is $L$, as the continuous $f$ can be uniformly well approximated by Lipschitz functions of some steepness, so in practice one would use the bound $V(x,y) = L d_\mathcal{X} ( x, y)$ using an assumed $L$. If this bound is unknown then it could be estimated using techniques in appendix \ref{app:estimate_lipschitz_const}.  \\
\\
Suppose that we also have a bound $p_t$ such that $ \PP( | \epsilon_i - \epsilon_j | > t ) \leq p_t$ for the independent mean zero additive noise $\epsilon_i = Y_i - f(X_i)$, for all positive $t$. One example of this is $p_t = \frac{ 2 \sigma }{ t  } \tfrac{ \exp( - t^2 / 4 \sigma^2 )}{\sqrt{2\pi}}$ for independent and identically distributed Gaussian $\epsilon_i$ with variance $\sigma^2$ (Proposition 2.1 of \citet{adamsMA3K0notes}), or in general one can use a Chebyshev inequality. Let $\mu_g$ be any distribution on the group $G$, and let $g \sim \mu_g$.   \\
\\
We can now define the test statistic based on a resampling procedure. Let $m$ be the number of resamples. First, for $j \in \{ 1, \dots, m \}$ set $D_{I(j) J(j)}^{g_j} = |Y_{I(j)} - Y_{J(j)} | - V( g_j \cdot X_{I(j)}, X_{J(j)})$, where $g_j$ is sampled independently from $\mu_g$, $I(j)$ is sampled uniformly from $[n] = \{ 1, \dots, n \}$, and $J(j)$ is the index of the nearest neighbour to $g_j \cdot X_{I(j)}$ in the original data $\mathcal{D}_X = \{ X_i \}_{i = 1}^n$. Our test statistic is then $N_t^g = | \{ D_{I(j) J(j)}^{g_j} \geq t \} |$, the distribution of which is controlled by the following lemma.

\begin{Lem}
\label{lem:test_stat_control}
Under the null hypothesis of $G$-invariance, and for any fixed choice of resamples $m$, $N_t^g$ is stochastically bounded from above by a binomial random variable $Z$ with $m$ trials and probability of success $p_t$ for sufficiently large sample sizes $n$. I.e., 
\begin{equation}
	\PP( N_t^g > c ) \rightarrow p \leq \PP( Z > c ) 
\end{equation}
for all $c \in \RR$.  
\end{Lem}

Using this lemma it is straightforward to calculate $p$-values for this hypothesis via:
\begin{equation}
	p_{val} = \sum_{k = N_t^g}^{m} \binom{m}{k} p_t^k (1 - p_t)^{m - k}
\end{equation}
which are guaranteed to have asymptotic type I error control at any significance level $\alpha \in [0,1]$ . I.e., 
\begin{equation}
	\PP_f( p_{val} \leq \alpha ) \rightarrow a \leq \alpha
\end{equation}
for all $f$ that satisfy the null hypothesis of $G$-invariance.

\subsubsection{Choices of $t$ and of $\mu_g$}

The test presented here works for any choice of $t$ and variable $g$, though particular choices of these will affect the power of the test. For example, if we choose $t$ such that $p_t = 1$ then our $p$-value will always be $1$ (as all the mass of $Z$ will be on $Z = m$). Similarly if $g = e \in G$ almost surely then we will also only reject with probability at most $\alpha$. \\
\\
For the choice of $t$, we suggest calculating $N_t^g$ from the sample of $D_{I(j)J(j)}^{g_j}$ at some grid of $t$ values $t_0 < t_1 < \cdots < t_k$ with the values of $p_{t_i}$ spread over the interval $(0,1)$, and then taking the $p$-value of $H_0$ as the minimum $p$-value of each $N^g_{t_i}$. This is justified as the information of the test is entirely contained in the set $\{ D_{I(j)J(j)}^{g_j} \}_{j = 1}^m$, i.e., since the $p\text{-value}$ is at most $\PP( Z \geq N_t^g \mid N_t^g )$ for all $t$, we can take an infimum over $t$. \\
\\
For the choice of $\mu_g$, we suggest using a uniform distribution only on some set of topological generators, i.e., a set $A$ with $\overline{ \langle A \rangle } = G$. This means that we don't sample the identity or other elements that generate only subgroups of $G$ to which $f$ may be invariant, but if there is an element that breaks the invariance then one of the generators will too (by lemma \ref{lem:topo_gens_invariance} in appendix \ref{app:proof4}) so we should capture that chance.

\subsubsection{Computational Complexity}
\label{sssec:avt_comp}
The main computational work is finding the nearest neighbour to each resampled point $g_j \cdot X_{I(j)}$. A naive algorithm would search across all $n$ samples to find $J(j)$, which means that the algorithm would cost $O(mn)$ distance calculations. One improvement would be to store the feature vectors $X_i$ in a $k$-$\dd$ tree \citep{bentley1975multidimensional}. Building such a tree takes $O( n \log n)$ operations, but can find nearest neighbours to a new point in $O( \log n)$ time. This speeds up the asymmetric variation to $O( m \log n )$.

\subsubsection{Consistency of this test}
\label{ssec:consistency}

Under the conditions above and some mild conditions on the noise distribution, and when the number of resamples $m$ increases slower than the true sample size $n$, we can prove that the asymmetric variation test is consistent. The condition on the support of $\mu_X$ amounts to restricting $\mathcal{X}$ to the closure of the support as a practitioner would usually do. The condition that the noise admits a density is satisfied in many usual cases in regression (e.g. Gaussian noise). The condition that we can bound the concentration of the noise tightly is somewhat restrictive, but reflects the difficulty of the problem - if the asymmetry is obscured by more noise then it is much more difficult to identify it. The condition on $\mu_g$ ensures that we sample from enough of $G$ to spot points where $f$ is not $G$-invariant. A sufficient condition for this is the sample uniformly from some set of topological generators as suggested above, or from the uniform distribution on $G$ when this exists. 

\begin{prop}[Consistency of the Asymmetric Variation Test]
	\label{prop:cons}
	Set $m$ such that $m \rightarrow \infty$ and $m / n \rightarrow 0$ and fix $t > 0$. Suppose that the law of $X$ has a dense support on $\mathcal{X}$. Suppose that $\epsilon$ admits a density $f_\epsilon$ with respect to Lebesgue measure on $\RR$ that is decreasing in $|\epsilon|$. Suppose that $\mu_g$ is chosen such that $\PP_f( f( g \cdot X) \neq f(X) ) > 0$ for any $f$ that is not $G$-invariant. Then the asymmetric variation test is consistent, i.e. $p_{val} \overset{\PP_f}{\rightarrow} 0$ for all $f$ that are not $G$-invariant.
\end{prop}

\subsection{Permutation Variant of the Asymmetric Variation Test}
\label{ssec:assumptionless_v2}

The asymmetric variation test (in the previous subsection) relies on both the existence of, and the knowledge of, the bounds $V(x,y)$ and $p_t$. In this section we show that we can relax these assumptions and sometimes gain statistical power, but with increased computational cost and without the theoretical guarantee of consistency.  \\
\\
Suppose $\mu_g$ is a distribution on $G$ such that $d_{\mathcal{X}}( g_1 \cdot X_1, g_2 \cdot X_2 ) \overset{D}{=} d_{\mathcal{X}}(X_1, X_2)$ when $g_i \iid \mu_g$ and $X_i \iid \mu_X$. If $\mu_X$ is $G$-invariant, then any distribution on $G$ suffices. Such a $\mu_g$ always exists by placing all mass on the identity but often more diverse possibilities usually exist when the practitioner is free to choose the covariate sampling distribution $\mu_X$. Then under the null hypothesis \textit{$H_0 : f$ is $G$-invariant} we have the equality:
\begin{equation}
	\label{eq:perm_interchangability}
	 \frac{ | Y_i - Y_j | }{ d_{\mathcal{X}}(X_i, X_j) } \overset{D}{=} \frac{ | f( g_i \cdot X_i ) + \epsilon_i - f( g_j \cdot X_j ) - \epsilon_j  | }{ d_{\mathcal{X}}( g_i \cdot X_i, g_j \cdot X_j) } =  \frac{ | Y_i - Y_j | }{ d_{\mathcal{X}}( g_i \cdot X_i, g_j \cdot X_j) } 
\end{equation}
for any variation bound $V$ for $f$. Thus instead of comparing outputs at transformed \textit{points} $g_{I(i)} \cdot X_{I(i)}$ to non transformed points $X_{J(i)}$ as with the asymmetric variation test we now compare the slopes above for transformed \textit{datasets}: 
\begin{equation}
	\mathcal{D}^{(k)} = \{ (g_i^{(k)} \cdot X_i, Y_i ) \}_{i = 1}^n
\end{equation}
where $g_i^{(k)} \iid \mu_g$ for all $i$ and $k \in \{ 1, \dots, B \}$. For any non $G$-invariant function $f$ we break the second equality of \ref{eq:perm_interchangability}, and when there is a bound $V$ as with the previous test then the right hand side is frequently larger than the left. This leads to a resampling test with minimal assumptions defined in algorithm \ref{algo:perm_2}.  \\

\begin{algorithm}[h]
\caption{Permutation Variant of Asymmetric Variation Test}
\label{algo:perm_2}
\begin{algorithmic}[1]
\Procedure{PermVarTest}{$\mathcal{D}$, $\mu_g$, $B$}
	\For{$k \in \{1, \dots, B \}$}
		\State Sample $g_i \iid \mu_g$ for $i \in [n]$
		\State Compute $S^k = \{  | Y_i - Y_j | / d_{\mathcal{X}}( g_i^{(k)} \cdot X_i, g_j^{(k)} \cdot X_j)  : i \neq j \in [n] \}$  
		\State $A(k) \leftarrow \max S^k$ 
	\EndFor
	\State Compute $S^0 = \{ | Y_i - Y_j | / d_{\mathcal{X}} ( X_i, X_j ) : i  \neq j \in [n] \}$
	\State $A_0 \leftarrow \max S_0$
	\State $p_{val} \leftarrow |\{ k : A(k) \leq A_0 \}| / B$
	\State \Return $p_{val}$
\EndProcedure
\end{algorithmic}	
\end{algorithm}

Under the null hypothesis equation \ref{eq:perm_interchangability} ensures interchangeability of the variables $A(k)$ and $A_0$, so the type $1$ error of the hypothesis test $\mathrm{Test}_{\alpha}( \mathcal{D}, G ) = 1 - 2 \mathbf{1}_{ p_{val} \leq \alpha }$ is controlled at level $\alpha$ in finite samples. We do not prove consistency of this test but note in the numerical experiments (section \ref{sec:exp}) that the power is often higher than the previous test, particularly for low sample sizes and in high dimensions. 
%
%\subsubsection{Theoretical Properties}
%
%\begin{prop}[Consistency of the Permutation Variant]
%	Suppose that $f$ is $L$-Lipschitz and not $G$-invariant. Suppose that $\cup_{x \in \mathcal{X} } \{ g \in G : f( g \cdot x) \neq f(x) \} $ is not a $\mu_g$ negligible set. Then the permutation variant test with $B = n$ is consistent, i.e., $p_{val} \overset{p}{\rightarrow} 0$. 
%\end{prop}
%
%\begin{proof}
%	Under the alternative we still have the distributional equality $d_{\mathcal{X}} ( X_i, X_j ) \overset{D}{=} d_{\mathcal{X}}( g_i \cdot X_i, g_j \cdot X_j)$ when $g_i \iid \mu_g$. However, the numerator's distribution changes for each of the transformed samples:
%	\begin{align}
%	 	|Y_i - Y_j| &= | f( X_i ) - f( X_j ) + \epsilon_i - \epsilon_j | \\
%	 		& 
%	\end{align}
%
%\end{proof}

\subsubsection{Choice of sampling distribution on the group}
\label{sssec:choice_of_mu_g}

The practitioner has a choice of the sampling method from the group $G$ as long as it satisfies equation \ref{eq:perm_interchangability}. As stated above, if $\mu_X$ is $G$-invariant then we can use $\mu_g = U(G)$, and so we suggest that when possible these choices are made. For example, if $\mathcal{X}$ is a compact domain such as the hypersphere $S^d = \{ x \in \RR^{d + 1} : \| x \|_2 = 1 \}$ then we would typically try to sample the covariates $X_i$ from $U(S^d)$ to sufficiently cover the space. This distribution is invariant to every measure on $SO(d)$ so we can pick either the Haar measure $U( SO(d) )$. We could also use a measure supported on a finite set of topological generators, as suggested in the previous test, or any other desired measure in this case. \\
\\
In other cases it may be hard to satisfy this condition whilst also sampling from topological generators of $G$. Indeed for both these conditions to be true we require that $\mu_X$ is $G$-invariant which does not always hold. Instead we note that the distributional equality $g \cdot X \overset{D}{=} X$ is not strictly \textit{necessary} for equation  \ref{eq:perm_interchangability} and the exactness of the test. Instead, the suitability of a particular distribution can be examined by comparing the marginal distribution of $d_\mathcal{X} ( g_i \cdot X_i, g_j \cdot X_j )$ against $d_{\mathcal{X}}( X_i, X_j)$. We do this by visual inspection of $Q$-$Q$ plots and Kolmogorov-Smirnov tests for the equality of these distributions.

\subsubsection{Computational Complexity}
\label{sssec:pv_comp}
This permutation method requires $O(nB)$ samples and transformations of the inputs $X_i$ and then the computation  of $O(n^2B)$ distance matrices. The remaining computations (the computing the maxima and the comparisons) are negligible in comparison. Thus the overall complexity is $O(n^2B)$ which is at least $B$ times more intensive than the Asymmetric variation test (with the choices suggested in its section).

\section{Numerical Experiments}
\label{sec:exp}

%Since the estimator relies on the tests, we conduct experiments for the hypothesis test first. We show that the test performs well in finite samples, able to distinguish invariance with only $n = 50$ datapoints for an empirical power of $95\%$ when $d = 2$, and $n = 300$ when $d = 4$. We also show that the test can work in dimension $784$ by testing for invariance of images of handwritten digits in the MNIST dataset \citep{lecun1998gradient}.  \\
%\\
%We then show how these experiments relate to the estimation of the maximal symmetries for each of the methods presented in section \ref{sec:lattice}, noting that the proportion of experiments with $\hat{G} = G_{\max}(f, \mathcal{G})$ is almost exactly equal to the power of the test. We also demonstrate an example of estimating the infinite group $SO(3) = G_{\max}( f, SL(3) )$, though this example required over 1000 samples to achieve a $95\%$ accuracy. \\
%\\
We now demonstrate the performance of the confidence region $\hat{K}_\ell$ on synthetic datasets. The first simulation is a low dimensional regression estimation where we can directly quantify the error $\mathcal{E}( \hat{K}_\ell )$ directly. The second simulation then considers the effect of using the estimated group $\hat{G}_\ell$ to symmetrise an estimator $\hat{f}$ of $f$ and consider the mean squared prediction error. A summary of the proportions of correct group estimations (i.e., where $\hat{G}_\ell = G_{\max}( f, K_\ell)$) when $\hat{G}_\ell$ is sampled uniformly from $\max \hat{K}_\ell$ is shown in table \ref{tab:proportions_of_correct_hat_G}. \\ %In appendix \ref{sssec:MNIST}, we also consider a high dimensional ($d = 784$) example with the MNIST dataset \citep{lecun1998gradient}. \\
\\

\begin{table}
\caption{\label{tab:proportions_of_correct_hat_G} Proportions of correct symmetry estimation.}
% \hspace*{-43cm}
\fbox{\begin{tabular}{l l l | r r r r r}
	\textbf{Example}		&	\textbf{Scenario}	& \textbf{Test}    & $n = $ \textbf{20} & \textbf{50} & \textbf{100} & \textbf{200} & \textbf{500} \\ \hline %& 750 \\ \hline
\multirow{ 6}{*}{\ref{eg:low_dim_G_sims}} & \multirow{2}{*}{Dimension 2} & AVT & 0.250 & 0.745 & 0.960 & 1 & 1 \\ %1 \\
						&  & PV & 0.480 & 0.700 & 0.750 & 0.735 & 0.700 \\ % & 0.740 \\		
						& \multirow{2}{*}{Dimension 4} & AVT & 0.010 & 0.03 & 0.045 & 0.265 & 0.870 \\ % & 0.990 \\ 
						&  & PV & 0.135  & 0.50  &  0.795  & 0.855 &  0.905  \\ % & 0.875\\
						& \multirow{2}{*}{Dimension 10} & AVT & 0  & 0 & 0 & 0 & 0 \\
						&  & PV & 0.03 & 0.06 & 0.125 & 0.27 & 0.565  \\
						\hline 
\multirow{ 4}{*}{\ref{eg:est_f_sims}} 		& Spherically Symmetric & PV & 	0.85 & 0.90 & 0.83 & 0.83 & - 		\\
						& Rotationally Symmetric & PV & 0.51 & 0.97 & 0.93 & 0.97 & - 			\\
						& Totally Asymmetric & PV & 0.10 & 0.85 & 1 & 1 & -  			\\
						& Misspecified Lattice & PV & 0.09 & 0.34 & 0.91 & 1 & -  
\end{tabular} }
\end{table}

%\begin{table}
%\caption{\label{tab:mnist} Calculated $N_0$ for digit orientation tests.}
%\centering
%\fbox{%
%\begin{tabular}{c|cccccccc}
%		 Test for digit:    		  	 & 2  & 3  & 4  & 5  & 6   & 7  & 9 \\ \hline
%		 $D_4$ 			  		 & 6  & 17 & 15 & 65 & 168 & 58 & 6  \\
%		 $\langle a \rangle$ 	 & 0  & 0  & 0  & 0  & 0   & 0  & 0  \\
%		 $\langle b \rangle$ 	 & 6  & 47  & 48  & 137 & 285  & 119 & 15
%\end{tabular}}
%\end{table}

All tests were conducted locally on a 2020 13'' MacBook Pro laptop with a 2 GHz Quad-Core Intel Core i5 processor and 16GB of RAM. All code is available on GitHub (\href{https://github.com/lchristie/estimating_symmetries}{https://github.com/lchristie/estimating\_symmetries}). %Further simulations, including those explicitly for the hypothesis tests in both low and high dimensions can be found in appendix \ref{app:testing}.

\begin{eg}[Finite Group in Low Dimensions]
	\label{eg:low_dim_G_sims}
	Consider the regression functions:
	\begin{equation}
		f_d : (x_1, \dots, x_d) \mapsto \exp( - | x_1 | )
	\end{equation}
	for each dimension $d = 2,3,4, \dots$ and suppose that $X_i \iid N(0,4I_d)$ and $\epsilon_i \iid N(0, 0.05^2 )$. Consider the group $D_4$ acting by rotations and reflections of the $x_1$-$x_2$ plane. We consider the lattice base:
	\begin{equation}
		\big\{ G_i \big\}_{i = 1}^\ell = \big\{ \langle R_h \rangle , \langle R_v \rangle, \langle R_\pi \rangle, \langle R_/ \rangle, \langle R_\backslash \rangle , \langle R_{\pi / 2} \rangle \big\}
	\end{equation}
	which has $K_\ell = L(D_4)$. Note that $R_h$ reflects through the $x_2$ line in this $x_1$-$x_2$ plane, and $R_v$ through the $x_1$ line etc. Then the symmetries $R_h$, $R_v$, and $R_\pi$ all apply to $f_d$ as they only change signs of $x_1$ and $x_2$. However $ R_/$,  $R_\backslash$, and $R_{\pi / 2}$ do not because they exchange $x_1$ and $x_2$ (as well as change signs). Therefore the correct object we wish to estimate is:
	\begin{equation}
		G_{\max}( f_d, K_\ell ) = \langle R_h, R_\pi \rangle
	\end{equation}
	for every dimension $d$, and the correct test result vector would be $(1,1,1,0,0,0)$. \\
	\\
	We tested each null hypothesis \textit{$H_0^{(i)} : f$ is $G_i$-invariant} at the $\alpha_i = 0.05$ significance level. This means that $\hat{K}_\ell$ will be a $0.7$ confidence region for $G_{\max}(f, K_\ell)$ per theorem \ref{thm:cons_of_hat_G}, but we see that it contains this object much more frequently in appendix \ref{app:correct_containment}. Each sampling distribution from the groups are uniform on the group. The asymmetric variation tests had $t = 2 \sigma = 0.1$, $p_t = \frac{ 2 \sigma }{ t  } \tfrac{ \exp( - t^2 / 4 \sigma^2 )}{\sqrt{2\pi}}$, and the number of resamples equal to the number of samples. The permutation variants used $B = 100$. \\
	\\
	We plotted the results of these simulations for dimensions 2 and 4 in figure \ref{fig:toy_eg_K_hat} for the performance of the confidence region $\hat{K}_\ell$. Note that if $\mathcal{E}( \hat{K}_\ell ) = 0$ then $\hat{G}_{\ell} = G_{\max}( f, K_\ell )$. These plots show that the Asymmetric Variation Test can detect this asymmetry reliably, though this performance decays with higher dimension as the $X_i$ become spaced with further distances. The permutation variant interestingly struggles more in low dimensions, exhibiting moderate errors even when $n = 300$, though outperforms the AVT in dimension 4 and above. Results for the higher dimensional simulations are available in appendix \ref{app:higher_dimension_eg_61}.   \\
	\\
	 In this example computations of the asymmetric variation test when $n = 100$ and $d = 4$ each took 0.004s and computations of the permutation variant each took 0.118s.
	
	\begin{figure}[h]
		\centering
		\begin{tabular}{ m{7cm} m{7cm} }
%			\centering \includegraphics[scale=0.36]{} &
%			\centering \includegraphics[scale=0.36]{} \tabularnewline
%			\centering (a) Results using the AVT &
%			\centering (b) Results using the PV \tabularnewline
			\centering \includegraphics[scale=0.48]{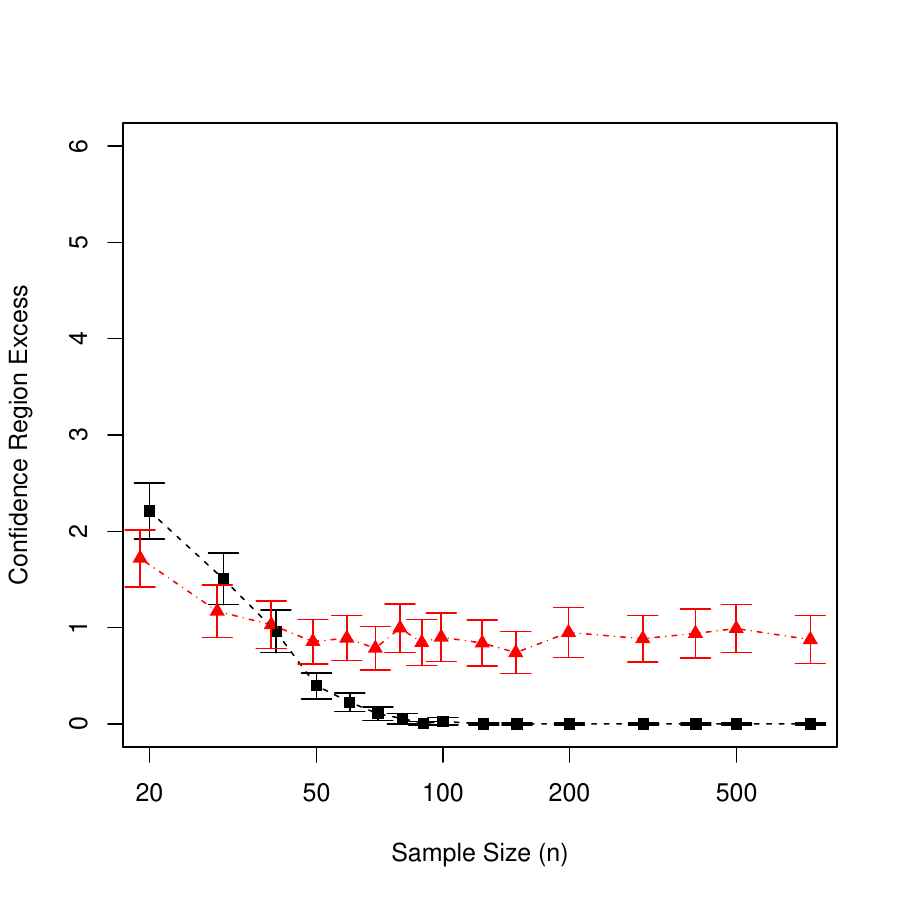} &
			\centering \includegraphics[scale=0.48]{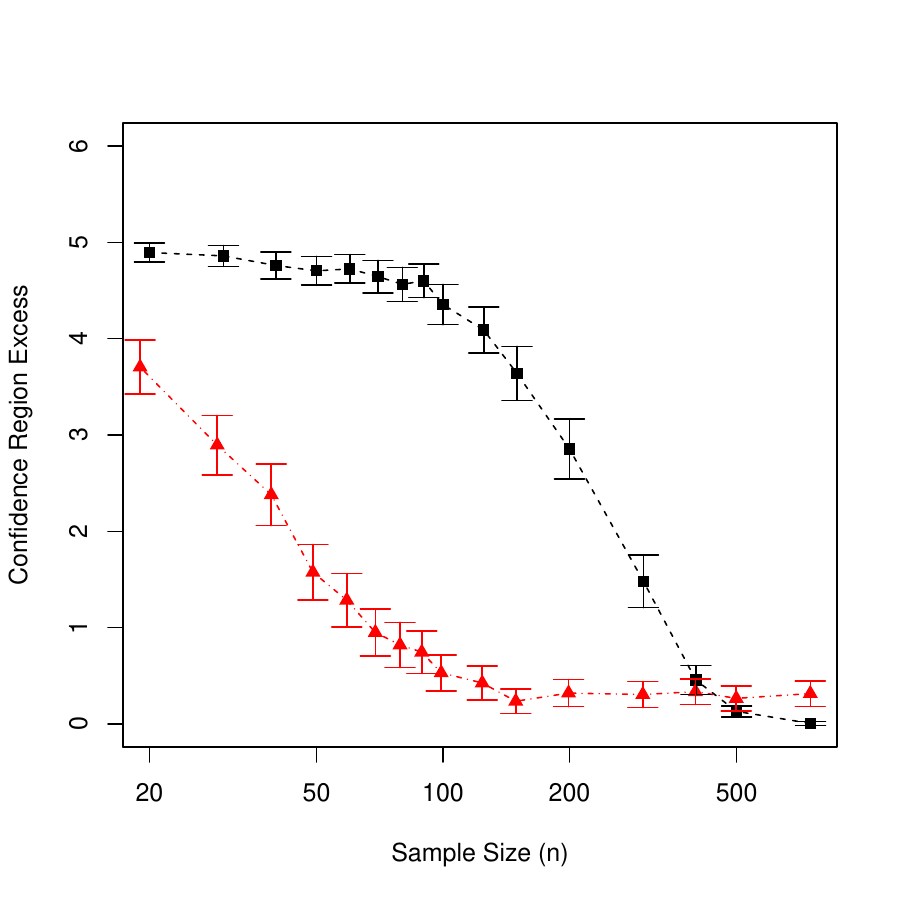} \tabularnewline
			\centering (a) $\mathcal{E}( \hat{K}_\ell )$ when $d = 2$ & 
			\centering (b) $\mathcal{E}( \hat{K}_\ell )$ when $d = 4$
		\end{tabular}
		\caption{This figure describes the results of the simulations in example \ref{eg:low_dim_G_sims}. Subfigures (a) and (b) track the confidence region excess $\mathcal{E}( \hat{K}_\ell )$ for each test used (black for the asymmetric variation test and red for the permutation variant), in dimension 2 for (a) and dimension 4 for (b). The error bars are Wald 95\% Confidence Intervals for the true expected excess. }
		\label{fig:toy_eg_K_hat}
	\end{figure}
\end{eg}

\begin{eg}[Simulation of Symmetrised Estimators]
\label{eg:est_f_sims}
We now consider an example of estimating continuous symmetries. Suppose that the domain $\mathcal{X}$ is $\RR^3$ and let $\mathcal{G} = SO(3, \RR)$ act on $\mathcal{X}$ by matrix multiplication. Let $K_\ell$ be the sub-lattice of $K(SO(3, \RR))$ with lattice base:
\begin{equation}
	\big\{ G_i \big\}_{i = 1}^\ell = \big\{ S^1_{e_1}, S^1_{e_2}, S^1_{e_3}, S^1_{u_1}, \dots, S^1_{u_6}, SO(3) \big\}.
\end{equation}
Here the axes $u_1, \dots u_6$ (given explicitly in appendix \ref{app:icosahedral_vectors}) correspond to non-parallel unit vectors through the vertices of a regular icosahedron, rotated such that the distances to the standard basis vectors are roughly maximised (as with section \ref{ssec:choose_lattice}). We consider three possible regression functions with differing levels of symmetry:
%\begin{enumerate}[  (1)]
%	\item (Spherically Symmetric) $f_1(x) = \sin( - \| x \|_2 )$ with $G_{\max}( f_1, K ) = SO(3)$;
%	\item (Rotationally Symmetric) $f_2(x) = \sin( - | x_1 | ) + \cos( \sqrt{ x_2^2 + x_3^2 } ) $ with $G_{\max}( f_2, K ) = S^1_{e_1}$; and
%	\item (Totally Asymmetric) $f_3(x) = \sin( - | x_1 | ) + 0.2 x_2  - cos(  x_3 )$ with $G_{\max}( f_3, K) = I$,
%\end{enumerate}
\begin{center}
	\begin{tabular}{ p{6.1cm} p{3.8cm} p{4cm} }
	\textbf{Regression Function } & \textbf{Maximal Symmetry} & $\phantom{0}$ \\ \hline \hline
	$f_1(x) = \sin( \| x \|_2 )$ & $G_{\max}( f_1, K_\ell ) = SO(3)$ & Spherically Symmetric \\ 
	$f_2(x) = \sin( | x_1 | ) + \cos( \sqrt{ x_2^2 + x_3^2 } ) $ & $G_{\max}( f_2, K_\ell ) = S^1_{e_1}$ & Rotationally Symmetric \\
	$f_3(x) = \sin( | x_1 | ) + 0.2 x_2  - \cos(  x_3 )$ & $G_{\max}( f_3, K_\ell ) = I$ & Totally Asymmetric \\
	$f_4(x) = f_2( R_{0.1, e_3} x  )$ & $G_{\max}( f_3, K_\ell ) = I$ & Symmetry not in $K_\ell$
    \end{tabular} 
\end{center}
where $R_{0.1, e_3}$ is the rotation matrix of 0.1 radians around the $z$ axis. This means that $f_4$ has symmetries that are isomorphic to those of $f_2$, but are not included in the lattice $K_\ell$. This demonstrates the usefulness of these methods even in the case when the lattice $K_\ell$ is mis-specified. \\
\\
As per the statistical problem (Section \ref{ssec:stats}), we simulate data $\mathcal{D} = \{ (X_i, f(X_i) + \epsilon_i \}_{i = 1}^n$ and $\mathcal{D}' = \{ (X_i', f(X_i') + \epsilon_i' \}_{i = 1}^n$ where $X_i \iid N( 0, 2 I_3 )$, $X_i' \iid N( 0, 2 I_3 )$, and $\epsilon_i, \epsilon_i' \iid N(0, 0.01^2)$. \\
\\
We estimate $\hat{G}_\ell$ using the permutation variant of the asymmetric variation test with $B = 100$ and $\mu_g$ sampled uniformly from each group. We then consider the problem of estimation of $f$, using an estimate that has been symmetrised by $\hat{G}_\ell$. In particular, we consider three regressors:
\begin{enumerate}[	(A)]
	\item A standard local constant estimator (LCE) $\hat{f}(x, \mathcal{D})$ with bandwidths selected by cross-validation (the default in the R-package \textsf{NP} \citep{hayfield2008np});
	\item A symmetrised LCE $\hat{f}_{\hat{G}}^A \big( \pi_{\hat{G}}(x) ,  \{ ( \pi_{\hat{G}}( X_i), Y_i ) \}_{i = 1}^n \big)$ (utilising data projection); and
	\item A symmetrised LCE that splits $\mathcal{D}$ into independent pieces half the size of $\mathcal{D}$, then using the first half to estimate $\hat{G}_\ell$ only and the second half to estimate $\hat{f}$ afterwards as with (B).  
\end{enumerate}

Here $\pi_G : \mathcal{X} \rightarrow  \mathcal{X} / G $ is the natural projection given by $\pi_{I}(x) = x$, $\pi_{SO(3)}(x) = \|x \|_2$, and $\pi_{SL(3)}(x) = \mathbf{1}_{x \neq 0}$. The natural projection for the circle groups are given by $\pi_{S^1_u} (x ) = ( x \cdot u , \sqrt{ \| x \|^2 - ( x \cdot u )^2 } )$. We have chosen local constant, rather than local linear for example, estimators because of computational time and because we do not have boundary effects with the domain. \\
\\
The mean squared prediction errors are plotted in figures \ref{fig:mspes_1} and \ref{fig:mspes_2}. We see that for scenarios 1 and 2 the MSPE decays rapidly and is able to correctly generalise once $\hat{G}$ is more likely than not to estimate the correct symmetry. When there is no symmetry present, as with scenario 3, the we quickly see convergence of $ \hat{f}_{\hat{G}}^A - \hat{f} \overset{\PP_f}{\rightarrow} 0$. Interestingly, we see no empirical benefit from splitting the data into independent pieces, suggesting that the errors of the group estimation and the functional estimation are uncorrelated. In the fourth scenario at low sample sizes, using the errant symmetry to symmetrise the estimator reduces the variance more than it increases bias resulting in smaller MSPE. At higher sample sizes it correctly rejects the $S_{e_1}^1$ symmetry and the estimator converges to the baseline. 

\begin{figure}[h]
	\centering
	\begin{tabular}{ m{7cm} m{7cm} }
		\centering \includegraphics[scale = 0.48]{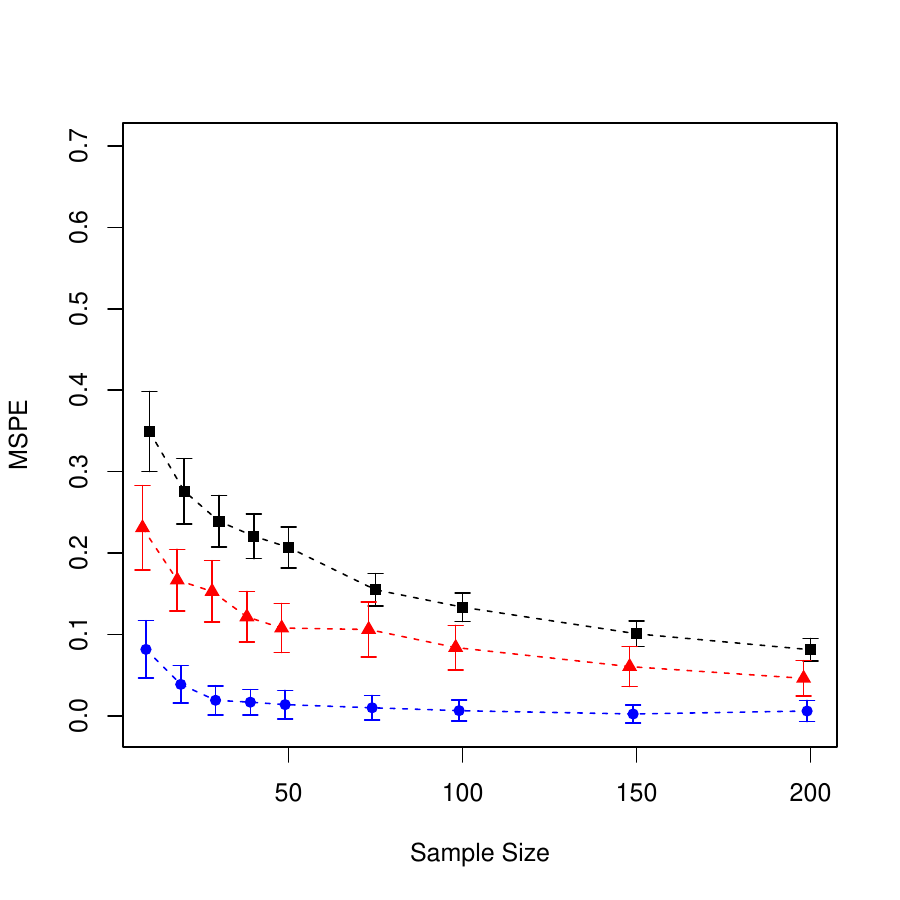} &
		\centering \includegraphics[scale = 0.48]{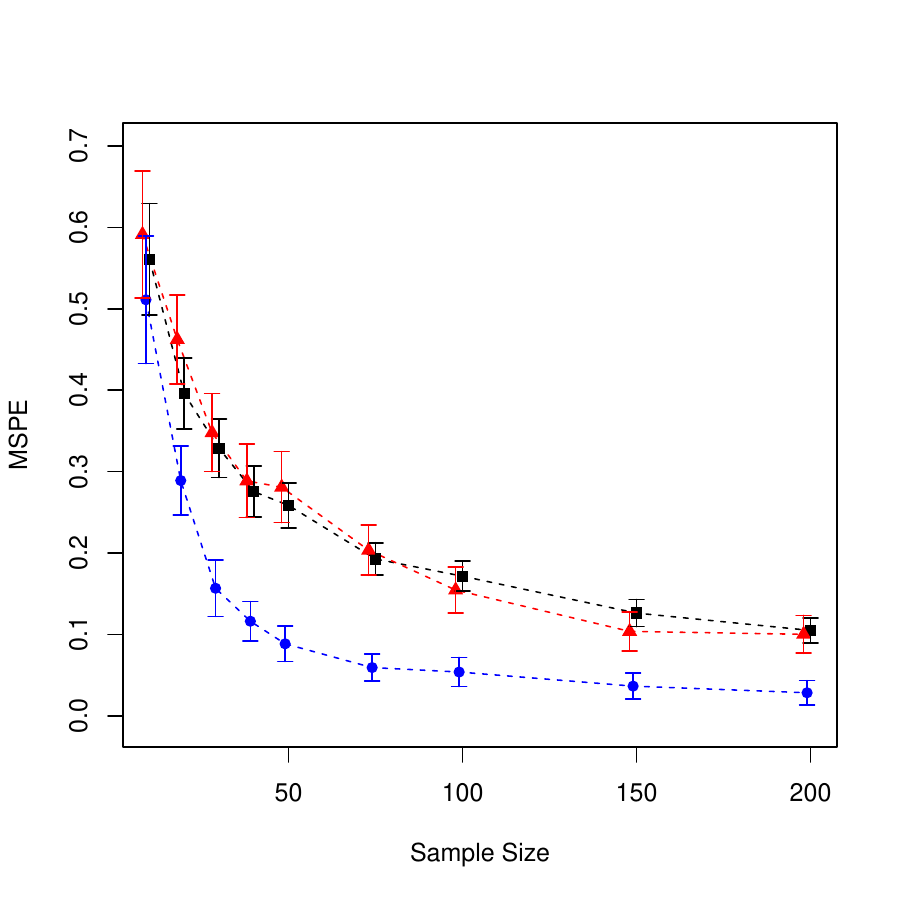} \tabularnewline 
		\centering (a) Scenario 1 & 
		\centering (b) Scenario 2 
	\end{tabular}

%	\begin{tabular}{c}
%		\includegraphics[scale=0.5]{figs/scenario_3_main_fig.pdf} \\
%				(c) Scenario 3
%	\end{tabular}
%	
	\caption{Mean squared prediction errors for the three estimators, under scenarios 1 and 2. The averages over 100 simulations are shown in black for $\hat{f}$, in blue for $\hat{f}^A_{\hat{G}}$, and in red for $\hat{f}^S_{\hat{G} }$. The error bars are Wald 95\% Confidence Intervals for the true mean squared predictive error.}
	\label{fig:mspes_1}
\end{figure}

\begin{figure}[h]
	\centering
	\begin{tabular}{ m{7cm} m{7cm} }
		\centering \includegraphics[scale = 0.48]{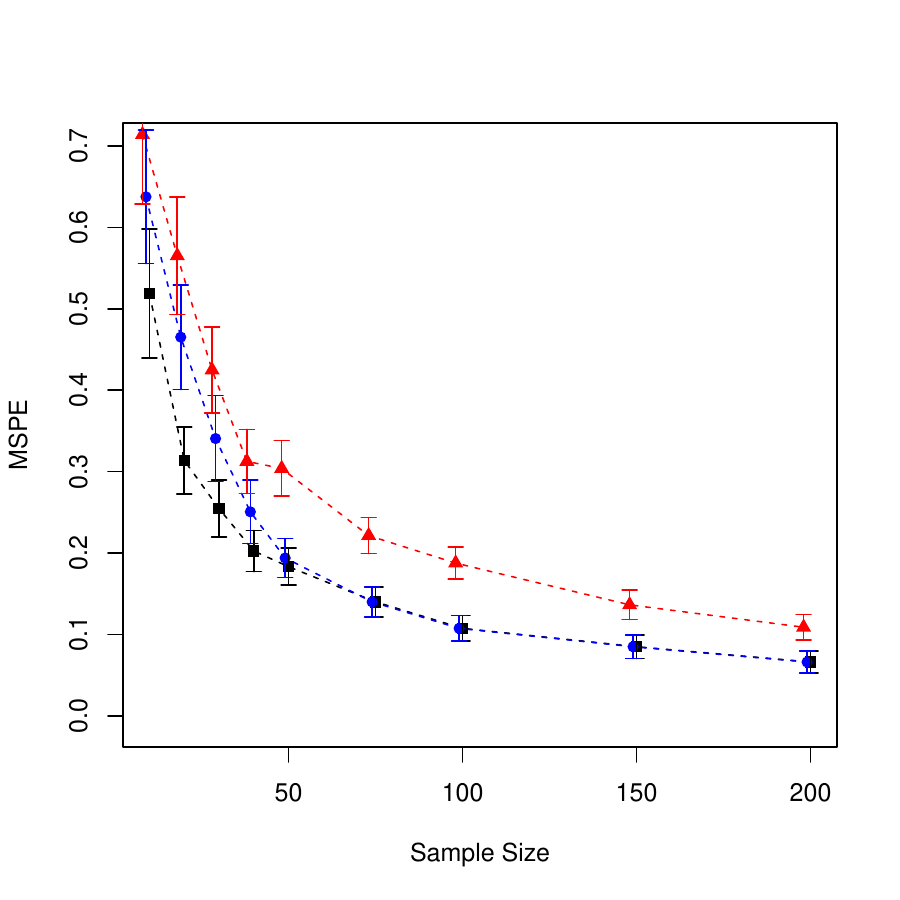} &
		\centering \includegraphics[scale = 0.48]{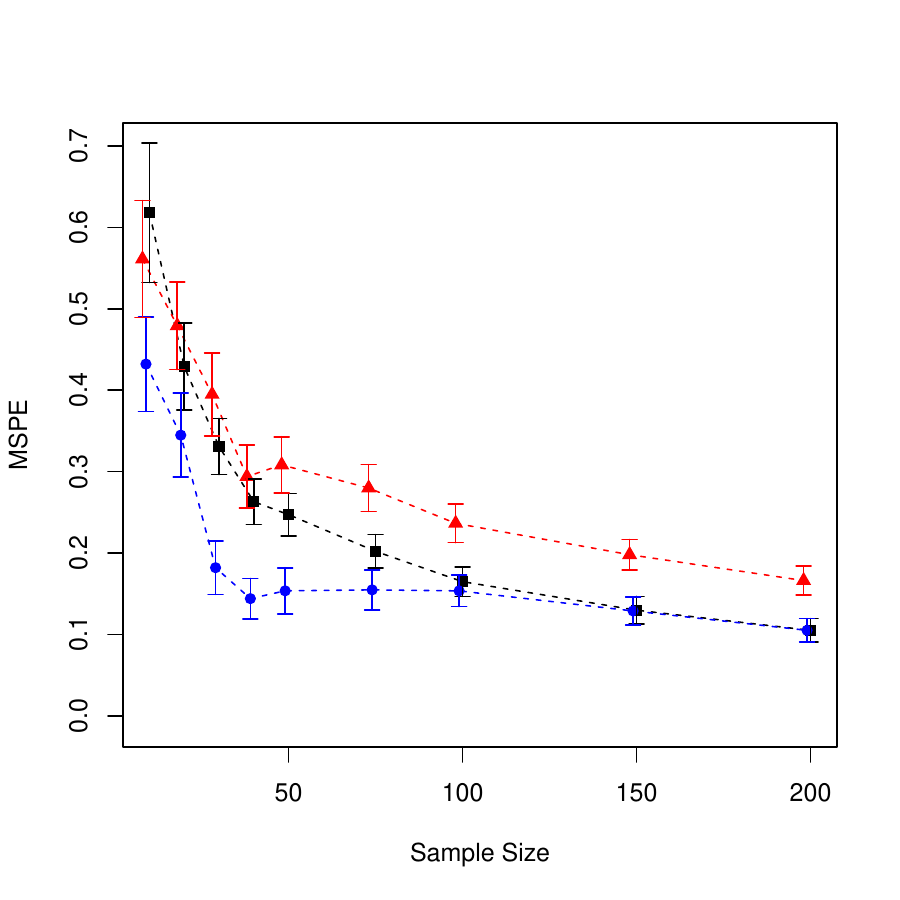} \tabularnewline 
		\centering (a) Scenario 3 & 
		\centering (b) Scenario 4 
	\end{tabular}

%	\begin{tabular}{c}
%		\includegraphics[scale=0.5]{figs/scenario_3_main_fig.pdf} \\
%				(c) Scenario 3
%	\end{tabular}
%	
	\caption{Mean squared prediction errors for the three estimators, under scenarios 3 and 4. The averages over 100 simulations are shown in black for $\hat{f}$, in blue for $\hat{f}^A_{\hat{G}}$, and in red for $\hat{f}^S_{\hat{G} }$. The error bars are Wald 95\% Confidence Intervals for the true mean squared predicted error.}
	\label{fig:mspes_2}
\end{figure}

\end{eg}

\section{Applications to Real Data}
\label{sec:app}

We now demonstrate these methods on two real world datasets. The first is from geophysics, where we seek to estimate the Earth's magnetic field. In this example, we show that our confidence region built from satellite readings contains the identity only. This supports the practice of modelling an asymmetric magnetic field, as in \cite{thebault2015international}. The second example explores the spatio-temoral relationship of sunspots. We show that in this example our estimated maximal symmetry of this relationship is rotational, and consistent with the phenomenon known as Sp\"{o}rer's law \citep{babcock1961topology} for most of the 12 available solar cycles. The other four cycles have $\hat{G}_\ell = I$, which indicates that there are longitudinal effects along with the latitudinal effects in Sp\"{o}rer's law.

\subsection{Satellite Based Magnetic Field Data}
\label{ssec:magnets}
The European Space Agency (ESA) launched the SWARM mission in 2013 to measure the earth's magnetic field and other related properties. This consists of three satellites orbiting between 450km and 530km above the earth's surface. These measurements complement the observations at ground based observatories to better model the field for use in navigation and geospatial modelling. This vector field $B: \RR^3 \rightarrow \RR^3$ is generated primarily by the rotation of the inner core, but is also affected by: magnetised material in the earth's crust; electrical currents in the upper atmosphere caused by solar winds; and the interaction of the sun's magnetic field. Without these additional effects it is reasonable to model the field as one would a bar magnet, with a rotational symmetry through the magnetic poles.   \\
\\
Given these effects can break this symmetry, the vector field $B$ on and above the earth's surface at time $t$ is typically modelled as the gradient of a potential field $B = \nabla V$ where: 
\begin{equation}
	V(r, \theta, \phi, t) = a \sum_{n = 1}^{13} \sum_{m = 0}^n \left( \frac{a}{r} \right)^{n+1} \Big( g_n^m(t) \cos( m \phi ) + h_n^m(t) \sin( m \phi) \Big) P_n^m( \cos(\theta) )
\end{equation}
in the International Geomagnetic Reference Field (IGRF) \citep{thebault2015international}. Here $a = 6,371.2$km is the earth's radius, $\theta$ is the geocentric co-latitude, $\phi$ is the east longitude. The functions $P_n^m$ as the Schmidt quasi-normalised associated Legendre functions of degree $n$ and order $m$. The Gauss coefficients $g_n^m$ and $h_n^m$ are then fitted using observational data from ground based stations. \\
\\
Using the data of the ESA SWARM satellites, accessed via the VirES client (\url{https://vires.services/}),  we can estimate the symmetries of the field. A full description of the data source is available in appendix \ref{app:vires}. We use 1000 measurements of the field intensity $\| B \|_2$ as the response variable chosen uniformly at random from the observations on the 25th of February 2023. These measurements are plotted in figure \ref{fig:magnetic_field}. \\

\begin{figure}[h]
	\centering
	\begin{tabular}{m{6cm} m{8cm} }
		\centering	\includegraphics[scale = 0.4]{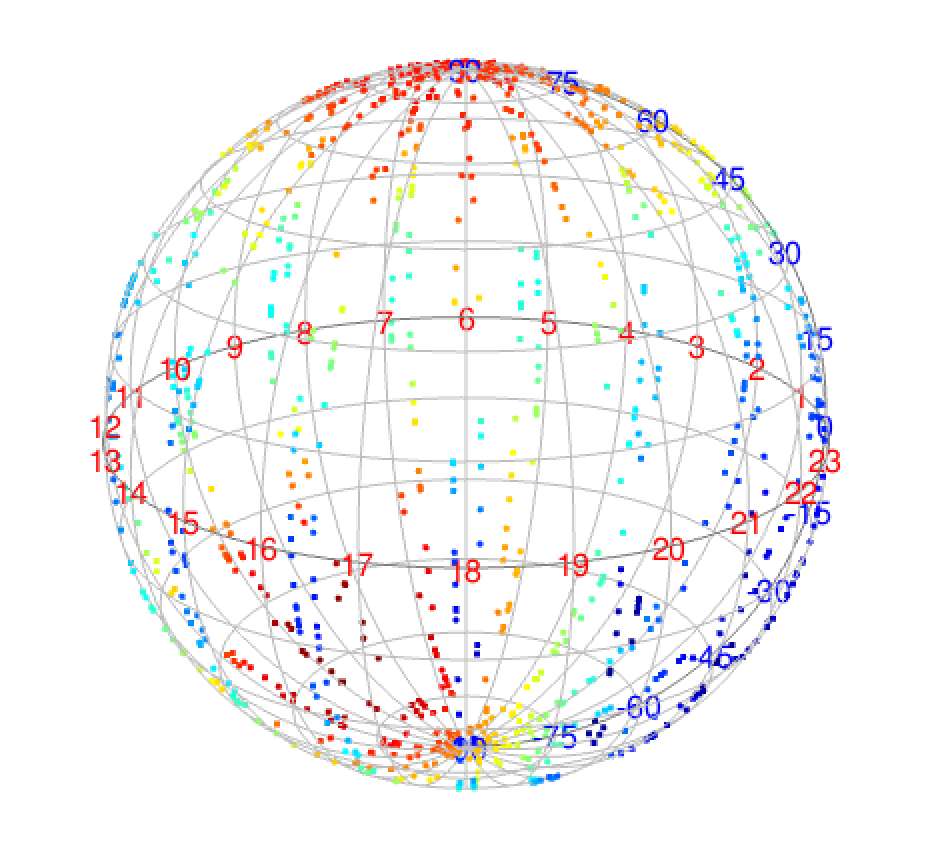} &
		\centering \includegraphics[scale = 0.5]{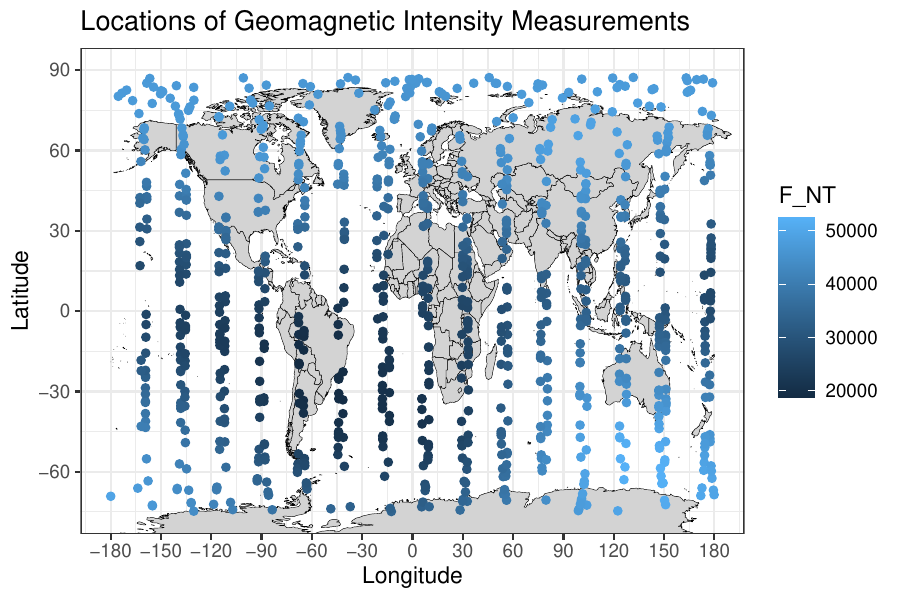} \tabularnewline
        \centering (a) Geo-spatial plot & 
        \centering (b) Mercator Projection
	\end{tabular}
	\caption{A visualisation of the measurements of the earth's magnetic field, using the 1000 samples on February 25, 2023. In subfigure (a), Red dots represent the high intensity measurements that occur towards the poles, and blue dots the regions of lowest intensity towards the equator. Red numbers indicate longitude and blue numbers represent latitude. In subfigure (b) intensity high magnetic field intensity is displayed in light blue and low intensity in dark blue.}
	\label{fig:magnetic_field}
\end{figure}

We have used the same symmetry estimation procedure of example \ref{eg:est_f_sims}, with six additional symmetries in the lattice base (corresponding to vertices of a further offset icosahedron). These axes of these $S^1_u$ rotational symmetries are given explicitly in appendix \ref{app:icosahedral_vectors}. Unlike example \ref{eg:est_f_sims}, the data is not distributed symmetrically around the sphere, and examinations of the $Q$-$Q$ plots in appendix \ref{app:qq_plots_magnets} and their associated $p$-values show that uniform distributions on the group are unsuitable. Instead we sample from each group by sampling (small) angles from $N(0, 0.05^2)$ in radians and constructing rotations around the group's axis of these angles. The $Q$-$Q$ plots and KS tests show that these distributions are suitable. \\
\\
All symmetries were rejected at the $\alpha = 0.05$ significance level, so the estimated maximal symmetry is just trivial group $I$. This supports the modelling approach of the IGRF and rejects the simpler bar magnet model of the earth's magnetic field.

\subsection{Sunspots and Sp\"{o}rer's Law}
\label{ssec:sunspots}

The sun's photosphere has small regions of low intensity, known as \textit{sunspots}. These sunspots occur due to complex interactions between aspects of the inner and outer solar magnetic fields, as described in \citet{babcock1961topology}. As noted in \citet{garcia2020optimal}, the distribution of the locations of these spots over each orbit is approximately symmetric around the rotational axis of the sun. However, this distribution depends on time within each solar cycle (the period between reversals of the polarity of the magnetic field) - the latitudes decrease over the course of the period. This phenomenon, known as \textit{Sp\"{o}rer's Law}, is explained by an intensification of the twisting that causes the sunspots that culminates in a polarity reversal.  \\
\\
Records of the sunspots are readily available in the R package \textsf{rotasym} \citep{garcia2020optimal}. This was originally collected in Debrecen Photoheliographic Data (DPD) sunspot catalogue \citep{baranyi2016line}. We have taken a uniform sample of 250 sun spots from the 21st solar cycle and taken a local average of time of occurrence of sunspots from the entire catalogue of that cycle. This data is represented as the vector $(x_i, y_i, z_i, t_i)$ with $x_i^2 + y_i^2 + z_i^2  = 1$. The locations of the spots in this orbit are shown in figure \ref{fig:sun_spot_plot}.   \\

\begin{figure}[h!]
	\centering
	\begin{tabular}{ m{7cm} m{6cm} }
		\centering \includegraphics[scale = 0.25]{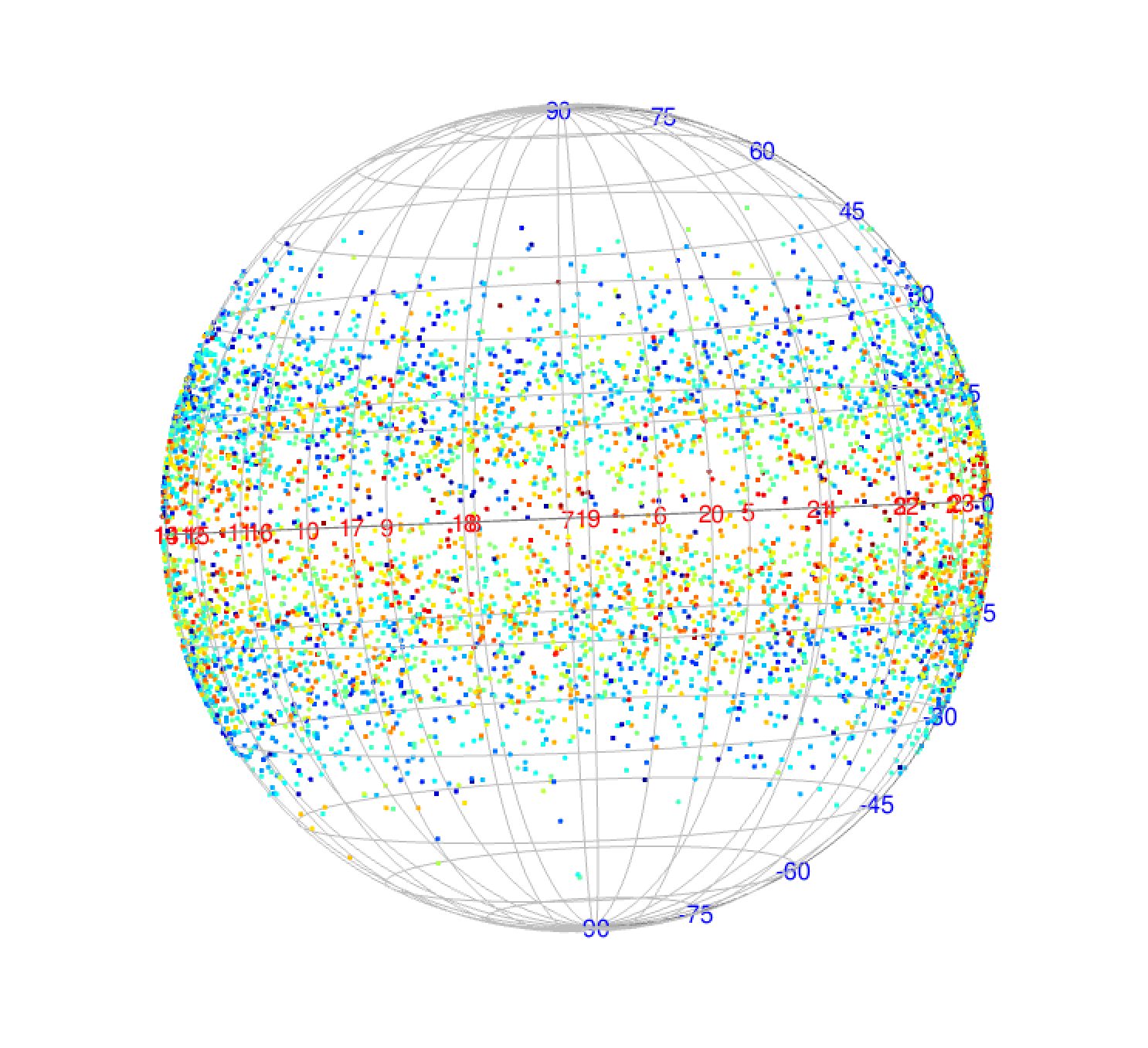} & 
		\centering \includegraphics[scale = 0.4]{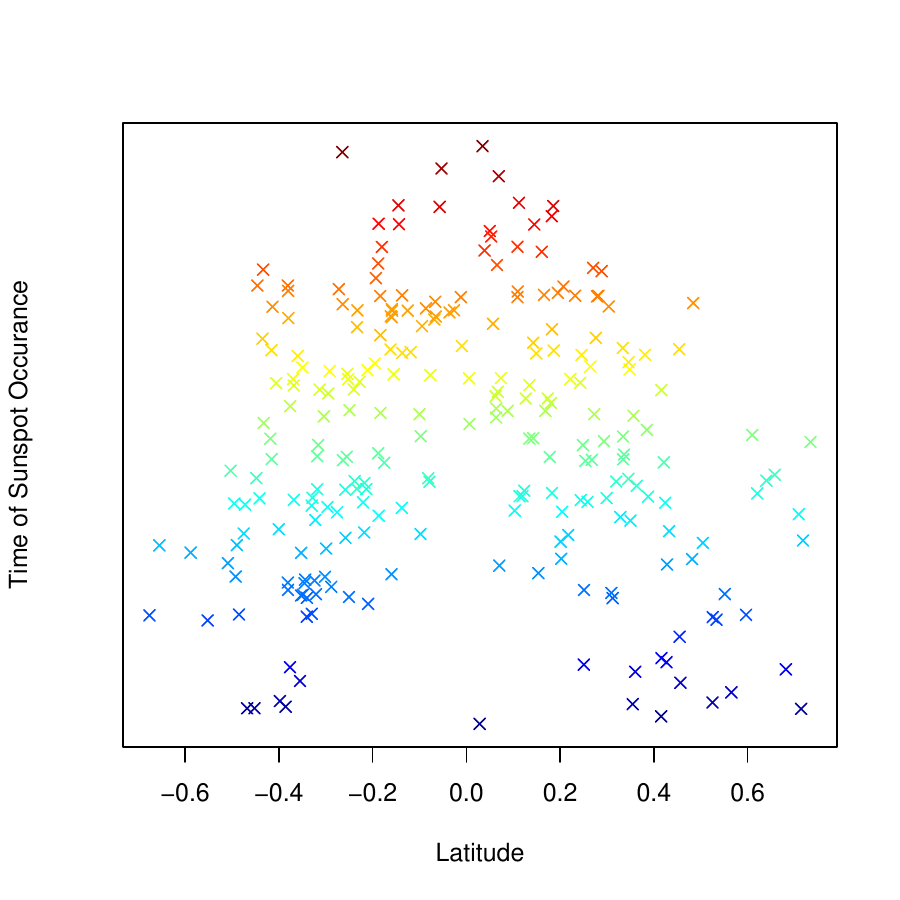} \tabularnewline
		\centering	(a) Helio-spatial Plot & 
		\centering (b) Plot of time vs. latitude
	\end{tabular}
	\caption{Locations of all sunspots during the 21st recorded solar cycle, from the \textsf{rotasym} R package. The colours represent the time with the cycle, with blue towards the start and red towards the end. The blue numbers in subfigure (a) indicate latitude, the red longitude in degrees. Also shown is a marginal of time against latitude in subfigure (b) for 250 points uniformly sampled from this cycle.}
	\label{fig:sun_spot_plot}
\end{figure}
We can examine the symmetries of this relationship between the locations $X = (x, y, z)$ of the observed spots and the mean time of occurrence conditioned on locations $f(x,y,z) = \EE( t \mid x,y,z)$. Whilst this is not the usual regression situation, where we would predict location given the time, it is mathematically equivalent. Specifically, if the spatial distribution conditioned on time is invariant to $G$, then the conditional mean $\EE( t \mid x,y,z)$ must be too\footnote{This is seen by considering first the marginal of the spatial distribution (which must be $G$-invariant), and then applying Bayes Law and the invariant of the conditional and the marginal as required.}, and so we can use the symmetry testing and estimation techniques to give the largest possible symmetry of this conditional distribution. \\
\\
We have conducted tests on the same lattice base as in the previous application except for $SO(3)$, so the lattice base is:
\begin{equation}
	\{ G_i \}_{i = 1}^{15} = \{ S_{u_1}, S_{u_2}, \dots, S_{x}, S_{y}, S_{z} \}
\end{equation} 
The vectors $u_i$ are given in appendix \ref{app:icosahedral_vectors}. We used the permutation variant with $B = 100$. Sampling from each group was done by sampling an angle from $N( 0, 0.05^2 )$ and constructing the rotation around the axis by this angle. As suggested in section \ref{sssec:choice_of_mu_g}, we checked $Q$-$Q$ plots and tested for differences between the original and transformed pairwise distance distributions using a Kolmogorov-Smirnov test, which showed that these sampling distributions are suitable. These are available for this cycle in appendix \ref{app:qq_plots_sunspots}. A uniform distribution on the angles was un-suitable because of the deviation from the original pairwise distance distribution for all groups except for $S_z^1$.   \\

%\begin{figure}[h!]
%	\centering
%	\begin{tabular}{ m{6cm} m{6cm} } 
%		\centering \includegraphics[scale = 0.43]{figs/Q_Q_plot_S_x} & 
%		\centering \includegraphics[scale = 0.43]{figs/Q_Q_plot_S_z} \tabularnewline
%		\centering (a) $g_i \iid S_x^1$, $p$-value = 0.9749 & 
%		\centering (b) $g_i \iid S_z^1$, $p$-value = 0.9448 
%	\end{tabular}
%	\caption{$Q$-$Q$ plots and Kolmogorov-Smirnov $p$-values between the distributions of the original pairwise distances $d_{\mathcal{X}}( X_i, X_j )$ against the transformed pairwise distances $d_{\mathcal{X}} ( g_i \cdot X_i, g_j\cdot X_j )$ for when: (a) $g_i \iid S^1_x$; and (b) $g_i \iid S^1_z$.}
%	\label{fig:sun_spot_QQ}
%\end{figure}

For the $21^{st}$ cycle all tests on the lattice base were rejected at the $\alpha_i = 0.05$ significance level except for the rotational symmetry around the $z$-axis, corresponding to a longitudinal symmetry. Thus the estimated maximal symmetry is $S_z^1$. This means that the data is consistent both with Sp\"{o}rer's Law (time dependence for latitudes) and a time independence of longitudes. This extends the results of \citet{garcia2020optimal} which says that the longitude distribution is invariant when averaged over the entire solar cycle. This experiment was repeated for each of the other solar cycles. The confidence regions for the maximal symmetry of $\EE( t \mid X)$ are shown in table \ref{tab:sun_spot_syms}. All except for four cycles had confidence regions that contained the symmetry $G_{15} = S^1_z$, and these four cycles rejected all symmetries in the lattice base. This means for these cycles, the longitudinal distribution changes over time, as well as the latitudinal changes known in Sp\"{o}rer's Law.     \\

\begin{table}
\caption{ \label{tab:sun_spot_syms} Confidence Regions for Each Solar Cycle}
% \hspace*{-32cm}
\fbox{% 
\begin{tabular}{ c | l }
\textbf{Solar Cycle} & \textbf{Confidence Region} \\ \hline
11 & $\{ I \}$ \\
12 & $\{ I \}$ \\
13 & $\{ I, G_{15} \}$ \\
14 & $\{ I, G_2, G_7, G_8, G_{15} \}$ \\
15 & $\{ I, G_{15} \}$ \\
16 & $\{ I, G_{15} \}$ \\
17 & $\{ I \}$ \\
18 & $\{ I \}$ \\
19 & $\{ I, G_7, G_{15} \}$ \\
20 & $\{ I, G_1, G_7, G_{15} \}$ \\
21 & $\{ I, G_{15} \}$ \\
22 & $\{ I, G_2, G_7, G_8, G_{15} \}$ \\
23 & $\{ I, G_1, G_2, G_3, G_7, G_8, G_9, G_{11}, G_{12}, G_{14}, G_{15} \}$ 
\end{tabular} }
\end{table}

\section{Discussion}
\label{sec:con}

We have given methods for finding and using symmetry in regression function estimation. We have shown that we can generalise the idea of variable subset selection into selection of non-linear combinations of the variables by placing them within the framework of subgroup lattices. We have developed methods for testing and used these to estimate the maximal invariant symmetry of a regression function. We have demonstrated the power and applicability of these tests across several numerical experiments, showing that when the symmetry exists our methods do detect this consistently and that this can significantly improve estimates of the regression function.  \\
\\
% The main limitation of this method is {\cred XXX}.
%The main limitations of these methods are the related to the existence of the symmetries of $f$. It may be the case that $f$ is close (in $L^2$ distance) to an invariant function but not one itself. In this case estimation of $G_{\max}(f)$ is at least as hard as distinguishing between the two possible functions. However, using the symmetry can still improve the estimate of $f$ by adding bias (smoothing from the symmetry) for much reduced variance. It would be an interesting question of further study to quantify these effects. \\
% \\
The main limitation of this method is that $\hat{G}_\ell$ is biased towards more symmetry because it relies on a hypothesis test. This is deliberate here - it gives a more consistent estimate of $\hat{G}$ for precisely the reason above. It is also much faster computationally than creating a model of $f$ first, and doesn't rely on doing a good job of the modelling, especially in the high dimensional case. However, it may be a detriment when using $\hat{G}$ to smooth $\hat{f}$. It is worth considering methods akin to forward selection and backwards elimination in the variable selection literature. One could, for example, look to minimise residual sum of squares of the smoothed models $\hat{f}_G$ over the lattice $K$ at each level as with forward selection.\\
\\
Another limitation here is the existence of the symmetries of $f$. It may be the case that $f$ is close (in $L^2$ distance) to an invariant function but not one itself. In this case estimation of $G_{\max}(f)$ is at least as hard as distinguishing between the two possible functions. However, using the symmetry can still improve the estimate of $f$ by adding bias (smoothing from the symmetry) for much reduced variance. We have shown that our estimate is mildly robust to such a situation in example \ref{eg:est_f_sims}. It would be an interesting question of further study to quantify these effects.\\
\\
Overall, given the increasing use of symmetry based methods in statistics and machine learning, the methods presented in this paper should allow the full exploitation, while also guarding against erroneous over-assumption, of symmetries in the data.

\bibliography{rss_bib.bib}
\bibliographystyle{rss}

\newpage
\appendix

\section{Proofs and further mathematical details}

\subsection{Proofs in Section \ref{sec:background} }

\begin{proof}[Proof of Proposition \ref{prop:closed_sublattice}]
We need only show that $K(\mathcal{G})$ contains supremums and intersections (with respect to the subgroup ordering) of arbitrary pairs $G,H \in K(\mathcal{G})$, and that these are given by the join and meet operations in the statement. If $G,H \in K(\mathcal{G} )$, then consider that $G \cap H$ is still the infimum as with the subgroup lattice. For the supremum, consider that $G, H \leq \overline{ \langle G, H \rangle } $ by definition, so this is an upper bound. If $G, H \leq A \in K(\mathcal{G} )$ then $\langle G, H \rangle \leq A$ and so $ \overline{ \langle G, H \rangle  } \leq \overline{A} = A $, and so $\overline{ \langle G, H \rangle }$ is in fact the least upper bound and we are done.
\end{proof}

\begin{proof}[Proof of Proposition \ref{prop:lattice_as_vees}]
	First, note that $\langle G_1, \dots G_\ell \rangle_{K(\mathcal{G})}$ must contain all groups of the specified form by the definition of a lattice. Thus we need to show that the set of such groups forms a lattice. Let $G = G_{i_1} \vee \cdots \vee  G_{i_m}$ and $H = G_{j_1} \vee \cdots \vee G_{j_b}$ be two groups in the set considered. The supremum $G \vee H$ is also of the form specified because of the commutativity, associativity, and idempotency of the operation $\vee$ \cite{schmidt2011subgroup}. \\
	\\
	For the infimum, first define $\mathcal{I} = \{ i \in [\ell] : G_i \leq G \cap H \}$, and let $A = \bigvee_{i \in \mathcal{I}} G_i$. See that $A \leq G \cap H$ by definition. Take any element $g \in G \cap H$, and take sequences $\{ g^G_n \}_{n \in \NN} \subseteq \langle G_{i_a} : a \in [m] \rangle$ and  $\{ g^H_n \}_{n \in \NN} \subseteq \langle G_{j_a} : a \in [a] \rangle$ both convergent to $g$. If there is a subsequence of either in $\langle G_{i} : i \in \mathcal{I} \rangle$ then $g \in A$. If not, then by the pigeon hole principle there must exist a subsequence of $\{ g^G_n \}$ contained in some $G_{i_a}$ for $i_a \notin \mathcal{I}$, and similarly for $\{ g^H_a \}$ in some $G_{j_{a'}}$ for $j_{a'} \notin \mathcal{I}$. But then $g \in G_{i_a} \cap G_{j_{a'}}$ as both are closed groups. However, since $i_a, j_{a'} \notin \mathcal{I}$ we know that $G_{i_a} \cap G_{j_{a'}} = I$ (as if one were contained in the other then the index of the smaller group would be in $\mathcal{I}$), and so $g = e \in A$. Therefore $A = G \cap H$ and so the set described contains all infimums and so is a lattice. 
\end{proof}

%We say that a subgroup $H \leq G$ is a \textbf{maximal invariant subgroup} of $f$ if: $f$ is invariant to $H$ and if $f$ is invariant to any other subgroup $H' \leq G$, then $H' \leq H$. The next proposition shows that every function $f$ has a unique maximal invariant subgroup.

The proof of Proposition \ref{prop:max_subgroup} result requires a few supporting lemmas.

%This results means that when we specify a function $f$ and search group $\mathcal{G}$, we can refer to $G_{\max}(f, \mathcal{G})$ or just $G_{\max}$. The following Lemma shows that the unique maximal invariant subgroup is closed when the action and regression function are continuous. This requires that $G$ is a topological group (defined in appendix \ref{sec:groups}).

\begin{Lem}
\label{lem:generate_inv}
	Let $\mathcal{G}$ be a group with subgroups $G,H \leq \mathcal{G}$.	Suppose that $f : \mathcal{X} \rightarrow \RR$ is $G$-invariant. If $H \leq G$ then $f$ is $H$-invariant. If $f$ is $H$ invariant then $f$ is $\langle G, H \rangle$-invariant.
\end{Lem}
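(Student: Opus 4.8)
The plan is to treat the two assertions separately: the first is essentially immediate from the definition, and the second rests on the normal-form description of the generated subgroup together with a telescoping argument.

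For the first assertion, recall that $G$-invariance of $f$ means $\PP(f(g\cdot X) = f(X)) = 1$ for \emph{every} $g \in G$. Since $H \leq G$, every element of $H$ is an element of $G$, so this identity holds in particular for every element of $H$, which is exactly the statement that $f$ is $H$-invariant. For the second assertion, I would first recall that, since $G$ and $H$ are subgroups and hence closed under inverses, the subgroup they generate has the normal form
\[
\langle G, H \rangle = \{\, s_1 s_2 \cdots s_k \ :\ k \geq 1,\ s_i \in G \cup H \,\}.
\]
Thus it suffices to fix such a word $w = s_1 \cdots s_k$ and prove $\PP(f(w\cdot X) = f(X)) = 1$. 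Set $x_k = x$ and $x_{j-1} = s_j \cdot x_j$ for $j = k, k-1, \dots, 1$, so that $x_0 = w \cdot x$ and $x_j = s_{j+1}\cdots s_k \cdot x$. On the event $E = \bigcap_{j=1}^{k} \{\, x : f(s_j \cdot x_j) = f(x_j) \,\}$ we obtain the telescoping chain $f(w\cdot x) = f(x_0) = f(x_1) = \cdots = f(x_k) = f(x)$. Each of the $k$ events in this finite intersection has probability one, because $s_j$ lies in $G$ or in $H$ and $f$ is invariant under whichever of those two groups contains it; hence $\PP(E) = 1$ and the claim follows.

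The only genuine subtlety — and the step I expect to require the most care — is purely measure-theoretic: the $j$th event above is the preimage of the full-measure set $\{\, y : f(s_j\cdot y) = f(y)\,\}$ under the map $x \mapsto x_j = s_{j+1}\cdots s_k \cdot x$, so to conclude it is itself of full measure one uses that $\mu_X$ is invariant under the action in the sense of Section~\ref{ssec:invs} (it is enough that $\mu_X$ carries null sets to null sets under the action); alternatively, one simply reads the invariance identities at the level of the everywhere-defined function $f$, in which case the telescoping chain is a pointwise set-theoretic identity and the probabilistic qualifier is vacuous. Either reading is enough for the way the lemma is used, namely in Propositions~\ref{prop:max_subgroup} and~\ref{prop:sub_lattice_restriction}, which are assertions about functions and their invariance subgroups. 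Beyond this bookkeeping the argument is entirely routine.
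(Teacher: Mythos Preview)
Your argument is correct and follows essentially the same route as the paper: both write an arbitrary element of $\langle G, H\rangle$ as a finite word in elements of $G\cup H$ and then peel off one factor at a time by induction/telescoping. Your treatment is in fact more careful than the paper's on the measure-theoretic point---the paper simply writes $f(a_1^{i_1}\cdot(a_2^{i_2}\cdots a_k^{i_k}\cdot X))=f(a_2^{i_2}\cdots a_k^{i_k}\cdot X)$ ``almost surely'' without addressing that this requires the null set for $a_1^{i_1}$ to remain null under the pushforward by $a_2^{i_2}\cdots a_k^{i_k}$, which is exactly the issue you flag and resolve.
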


\begin{proof}[Proof of Lemma \ref{lem:generate_inv}]
The first claim is clear: if $f(g \cdot X) = f(X)$ for all $g \in G$ almost surely then it is true for all $g \in H \subseteq G$. For the second, take $x \in \mathcal{X}$ and $a \in \langle G, H \rangle$. Express $a = a_1^{i_1} \cdots a_k^{i_k}$, where $a_j \in G$ or $a_j \in H$ and $i_j \in \NN$ for all $j \in [k]$. Then see that
\begin{equation}
f( a \cdot x ) = f ( a_1^{i_1} \cdot ( a_2^{i_2} \cdots a_k^{i_k} \cdot X ) ) = f( a_2^{i_2} \cdots a_k^{i_k} \cdot X )
\end{equation}
almost surely so the second claim follows by an induction on $k$.
\end{proof}

\begin{Lem}
\label{lem:lattice_inv}
Suppose that $f$ is not $G$-invariant for some subgroup $G \leq \mathcal{G}$. Then $f$ is not $H$-invariant for any $H$ with $G \leq H \leq \mathcal{G}$. If $f$ is $H$ invariant then $f$ is not $G'$ invariant for any $G'$ with $\langle G', H \rangle = \langle G, H \rangle$.
\end{Lem}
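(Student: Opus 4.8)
The plan is to derive both assertions as contrapositives of Lemma~\ref{lem:generate_inv}, so that no new machinery is required. For the first assertion, suppose toward a contradiction that $f$ were $H$-invariant for some $H$ with $G \leq H \leq \mathcal{G}$. Applying the first part of Lemma~\ref{lem:generate_inv} to the pair $G \leq H$ would then force $f$ to be $G$-invariant, contradicting the hypothesis that $f$ is not $G$-invariant. Hence $f$ is not $H$-invariant.

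For the second assertion, keep the standing hypothesis that $f$ is not $G$-invariant, assume in addition that $f$ is $H$-invariant, and suppose toward a contradiction that $f$ is also $G'$-invariant for some $G'$ with $\langle G', H \rangle = \langle G, H \rangle$. Applying the second part of Lemma~\ref{lem:generate_inv} to the pair $G', H$ (with $f$ invariant to both) shows that $f$ is $\langle G', H \rangle$-invariant, i.e.\ $f$ is $\langle G, H \rangle$-invariant. Since $G \leq \langle G, H \rangle$, the first part of Lemma~\ref{lem:generate_inv} now yields that $f$ is $G$-invariant, again contradicting the standing hypothesis. Therefore no such $G'$ exists.

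I do not expect a genuine obstacle here; the only thing that needs care is bookkeeping of which invariance hypotheses are in force — in particular that the assumption ``$f$ is not $G$-invariant'' from the first sentence of the statement remains available when proving the second assertion, and that the identity $\langle G', H\rangle = \langle G, H\rangle$ is precisely what lets invariance be transported from $\langle G, H\rangle$ back down to the subgroup $G$. If one preferred to avoid routing through Lemma~\ref{lem:generate_inv}, one could argue directly: since $f$ is not $G$-invariant there is a positive-probability event on which $f(g\cdot x)\neq f(x)$ for some $g\in G$; as $g \in G \leq \langle G,H\rangle = \langle G',H\rangle$, write $g$ as a word in elements of $G'$ and $H$ and telescope using $G'$- and $H$-invariance exactly as in the proof of Lemma~\ref{lem:generate_inv} to reach a contradiction. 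The contrapositive route via Lemma~\ref{lem:generate_inv} is cleaner and is the one I would present.
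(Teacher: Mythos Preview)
Your proof is correct and follows essentially the same route as the paper: both parts are obtained as contrapositives of Lemma~\ref{lem:generate_inv}, with the second part using $G'$- and $H$-invariance to get $\langle G',H\rangle=\langle G,H\rangle$-invariance and then descending to $G$. Your write-up is in fact more explicit than the paper's one-line version, but the argument is the same.
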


\begin{proof}[Proof of Lemma \ref{lem:lattice_inv}]
	The first result is just the contraposition of Lemma \ref{lem:generate_inv}. The second part also follows as if $f$ were $G'$ invariant, then it would be $ G \leq \langle G' , H \rangle = \langle G, H \rangle $ invariant.
\end{proof}

We can now prove Proposition \ref{prop:max_subgroup}.
\begin{proof}[Proof of Proposition \ref{prop:max_subgroup}]
	Let $G_{\max} = \langle H \leq G : f \text{ is } H \text{-invariant} \rangle$, noting that $f$ is $I$-invariant and so there is at least one such $H$. Then the previous lemma gives the first condition of maximal invariance, and the second is trivial and implies also uniqueness.
\end{proof}

\begin{proof}[Proof of Proposition \ref{prop:rules}]
	These rules are simply a restating of the Lemmas \ref{lem:generate_inv} and \ref{lem:lattice_inv}.
\end{proof}

\begin{proof}[Proof of Proposition \ref{prop:sub_lattice_restriction}]
	This follows from exactly the same reasoning as Proposition \ref{prop:max_subgroup}, except here we need only consider finitely many groups $H$ and have to take a closure operator: i.e.
	\begin{equation}
		A = \overline{ \langle H \in K :  f \text{ is } H \text{-invariant} \rangle }
	\end{equation}
	Since $I \in K$, we know $A \geq I$. Since $K$ is a finite lattice, $A$ must also be in $K$. Lastly note that Lemma \ref{lem:generate_inv} implies that $f$ is $A$-invariant and so $A \leq \Gmax$.
\end{proof}

%When $G$ is a topological group acting continuously on $\mathcal{X}$ we can have further structure to $\Gmax$ - specifically it must be closed. This justifies working over only the lattice of closed subgroups.

\begin{Lem}
	\label{lem:closure_invariance}
	Let $G$ be a topological group. Suppose that $f : \mathcal{X} \rightarrow \RR$ is continuous and $\overline{G}$ acts continuously on $\mathcal{X}$. Then $f$ is $G$-invariant if and only if $f$ is $\overline{G}$-invariant.
\end{Lem}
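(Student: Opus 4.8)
The plan is as follows. One implication is immediate and uses nothing about continuity: since $G \leq \overline{G}$, if $f$ is $\overline{G}$-invariant then it is $G$-invariant by the first part of Lemma~\ref{lem:generate_inv}. So all the work is in the forward direction, where the point is that continuity of $f$ (together with continuity of the action) upgrades $G$-invariance to $\overline{G}$-invariance.

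First I would reduce the almost-sure statement to a pointwise one on the support of $\mu_X$. For each fixed $g \in G$ the map $x \mapsto f(g\cdot x) - f(x)$ is continuous, being a composition of $f$ with $x \mapsto g \cdot x$ (continuous because the action of $\overline{G}$, hence of $G$, is continuous), so the set $C_g := \{x \in \mathcal{X} : f(g\cdot x) = f(x)\}$ is closed. By hypothesis $\mu_X(C_g) = \PP(f(g\cdot X) = f(X)) = 1$, so $C_g$ is a closed set of full measure and therefore contains $\mathrm{supp}(\mu_X)$, the smallest closed set of full measure. Intersecting over all $g \in G$ yields $f(g\cdot x) = f(x)$ for every $x \in \mathrm{supp}(\mu_X)$ and every $g \in G$.

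Next I would pass to the closure by continuity and a net argument. Fix $h \in \overline{G}$ and $x \in \mathrm{supp}(\mu_X)$, and choose a net $(g_\lambda)$ in $G$ with $g_\lambda \to h$. Continuity of the action gives $g_\lambda \cdot x \to h\cdot x$, and continuity of $f$ then gives $f(g_\lambda \cdot x) \to f(h\cdot x)$; but $f(g_\lambda \cdot x) = f(x)$ for every $\lambda$ by the previous step, so $f(h\cdot x) = f(x)$. Hence $\{x : f(h\cdot x) = f(x)\} \supseteq \mathrm{supp}(\mu_X)$, which has full measure, so $\PP(f(h\cdot X) = f(X)) = 1$; as $h \in \overline{G}$ was arbitrary, $f$ is $\overline{G}$-invariant.

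The main obstacle, and the only non-formal point, is the claim that a closed set of full $\mu_X$-measure contains $\mathrm{supp}(\mu_X)$, equivalently that $\mathrm{supp}(\mu_X)$ itself has full measure; I would state this explicitly. It holds whenever $\mathcal{X}$ is second countable (in particular separable metric, the standing setting), since then the complement of the support is a countable union of open null sets. If one prefers to avoid the support, the same net argument shows directly that $\{x : f(h\cdot x) \neq f(x)\} \subseteq \bigcup_{g\in G}\{x : f(g\cdot x)\neq f(x)\}$, a union of open $\mu_X$-null sets, which is again null by second countability. The only other care needed is the use of nets rather than sequences: if $\overline{G}$ fails to be first countable then sequences do not suffice to probe the closure, but the net version of the argument above goes through verbatim.
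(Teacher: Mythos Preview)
Your argument is correct and follows the same net-plus-continuity route as the paper: pick $h\in\overline{G}$, approximate by a net $(g_\lambda)$ in $G$, and use continuity of the action and of $f$ to pass $f(g_\lambda\cdot x)=f(x)$ to $f(h\cdot x)=f(x)$. The paper's proof simply writes the chain $f(g\cdot X)=f((\lim g_\bullet)\cdot X)=f(\lim g_\bullet\cdot X)=\lim f(g_\bullet\cdot X)=f(X)$ ``almost surely'' without isolating the support; your reduction to a pointwise identity on $\mathrm{supp}(\mu_X)$ (via the closedness of each $C_g$ and second countability) is a genuine improvement in rigour, since it lets the net limit be taken at a fixed $x$ rather than tacitly intersecting uncountably many full-measure sets.
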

\begin{proof}[Proof of Lemma \ref{lem:closure_invariance}]
	Since $G \subseteq \overline{G}$ the forwards direction is clear. Suppose that $f$ is either weakly or strongly $G$-invariant. Take $g \in \overline{G}$ and a net $g_\bullet$ in $G$ converging to $g$, and note that
	\begin{equation}
		f(g \cdot X) = f( ( \lim g_\bullet) \cdot X) = f ( \lim g_\bullet \cdot X ) = \lim f( g_\bullet \cdot X) = \lim f(X) = f(X)
	\end{equation}
	almost surely, giving the result.
\end{proof}

%\begin{proof}[Proof of Proposition \ref{prop:chains}]
%	Suppose that $G_i \neq I$ for all finite $i$. Moreover, none of these $G_i$ can even be finite as then it would descend to $I$ within $|G_i|$ steps. Thus all $G_i$ are infinite, and satisfy the property that any finite collection of $\{G_i\}$ has a non-empty intersection. As $\mathcal{G}$ is compact, the entire collection $\cap_{i} G_i$ must have a non-empty intersection, call this $G$. \\
%	\\
%	
%\end{proof}
%
%\begin{proof}[Proof of Proposition \ref{prop:finite_sl}]
%	Consider that the free lattice on the elements on $A$, written $FA$, contains at most $2^{|A|}$ elements, and the lattice $\langle A \rangle$ must be isomorphic to a sublattice of $FA$.
%\end{proof}

\subsection{Proofs in Section \ref{sec:lattice}}

\begin{proof}[Proof of Proposition \ref{prop:conf_region}]
	We check the conditions (R1) - (R3) in order. The first is trivial because each test is a measurable function. \\
	\\
	For the second condition, suppose that $G \in \hat{K}_\ell$ has subgroup $H$. Then if $G_i \leq H$, we must have $G_i \leq G$. Thus we must have $\mathrm{Test}_{\alpha_i}( \mathcal{D} , G_i ) = 1$ for all $G_i \leq H \leq G$, so $H \in \hat{K}_\ell$. \\
	\\
	Since $\hat{K}_\ell$ always contains the trivial group $I$ we need only consider the case where $G_{\max}(f, K) > I$. Let $\mathcal{I} = \{ i \in [\ell] : G_i \leq G_{\max}(f, K) \}$. If $G_{\max}( f, K)$ is not contained in $\hat{K}_\ell$, then at least one of the hypotheses $H^{(i)}_0$ is rejected for some $i \in \mathcal{I}$. Thus:
	\begin{align}
	 	\PP\big( G_{\max}(f, K) \not\in \hat{K}_{\ell} \big) &= \PP\big( \mathrm{Test}_{\alpha_i} ( \mathcal{D}, G_i ) = -1 \text{ for some } i \in \mathcal{I} \big) \\
	 	&\leq \sum_{ i \in \mathcal{I} } \PP \big( \mathrm{Test}_{\alpha_i} ( \mathcal{D}, G_i ) = -1 \big) \\
	 	&\leq \sum_{ i \in \mathcal{I} } \alpha a_i \leq \sum_{i = 1}^\ell \alpha_i \leq  \sum_{ i \in \NN} \alpha_i = \alpha
	\end{align}
	which gives condition (R3). 
\end{proof}

\begin{proof}[Proof of Theorem \ref{thm:cons_of_hat_G}]
	First note that $\PP_f( \mathrm{Test}_{\alpha_i} ( \mathcal{D}, G_i ) = 0 ) \rightarrow 1$ for all $G_i \not\leq G_{\max}(f, K_\ell)$ by the assumption of consistency of the test. Thus:
	\begin{align}
		\PP_f( \hat{G}_\ell \not\leq G_{\max}( f, K_\ell ) ) &\leq \PP_f ( \mathrm{Test}_{\alpha_i} ( \mathcal{D}, G_i ) = 1 \text{ for some } G_i \not\leq G_{\max}(f, K_\ell) ) \\
		&\leq \sum_{ G_i \not\leq G_{\max}(f, K_\ell) } 1 - \PP_f( \mathrm{Test}_{\alpha_i} ( \mathcal{D}, G_i ) = 0 ) \rightarrow 0
	\end{align} 
	giving the first statement. Now let $P_\alpha$ be a lower bound on the power of the consistent test used, which must be an increasing sequence (with respect to $n$) for each fixed $\alpha$ converging to $1$. We first construct the sequences of significance levels. Simply take some fixed starting points $\alpha_i(1)$ for $i \in \{ 1, \dots, \ell \}$ and set $\alpha_i(n) = \alpha_{i}(n-1)$ until $P_{\alpha_i(n)} \geq 1 - \alpha_i(n)/2$, then set $\alpha_{i}(n) = \alpha_i(n-1) / 2$. With this sequence of significance levels, we have the convergences:
	\begin{align}
		\PP_f( \mathrm{Test}_{\alpha_i(n)} ( \mathcal{D}, G_i ) = 1 ) &\rightarrow 1 & \forall G_i \leq G_{\max}(f, K_\ell) \\
		\PP_f( \mathrm{Test}_{\alpha_i(n)} ( \mathcal{D}, G_i ) = 0 ) &\rightarrow 1 & \forall G_i \not\leq G_{\max}(f, K_\ell)
	\end{align}
	From these it is clear that $\mathrm{TestResults}$ (as defined in algorithm \ref{algo:comp_K}) converges in probability to the vector of subgroups of $G_{\max}(f, K_\ell)$. \\
	\\
	This then implies that $\PP_f( G \in \hat{K}_\ell ) \rightarrow 0$ for each $G \notin \{ G \in K_\ell : G \leq G_{\max}(f, K_\ell) \}$, so we have the convergence in probability of $\hat{K}_\ell$ to $\{ G \in K_\ell : G \leq G_{\max}(f, K_\ell) \}$. Since this set has only one maximal element, $G_{\max}(f, K_\ell )$, we know that $\hat{G}_\ell$ converges in probability to this.
\end{proof}

\subsection{Proofs in Section \ref{sec:testing}}
\label{app:proof4}

\begin{proof}[Proof of Lemma \ref{lem:test_stat_control}]
Under the null hypothesis of $G$-invariance, we know that 
\begin{align}
\label{eq:like_bound}
	|Y_i - Y_j | &= | f(X_i) - f(X_j) + \epsilon_i - \epsilon_j | \\
		&= |f (g \cdot X_i) - f(X_j) + \epsilon_i - \epsilon_j | \\
		 &\leq |f(g \cdot X_i) - f(X_j)| + |\epsilon_i - \epsilon_j| \\
	&\leq V(g \cdot X_i, X_j) + |\epsilon_i - \epsilon_j|,
\end{align}
so we know that $D_{ij}^g = |Y_i - Y_j | - V( g \cdot X_i, X_j) \leq |\epsilon_i - \epsilon_j|$ for all $i,j \in \{1, \dots, n \}$ and $g \in G$. In particular, this is true for $j$ chosen such that $d( g \cdot X_i, X_j)$ is minimised. If we then count how often $D_{ij}^g$ is larger than some threshold $t \in \RR_{\geq 0
}$, we can bound the probability using the concentration of $| \epsilon_i - \epsilon_j |$. \\
\\
For fixed $m$, the $D_{I(j) J(j)}^{g_j}$ are asymptotically independent (because it is vanishingly unlikely that we sample the same $g_j \cdot X_{I(j)}$ or that two $g_j \cdot X_{I(j)}$ share a nearest neighbour). Thus under the null hypothesis, $N_t^g$ is eventually stochastically bounded by a $\mathrm{Binom}(m, p_t)$ variable.
\end{proof}

We require the following result to simplify the sampling distributions when testing for symmetry in section \ref{sec:testing}.
\begin{Lem}
	\label{lem:topo_gens_invariance}
	Suppose $f : \mathcal{X} \rightarrow \RR$ is continuous and $G$ acts continuously on $\mathcal{X}$.	If $G$ is topologically generated by a finite set $\{ g_1, \dots, g_k \}$ and $f$ is not $G$-invariant, then $f$ is not invariant to at least one group generated by a topological generator $\langle g_i \rangle$.
\end{Lem}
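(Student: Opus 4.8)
The plan is to prove the contrapositive: assuming that $f$ is $\langle g_i \rangle$-invariant for every $i \in \{1, \dots, k\}$, I will show that $f$ must then be $G$-invariant, contradicting the hypothesis.

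First I would propagate invariance algebraically, using only the second part of Lemma \ref{lem:generate_inv}. Since $f$ is $\langle g_1 \rangle$-invariant and $\langle g_2 \rangle$-invariant, that lemma (applied to the pair $\langle g_1 \rangle$ and $\langle g_2 \rangle$) gives that $f$ is $\langle g_1, g_2 \rangle$-invariant. A straightforward induction on $j$, at each step adjoining $\langle g_{j+1} \rangle$ to the already-obtained subgroup $\langle g_1, \dots, g_j \rangle$, yields that $f$ is $\langle g_1, \dots, g_k \rangle$-invariant. Note this step is purely group-theoretic and uses no topology.

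Next I would pass to the closure. By hypothesis $G$ is topologically generated by $\{g_1, \dots, g_k\}$, i.e. $G = \overline{\langle g_1, \dots, g_k \rangle}$, the function $f$ is continuous, and $G$ acts continuously on $\mathcal{X}$. Hence Lemma \ref{lem:closure_invariance}, applied with the (possibly non-closed) subgroup $\langle g_1, \dots, g_k \rangle$ playing the role of the lemma's group and $G$ its closure, shows that $f$ is $\overline{\langle g_1, \dots, g_k \rangle} = G$-invariant. This contradicts the assumption that $f$ is not $G$-invariant, which establishes the contrapositive and hence the lemma.

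I do not anticipate any real obstacle: the argument is essentially a combination of Lemmas \ref{lem:generate_inv} and \ref{lem:closure_invariance}. The only point needing care is the order of operations — one must first obtain invariance under the honest, possibly dense, subgroup $\langle g_1, \dots, g_k \rangle$ by the algebraic lemma, and only afterwards invoke the closure lemma, whose continuity hypotheses (on $f$ and on the action of the closure) are precisely what is assumed in the statement.
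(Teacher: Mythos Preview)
Your proof is correct and follows essentially the same contrapositive route as the paper: first establish invariance under the algebraic subgroup $\langle g_1,\dots,g_k\rangle$, then pass to its closure $G$ via continuity. The only difference is packaging --- you invoke Lemmas~\ref{lem:generate_inv} and~\ref{lem:closure_invariance} as black boxes, whereas the paper unrolls both steps inline (taking an approximating sequence $h_i\to g$ and using sequential continuity directly); the logical content is identical.
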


	\begin{proof}[Proof of Lemma \ref{lem:topo_gens_invariance}]
	Let $g \in G$ be such that $\PP( f( g \cdot X) = f(X) ) < 1$, and let $\{ h_i \}_{i = 1}^\infty$ be a sequence in $\langle g_1, \dots, g_k \rangle$ that approximates $g$. Suppose for the sake of contradiction that $f$ is invariant to every $\langle g_i \rangle$. Then $f( h_i \cdot X ) = f(X)$ almost surely (as $h_i$ is a finite product of only $g_i$ terms) for every $i \in \NN$. But then the continuity of $f$ (noting that $\mathcal{X}$ is a first countable metric space) implies
	\begin{equation}
		f(X) = \lim_{i \rightarrow \infty} f( h_i \cdot X) = f( \lim_{i \rightarrow \infty} h_i \cdot X) = f(g \cdot X)
	\end{equation}
	almost surely - a contradiction.
\end{proof}

The proof of proposition 4.1 relies on several supporting Lemmas. In the following we use the notation $X \overset{\PP}{\rightarrow} x$ to indicate that the random variable $X$ converges in probability to $x$.

\begin{Lem}
In the context of proposition \ref{prop:cons}, if the support of $\mu_X$ is dense in $\mathcal{X}$ then
\begin{equation}
d_\mathcal{X} ( g \cdot X_{I(1)}, X_{J(1)} ) \overset{\PP}{\rightarrow} 0	
\end{equation}
\end{Lem}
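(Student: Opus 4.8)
The plan is to show directly that the nearest-neighbour distance from the random point $g \cdot X_{I(1)}$ to the sample $\{X_j\}_{j=1}^n$ vanishes in probability, by conditioning on $g$ and on $X_{I(1)}$ and exploiting that density of $\mathrm{supp}(\mu_X)$ forces every open $\epsilon$-ball in $\mathcal{X}$ to carry positive $\mu_X$-mass.

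First I would fix $\epsilon > 0$ and observe that $d_\mathcal{X}(g \cdot X_{I(1)}, X_{J(1)}) = \min_{1 \le j \le n} d_\mathcal{X}(g \cdot X_{I(1)}, X_j)$, so it suffices to bound $\PP\big( \min_j d_\mathcal{X}(g \cdot X_{I(1)}, X_j) > \epsilon \big)$. Since $I(1)$ is drawn uniformly from $\{1, \dots, n\}$ independently of the data and of $g$, the pair $(g, X_{I(1)})$ has law $\mu_g \otimes \mu_X$ for every $n$, and conditional on $\{g = \gamma,\, I(1) = i,\, X_i = x\}$ the remaining observations $\{X_j : j \neq i\}$ are i.i.d.\ $\mu_X$ and independent of the now-fixed point $\gamma \cdot x$. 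Writing $B(z,\epsilon)$ for the open $\epsilon$-ball about $z$, the event in question is contained in the event that no $X_j$ with $j \neq i$ lies in $B(\gamma \cdot x, \epsilon)$, and averaging over the conditioning variables will give
\begin{equation}
\PP\big( \min_{1 \le j \le n} d_\mathcal{X}(g \cdot X_{I(1)}, X_j) > \epsilon \big) \;\leq\; \EE\Big[ \big( 1 - \mu_X( B(g \cdot X_{I(1)}, \epsilon) ) \big)^{n-1} \Big],
\end{equation}
the expectation being over $(g, X_{I(1)}) \sim \mu_g \otimes \mu_X$.

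The key step is then the observation that $\mu_X(B(z,\epsilon)) > 0$ for every $z \in \mathcal{X}$: density of $\mathrm{supp}(\mu_X)$ supplies a point $z' \in \mathrm{supp}(\mu_X) \cap B(z,\epsilon)$, and choosing $r > 0$ with $B(z',r) \subseteq B(z,\epsilon)$ gives $\mu_X(B(z,\epsilon)) \geq \mu_X(B(z',r)) > 0$ by definition of the support. Applying this with $z = g \cdot X_{I(1)} \in \mathcal{X}$, the integrand above is bounded by $1$ and tends to $0$ pointwise as $n \to \infty$, so dominated convergence yields $\PP\big( \min_j d_\mathcal{X}(g \cdot X_{I(1)}, X_j) > \epsilon \big) \to 0$, and since $\epsilon$ was arbitrary this is the claim.

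I expect the only real obstacle to be bookkeeping rather than depth: one must take care to peel $X_{I(1)}$ off from the rest of the sample so that genuine independence is available for the conditional bound, and to check that $z \mapsto \mu_X(B(z,\epsilon))$ is measurable --- which follows from its lower semicontinuity for open balls together with continuity of the action $\mathcal{G} \times \mathcal{X} \to \mathcal{X}$ --- so that the dominated convergence step is legitimate.
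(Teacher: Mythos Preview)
Your argument is correct and follows essentially the same route as the paper: condition on $(g, X_{I(1)})$, bound the nearest-neighbour exceedance probability by a binomial-type term $(1-\mu_X(B(g\cdot X_{I(1)},\epsilon)))^{n-1}$, and use density of the support to ensure the ball has positive mass. If anything, your treatment is more careful than the paper's, which uses $n$ rather than $n-1$ trials and asserts the final limit pointwise in the conditioning variables without explicitly invoking dominated convergence.
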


\begin{proof}
	For any $\eta \in \RR_{\geq 0 }$,  the probability that $d_\mathcal{X} ( g \cdot X_{I(1)}, X_{J(1)} ) \geq \eta$ given $I, g$ is equivalent to a binomial variable $K$ with $n$ trials and probability of success $p = \mu_X( \mathcal{X} \setminus B (g \cdot X_{I(1)}, \eta) )$ being $0$. Since the support of $\mu_X$ is dense in $\mathcal{X}$, $\mu_X( B (g \cdot X_{I(1)}, \eta) ) > 0$, and so
	\begin{equation}
		\PP( d_\mathcal{X} ( g \cdot X_{I(1)}, X_{J(1)} ) \geq \eta ) \leq \PP( K = 0 ) = (1 - p)^n
	\end{equation}
	 As $n \rightarrow \infty$, this clearly goes to $0$ for all $\eta$, and for all $I, g$.
\end{proof}

In the following let $F_X(x)$ be the distribution function of the real valued random variable $X$.

\begin{Lem}
	\label{lem:sym_props}
	If $X$ and $Y$ are independent real random variables, and $Y$ is symmetrically distributed around $0$ (so $F_Y(t) = 1 - F_Y(-t)$ for all $t \in \RR$) and admits a density $f_Y$ with respect to Lebesgue measure on $\RR$ that is decreasing in $|y|$. Then $\PP( |X + Y| \geq t) > \PP( |Y| \geq t )$.
\end{Lem}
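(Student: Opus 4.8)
The plan is to condition on $X$ and thereby reduce the statement to a deterministic fact about how the mass a symmetric unimodal density places on an interval of fixed length depends on where that interval sits. Writing $G(x) = \PP(|x + Y| \geq t)$, independence gives $\PP(|X + Y| \geq t) = \EE[G(X)]$, and since $G(0) = \PP(|Y| \geq t)$ it is enough to prove $G(x) \geq G(0)$ for every real $x$, with $G(x) > G(0)$ whenever $x \neq 0$; the strict conclusion of the lemma then follows provided $\PP(X \neq 0) > 0$ (which is the case in the application, where $X$ is a difference of function values that is nonzero with positive probability under $H_1$). Equivalently, setting $h(x) = 1 - G(x) = \PP(|x + Y| < t) = F_Y(t - x) - F_Y(-t - x)$ (the last equality using that $F_Y$ is continuous, as $Y$ has a density), I must show $h(x) \leq h(0)$.

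First I would note that $h$ is even: since $Y \overset{D}{=} -Y$ we have $x + Y \overset{D}{=} x - Y$, hence $|x + Y| \overset{D}{=} |{-x} + Y|$ and $h(x) = h(-x)$, so it suffices to treat $x \geq 0$. For such $x$, telescoping the distribution function gives
\begin{equation}
	h(0) - h(x) = \bigl[F_Y(t) - F_Y(t - x)\bigr] - \bigl[F_Y(-t) - F_Y(-t - x)\bigr] = \PP\bigl(Y \in (t - x, t]\bigr) - \PP\bigl(Y \in (-t - x, -t]\bigr).
\end{equation}
The translation $y \mapsto y - 2t$ carries $(t - x, t]$ onto $(-t - x, -t]$, so the second probability equals $\int_{t - x}^{t} f_Y(y - 2t)\,\dd y$. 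On the range $y \in (t - x, t]$ we have $y \leq t$, hence $(y - 2t)^2 - y^2 = 4t(t - y) \geq 0$, i.e. $|y - 2t| \geq |y|$; since $f_Y$ is decreasing in $|y|$ this forces $f_Y(y - 2t) \leq f_Y(y)$ pointwise, and therefore $\PP(Y \in (-t - x, -t]) \leq \PP(Y \in (t - x, t])$, i.e. $h(0) - h(x) \geq 0$, as required.

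For the strict inequality I would invoke that $f_Y$ is \emph{strictly} decreasing in $|y|$ (which holds, for instance, for the difference of two i.i.d.\ Gaussian errors, the running example of Section~\ref{sec:testing}): then $f_Y(y - 2t) < f_Y(y)$ for every $y$ in the nondegenerate interval $(t - x, t]$ when $x > 0$, so $h(0) - h(x) > 0$ for all $x \neq 0$, and integrating $G(\cdot)$ against the law of $X$ — using $G(x) > G(0)$ on $\{X \neq 0\}$, a set of positive probability, and $G(x) \geq G(0)$ elsewhere — yields $\EE[G(X)] > G(0)$. The one delicate point, and what I expect to be the main obstacle rather than any calculation, is this last step: with only a non-increasing density the inequality can collapse to an equality (take $Y$ uniform and $X$ bounded), so the strict form genuinely requires both strict unimodality of $f_Y$ and nondegeneracy of $X$, and the proof should make explicit that these are exactly the hypotheses being used.
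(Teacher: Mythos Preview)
Your argument is correct and follows the same overall line as the paper: condition on $X$, set up the function $h(x)=\PP(|x+Y|<t)$ (which, after using the symmetry $F_Y(-t-x)=1-F_Y(t+x)$, is exactly the paper's $\phi_t(x)-1$), and show that it is maximised at $x=0$. The paper carries out this last step by differentiating $\phi_t$ and checking the sign of $\phi_t'(x)=f_Y(t+x)-f_Y(t-x)$ on either side of $0$, whereas you bypass calculus and compare $h(0)-h(x)$ directly as a difference of probabilities on two intervals of the same length, using the translation $y\mapsto y-2t$ and the pointwise bound $f_Y(y-2t)\le f_Y(y)$ on $(t-x,t]$. Both routes are short; yours is slightly more elementary (no differentiability of $F_Y$ is invoked beyond the existence of a density) and makes the geometry of ``sliding an interval away from the mode loses mass'' more visible.

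Your closing remark about strictness is well taken and worth keeping: the lemma as stated asserts a strict inequality, but the paper's proof (like yours) only yields $\ge$ under the bare hypothesis that $f_Y$ is non-increasing in $|y|$; strictness needs either strict unimodality of $f_Y$ or, at a minimum, that $\PP(X\neq 0)>0$ together with some strict decrease of $f_Y$ on the relevant range. In the paper's application $Y=\epsilon_i-\epsilon_j$ with Gaussian $\epsilon$'s and $X=\phi(g,X)$ is nonzero with positive probability under $H_1$, so both extra hypotheses are available --- but you are right that they are being used silently.
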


\begin{proof}
	First note $\PP( X + Y \geq t) = \EE_X (\PP( Y \geq t - X \mid X) ) = 1 - \EE_X( F_Y(t- X ) )$. Thus
	\begin{align}
		\PP( |X + Y |\geq t ) &= \PP( X + Y \geq t) + \PP( X + Y \leq -t ) \\
			&= 2 - \EE_X( F_Y(t - X) + F_Y(t + X) )
	\end{align}
	Let $\phi_t(x) = F_Y(t - x) + F_Y(t + x)$. Since $Y$ is absolutely continuous (w.r.t. Lebesgue measure) with density function $f_Y$, we can see that $\phi_t$ has a critical point at $x = 0$. Moreover,	if $x > 0$ then
	\begin{equation}
		\phi'_t(x) = f_Y(t + x) - f_Y(t - x) = f_Y(t + x ) - f_Y(-t + x ) <  f_Y(t + x ) - f_Y(-t - x) = 0
	\end{equation}
	as $f_Y$ is decreasing in $|y|$. Similarly $\phi'_t(x) > 0$ if $x < 0$. Thus $x = 0$ is a maximum and so we have
	\begin{equation}
		\PP( |X + Y |\geq t ) = 2 - \EE_X( F_Y(t - X) + F_Y(t + X) ) \geq 2 - \EE_X( F_Y(t) + F_Y(t) ) = \PP( |Y| \geq t ).
	\end{equation}
	as required.
\end{proof}

\begin{Lem}
	\label{lem:binom_likelihood}
	If 	$X \sim Binom(n, p)$ and $Y \sim Binom(n, q)$ with $p < q$ , then $F_X( Y) \overset{p}{\rightarrow} 1$
\end{Lem}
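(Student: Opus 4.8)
The plan is to show $F_X(Y) \overset{p}{\rightarrow} 1$ by exhibiting a deterministic threshold that separates the bulk of the $Y$-distribution from the bulk of the $X$-distribution, and then applying the weak law of large numbers (in the form of Chebyshev's inequality) to both binomials simultaneously. Concretely, pick any $r$ with $p < r < q$ (for instance $r = (p+q)/2$). First I would observe that $X/n \overset{p}{\rightarrow} p$ and $Y/n \overset{p}{\rightarrow} q$, since $\mathrm{Var}(X/n) = p(1-p)/n \to 0$ and similarly for $Y/n$; hence $\PP(X/n < r) \to 1$ and $\PP(Y/n > r) \to 1$. On the event $\{X < rn\} \cap \{Y > rn\}$ we have $X < rn < Y$, so every value taken by $X$ on this event is strictly below $Y$; consequently $F_X(Y) = \PP(X \le Y \mid Y) \ge \PP(X \le rn)$, and more directly one checks that on this event the conditional distribution function evaluated at $Y$ is at least $\PP(X < rn)$.

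The cleaner way to organise the final step is to note that $F_X$ is a fixed (deterministic) nondecreasing function, so for any $\delta > 0$,
\begin{equation}
\PP\big( F_X(Y) \ge F_X(rn) \big) \ge \PP( Y \ge rn ) \to 1,
\end{equation}
while at the same time $F_X(rn) = \PP(X \le rn) \to 1$ because $X/n \overset{p}{\rightarrow} p < r$. Combining these: for any $\varepsilon > 0$, choose $n$ large enough that $F_X(rn) > 1 - \varepsilon$ and $\PP(Y \ge rn) > 1 - \varepsilon$; then $\PP(F_X(Y) > 1 - \varepsilon) > 1 - \varepsilon$. Since $F_X(Y) \le 1$ always, this is exactly the statement $F_X(Y) \overset{p}{\rightarrow} 1$.

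The only mild subtlety — and the single point that needs care rather than real difficulty — is keeping track of the fact that $X$ and $Y$ have the same sample size $n$ and that $F_X$ itself depends on $n$; one must not treat $F_X$ as a fixed function across the limit. The argument above handles this by using the $n$-dependent threshold $rn$ and controlling $F_X(rn) = \PP(X \le rn)$ directly via the variance bound, so no uniformity in $n$ is needed beyond the two elementary Chebyshev estimates. Independence of $X$ and $Y$ is used only to make sense of $F_X(Y)$ as $\PP(X \le Y \mid Y)$; everything else is a deterministic manipulation of distribution functions together with the law of large numbers for Bernoulli sums.
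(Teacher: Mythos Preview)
Your argument is correct and takes a genuinely different route from the paper's. The paper proves the lemma via the de Moivre--Laplace central limit theorem: it standardises $X$ to $\tilde{X} = (X - np)/\sqrt{np(1-p)} \overset{D}{\rightarrow} N(0,1)$, then rewrites $(Y - np)/\sqrt{np(1-p)}$ as a tight piece plus a deterministic term of order $\sqrt{n}(q-p)$ that diverges to $+\infty$, and concludes $F_X(Y) = F_{\tilde{X}}(\tilde{Y}) \overset{p}{\rightarrow} 1$. Your approach is more elementary: it bypasses the CLT entirely and uses only the weak law of large numbers (Chebyshev) together with monotonicity of $F_X$ and the deterministic separating threshold $rn$. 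The CLT route would more readily yield a rate of convergence if one were wanted, but for the bare convergence-in-probability statement your argument is shorter and requires less machinery. One small remark: independence of $X$ and $Y$ is not actually needed anywhere, since $F_X$ is simply the deterministic distribution function of $\mathrm{Binom}(n,p)$; your ``cleaner'' second paragraph already proceeds this way, so the parenthetical about independence at the end can be dropped.
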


\begin{proof}
	First note that $F_X(Y) = F_{ \tilde{X} } ( \tfrac{Y - n p}{ \sqrt{n p (1 - p) } } )$ where $\tilde{X} = \tfrac{X - n p}{ \sqrt{n p (1 - p) } }$, which has $\tilde{X} \overset{D}{\rightarrow} N(0, 1)$ by the de Moivere - Laplace theorem (DLT). Then we also set:
	\begin{equation}
		\tilde{Y} = \frac{Y - n p}{ \sqrt{n p (1 - p) } } = \left( \frac{Y - n q}{ \sqrt{n q (1 - q) } } + \sqrt{n} \frac{q -  p}{ \sqrt{ q (1 - q) } } \right) \frac{\sqrt{q (1 - q)} }{ \sqrt{ p (1 - p) } }.
	\end{equation}
	Again by DLT, the first term converges to $N(0, \tfrac{q (1 - q) }{p (1 - p} )$, but the second term diverges to $+\infty$ (as $q > p$). Thus $\tilde{Y} \overset{p}{\rightarrow} \infty$ and so clearly $F_{\tilde{X} }( \tilde{Y} ) \overset{p}{\rightarrow} 1$, as required.
\end{proof}

We can now return to the proof of Proposition \ref{prop:cons}.

\begin{proof}[Proof of Proposition \ref{prop:cons}]
	
Since the support of $\mu_X$ is dense in $\mathcal{X}$, Lemma 4 gives $d_\mathcal{X} ( g \cdot X_{I(j)}, X_{J(j)} ) \overset{p}{\rightarrow} 0$. Now consider
\begin{align}
D_{I(j)J(j)}^{g_j} &= | Y_{I(j)} - Y_{J(j)}| - V( g \cdot X_{I(j)}, X_{J(j)} ) \\
	&= | f(X_{I(j)} ) + \epsilon_{I(j)} - f(X_{J(j)}) - \epsilon_{J(j)} | - V( g \cdot X_{I(j)}, X_{J(j)} )
\end{align}
We have that $V( g \cdot X_{I(j)}, X_{J(j)} ) \rightarrow 0$ as the distance goes to $0$, and also 	$f(X_{J(j)}) - f( g \cdot X_{I(j)}) \overset{p}{\rightarrow} 0$. Thus with $X$ as a variable with the same law as each $X_i$ and $g$ with the same law as each $g_j$,
\begin{equation}
	D_{I(j)J(j)}^{g_j} \overset{D}{\rightarrow} | (f(X) + \epsilon_\ell) - f(g \cdot X) - \epsilon_k  | = | \phi(g, X) + \eta |	
\end{equation}
where $\phi(g, X) = f(X) - f ( g \cdot X )$ and $\eta = \epsilon_l - \epsilon_k$. Definitionally, this means
\begin{equation}
\PP( D_{I(j) J(j) }^{g_j} \geq t ) \rightarrow \PP( | \phi(g, X) + \eta | \geq t ) \end{equation}
Under $H_0$, $\phi(g,X) = 0$ almost surely, so this probability is given by $P_t^0 = \PP( |\eta| \geq t)$. Under $H_1$, $\phi$ must take non zero values with some positive probability (using the condition on the distribution $\mu_G$). Thus we can use Lemma \ref{lem:sym_props} to say that
\begin{equation}
	P_t^1 = \PP(  | \phi(g, X) + \eta | \geq t \mid H_1 ) > P_t^0
\end{equation}
Let $N$ be large enough that $\PP( D_{I(j) J(j) }^{g_j} \geq t ) > (P^1_t - P^0_t) / 2 > P_t^0$ for all $n > N$.  \\
\\
Now consider that under the alternative hypothesis, $N_t^g \overset{D}{\rightarrow} \mathrm{Binom}( m, \PP( D_{I(j) J(j) }^{g_j} \geq t )  )$, which is stochastically bounded from below by $A \sim \mathrm{Binom} (m, (P_t^1 - P_t^0)/2) $ for $n > N$. This gives, using lemma \ref{lem:binom_likelihood}, with $A' \sim \mathrm{Binom}( m, P_t^0 )$, that
\begin{equation}
	p_{val} = 1 - F_{ A' }( N_t^g ) \leq 1 - F_{ A } ( N_t^g ) \overset{p}{\rightarrow} 0
\end{equation}
as required.
\end{proof}

\section{Estimating Lipschitz Constant of $f$}
\label{app:estimate_lipschitz_const}

As noted in section \ref{ssec:asym_var_test}, if the bound $V$ is not assumed then we may need to estimate it from the data. In the same vein as this test a natural estimate for $V$ is $\hat{L} d(x,y)$, where $\hat{L}$ is the smallest positive $L$ such that $N^e_t \leq Q_{m, p_t}(\alpha)$ where $N_t^e$ is the same tests statistic before but where $\mu_g$ only has support on the identity $e$ and $Q_{m, p_t}(\alpha)$ is the $\alpha$-quantile of a binomial variable with $m$ trials and probability of success $p_t$. \\
\\
Of course, such an estimate is very much not independent of the asymmetric variation test so this could be estimated using a data-splitting approach if needed.

\section{Further Details on the Experiments and Applications}

\subsection{High dimensions for example \ref{eg:low_dim_G_sims}}
\label{app:higher_dimension_eg_61}

As in example \ref{eg:low_dim_G_sims}, we plot the confidence region excess $\hat{\mathcal{E}} ( \hat{K}_\ell )$ for dimensions 10 and 20 in figure \ref{fig:high_dimensional_plots}. We see that the asymmetric variation test is unable to reject any symmetries in these higher dimensional settings, but that the permutation variant can. This is remarkable because estimating anything about a nonparametric $f$ in dimension 10 is very hard with so few data points. 

\begin{figure}[h]
	\centering
	\begin{tabular}{ m{7cm} m{7cm} }
		\centering \includegraphics[scale=0.48]{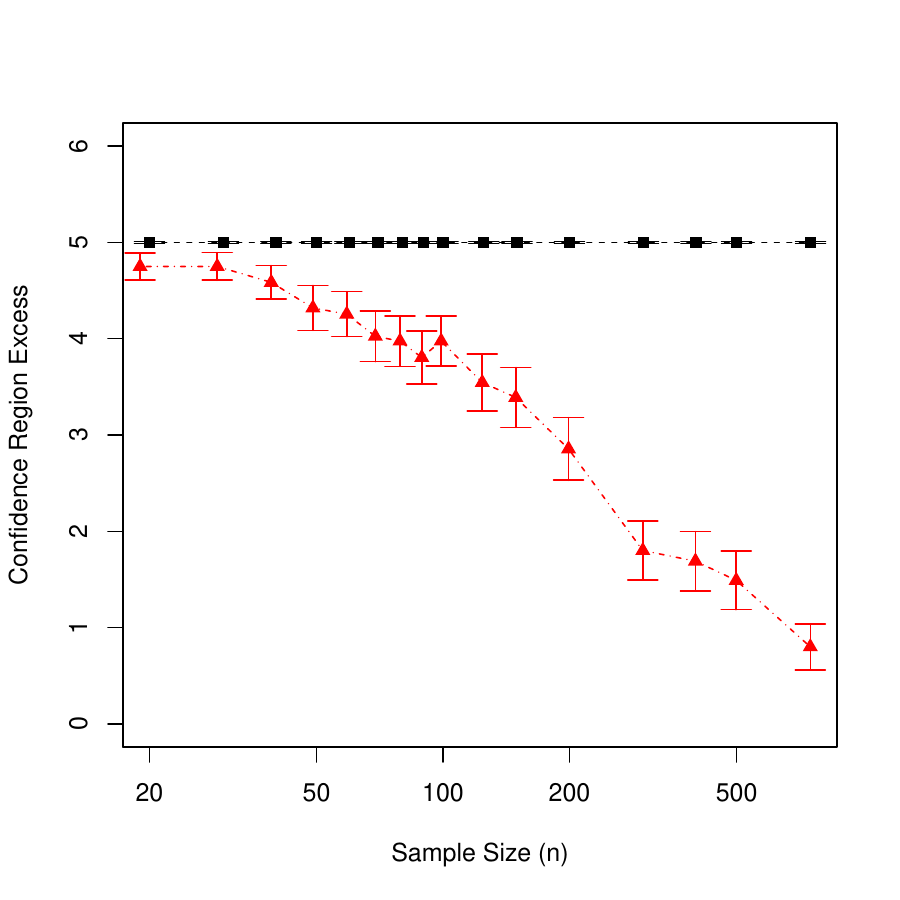} &
		\centering \includegraphics[scale=0.48]{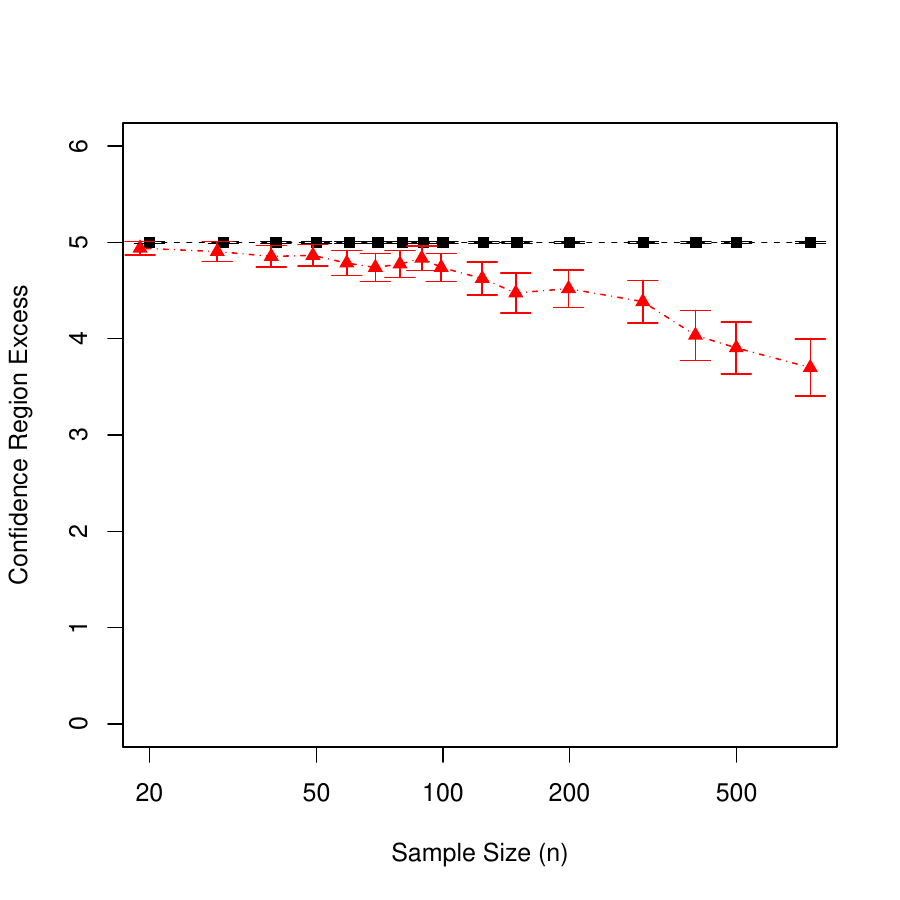} \tabularnewline
		\centering (a) $\mathcal{E}( \hat{K}_\ell )$ when $d = 10$ & 
		\centering (b) $\mathcal{E}( \hat{K}_\ell )$ when $d = 20$
	\end{tabular}
	\caption{A replica of figure \ref{fig:toy_eg_K_hat} for dimensions $d = 10$ and $d= 20$. }
	\label{fig:high_dimensional_plots}
\end{figure}

\subsection{Effect of using a fixed $\alpha_i$ in example \ref{eg:low_dim_G_sims} }
\label{app:correct_containment}

As mentioned in section \ref{ssec:choose_alpha_i} and in the variable selection literature \citep{heinze2018variable}, one might choose to conduct all tests at the same significance level, i.e., $\alpha_i = a$. We can use the experiments in  example $\ref{eg:est_f_sims}$ to examine how this choice affects the confidence region. In figure \ref{fig:correct_containments} we plot the proportion of the confidence regions that contain the true maximal symmetry $G_{\max}(f, K_\ell)$. Each test was conducted at the significance level $\alpha_i = 0.05$, which guarantees that $\hat{K}_\ell$ is a $0.7$-confidence region for $G_{\max}$. However, we also see that $G_{\max} \in \hat{K}_\ell$ much more frequently, and for $n \geq 30$, this is in more than $1 - \alpha_i$ of the simulations. 

\begin{figure}[h]
	\centering
	\begin{tabular}{m{7cm} m{7cm}}
		\centering \includegraphics[scale = 0.42]{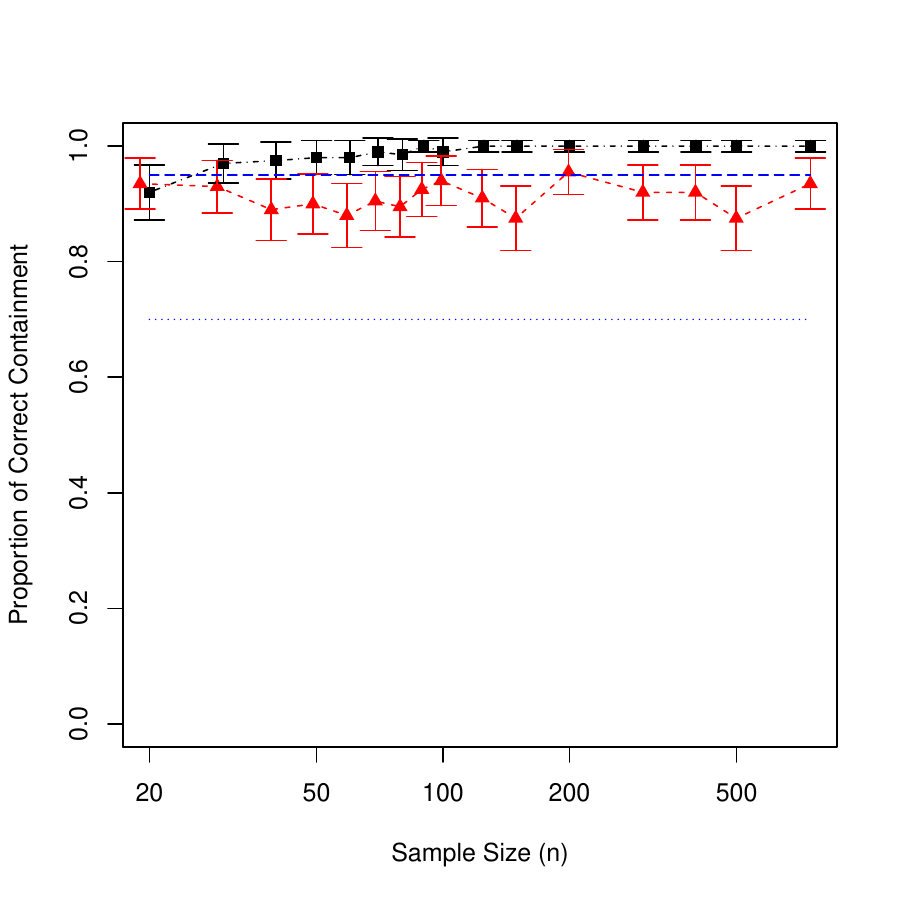} & 
		\centering \includegraphics[scale = 0.42]{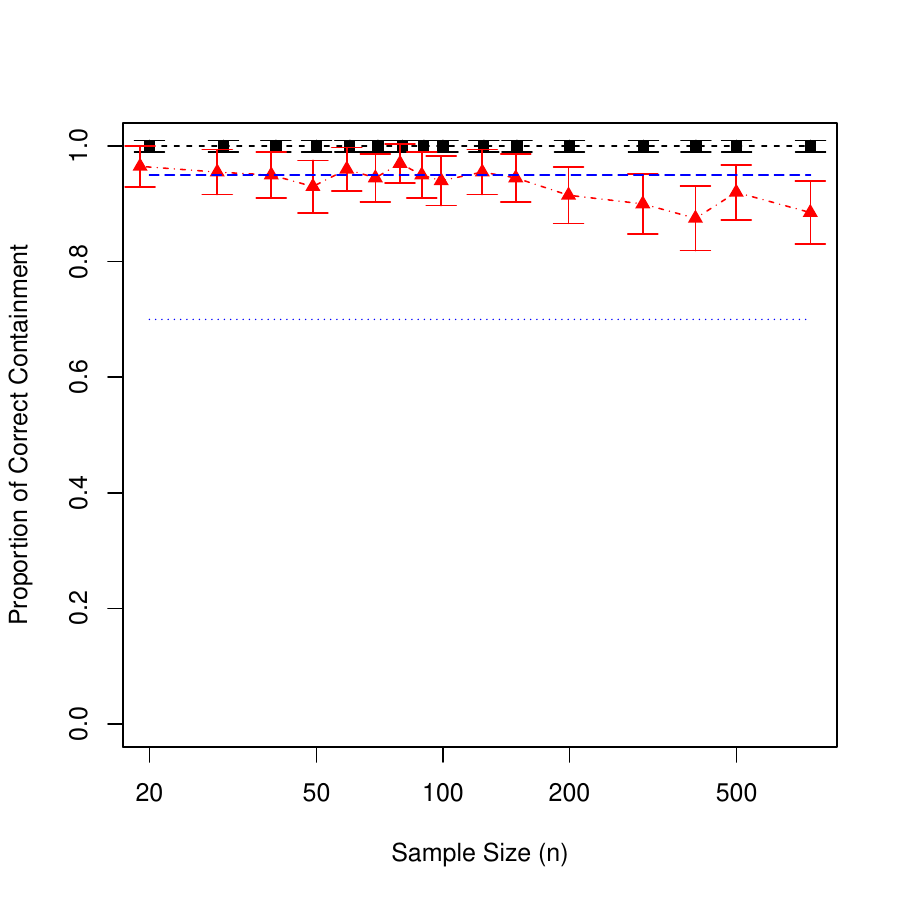} \tabularnewline
		\centering (a) Dimension 2 &
		\centering (b) Dimension 4
	\end{tabular}
	\caption{Plot of the frequency with which $G_{\max}(f, K_\ell) \in \hat{K}_\ell$ in the 100 simulations of example \ref{eg:low_dim_G_sims}. The black squares represent the results when using the asymmetric variation test and the red triangles for the permutation variant. The lower and upper blue lines represent the 0.7 and 0.95 confidence region error thresholds respectively.}
	\label{fig:correct_containments}
\end{figure}

\subsection{Effects of changing the lattice base}

\label{app:lattice_base_effects}

Changing the size of the lattice base will have several effects. As we include more tests to conduct we certainly increase the variance in which group we estimate, but potentially improve the estimation of the regression function (by reducing bias with a more suitable group, if there was not a true symmetry in the original lattice). We examine these effects empirically in the context of example \ref{eg:est_f_sims}. \\
\\
We take four sizes of the lattice base, the first only containing the groups $S^1_{u_i}$ for $i \leq 6$ (again the vectors are given in appendix \ref{app:icosahedral_vectors}). The second also contains the second set $S^1_{u_i}$ for $i \leq 12$. The third also adds $S^1_{z}$, and the fourth adds in $S^1_x$ and $S^1_y$. \\
\\
Scenarios 2 and 4 contain symmetries isomorphic to $S^1$, but around different axes (in the first case, the $z$-axis and the second an axis offset from the $z$-axis). Thus the third and fourth lattice bases contain either true symmetries or approximate symmetries, but the first two do not. \\
\\
We run the simulations as in example \ref{eg:est_f_sims}, and plot the results in figure \ref{fig:effects_of_varying_K}. We see, as expected, that smaller lattices result less variance in the MSPE, but that this effect is dwarfed by the effect of whether $K$ contains an approximate symmetry or not. This suggests that in practice we should typically include as many groups in $K$ as we are willing to compute tests for. 

\begin{figure}[h]
    \centering
    \begin{tabular}{m{7cm} m{7cm}}
		\centering \includegraphics[scale = 0.42]{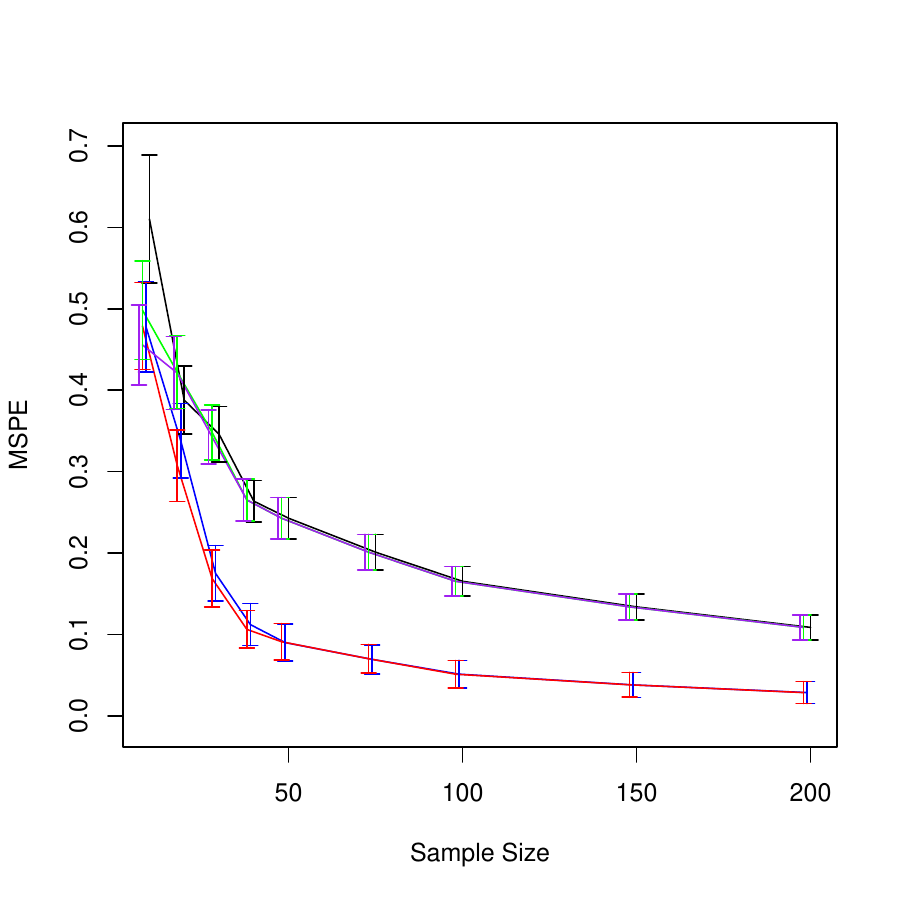} & 
		\centering \includegraphics[scale = 0.42]{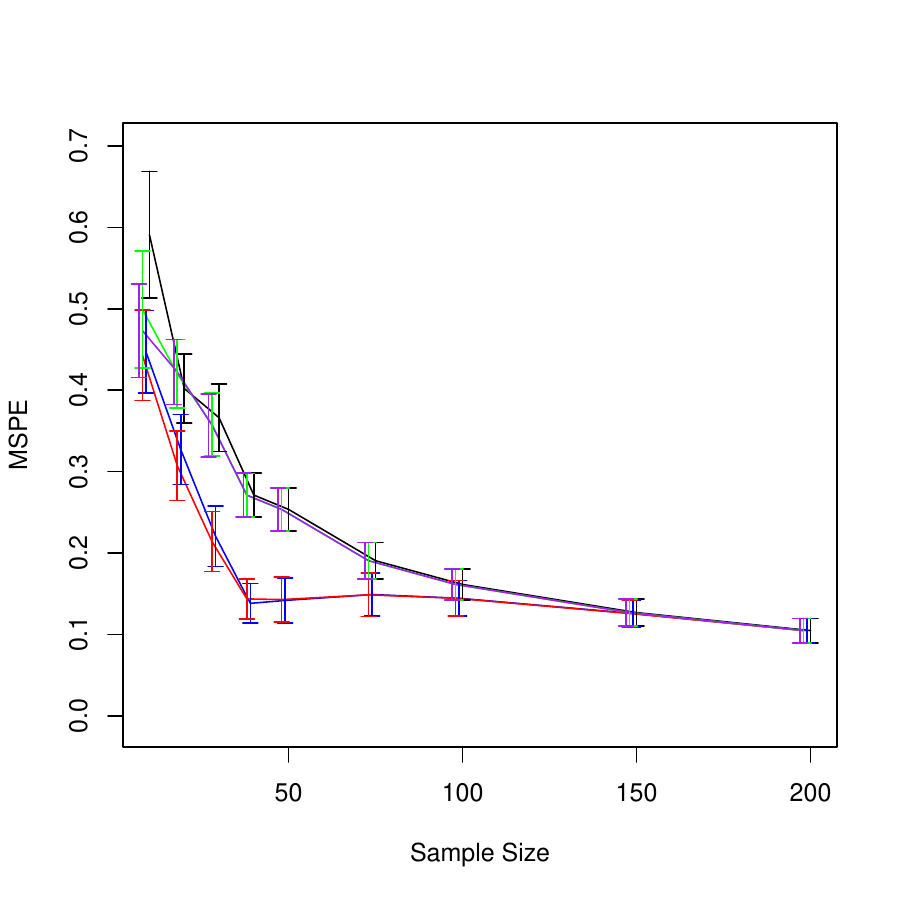} \tabularnewline
		\centering (a) Scenario 2 &
		\centering (b) Scenario 4
	\end{tabular}
    \caption{Means square predictive errors for various sizes of the lattice base, in scenarios 2 and 4. The black lines are the means errors for the baseline LCE across 100 samples. The purple corresponds to using the smallest lattice $K_{6}$, then green for $K_{12}$, red for $K_{13}$, and lastly blue for $K_{15}$. Also shown are 95\% Wald confidence intervals for the true means. }
    \label{fig:effects_of_varying_K}
\end{figure}

\subsection{Power and size of the hypothesis tests}
\label{app:test_results}

We now provide the results of each test conducted in example \ref{eg:low_dim_G_sims} in order to gauge the empirical size and power of the asymmetric variation test and the permutation variant. These are summarised in tables \ref{tab:prop_rejections_avt} and \ref{tab:prop_rejections_pv}. \\
\\
Recall that for each dimension $d$, the function $f_d$ is invariant to the groups $G_1 = \langle R_h \rangle$, $G_2 = \langle R_v \rangle$, and $G_3 = \langle R_{\pi} \rangle$ and is not invariant to $G_4 = \langle R_/ \rangle $, $G_5 = \langle R_\backslash \rangle $, or $G_6 = \langle R_{\pi / 2} \rangle$. 

\begin{table}
    % \centering
    \caption{ \label{tab:prop_rejections_avt} Proportions of Rejections for the asymmetric variation test in example 6.1.}
    % \hspace*{-77cm}
    \fbox{
    \begin{tabular}{c | c c |c c c c c c c c}
         \textbf{Dimension} & \textbf{Hypothesis} & \textbf{Group} & $n = $ \textbf{20} & \textbf{50} & \textbf{100} & \textbf{200} & \textbf{500}  \\ \hline
         \multirow{6}{*}{2} & \multirow{3}{*}{$H_0$} & $\langle R_h \rangle$ & 0.030 & 0.005 & 0  &  0  &  0\\
                              &                          & $\langle R_v \rangle$ & 0.030 & 0.005 & 0.005 & 0 & 0 \\
                              &                          & $\langle R_\pi \rangle$ & 0.030 & 0.010 & 0.005 & 0 & 0\\ \cline{2-8}
                              & \multirow{3}{*}{$H_1$} & $\langle R_/ \rangle$ & 0.400 & 0.810 & 0.975 &  1  & 1 \\
                              &                          & $\langle R_\backslash \rangle$ & 0.400 & 0.825 & 0.990 &   1 &   1 \\
                              &                          & $\langle R_{\pi / 2} \rangle$ &0.355 &0.785 &0.995   & 1   & 1 \\ \hline
        \multirow{6}{*}{4} & \multirow{3}{*}{$H_0$} & $\langle R_h \rangle$ & 0 & 0 & 0 & 0 & 0 \\
                              &                          & $\langle R_v \rangle$ & 0 & 0 & 0 & 0 & 0  \\
                              &                          & $\langle R_\pi \rangle$ & 0 & 0 & 0 & 0 & 0 \\ \cline{2-8}
                              & \multirow{3}{*}{$H_1$} & $\langle R_/ \rangle$ &  0.005 & 0.055 & 0.095 & 0.330 & 0.930\\
                              &                          & $\langle R_\backslash \rangle$ & 0.015 &0.010 &0.045 &0.270 &0.955\\
                              &                          & $\langle R_{\pi / 2} \rangle$ &0.015& 0.030 &0.060& 0.245& 0.870 \\ \hline 
        \multirow{6}{*}{10} & \multirow{3}{*}{$H_0$} & $\langle R_h \rangle$ & 0 & 0 & 0 & 0 & 0\\
                              &                          & $\langle R_v \rangle$ & 0 & 0 & 0 & 0 & 0 \\
                              &                          & $\langle R_\pi \rangle$ & 0 & 0 & 0 & 0 & 0\\  \cline{2-8}
                              & \multirow{3}{*}{$H_1$} & $\langle R_/ \rangle$ & 0 & 0 & 0 & 0 & 0\\
                              &                          & $\langle R_\backslash \rangle$ & 0 & 0 & 0 & 0 & 0\\
                              &                          & $\langle R_{\pi / 2} \rangle$ & 0 & 0 & 0 & 0 & 0
    \end{tabular} }
\end{table}

\begin{table}
    % \centering
    \caption{\label{tab:prop_rejections_pv} Proportions of Rejections for the permutation variant in example 6.1.}
    % \hspace*{-77cm}
    \fbox{ 
    \begin{tabular}{c | c c |c c c c c c c c}
         \textbf{Dimension} & \textbf{Hypothesis} & \textbf{Group} & $n = $ \textbf{20} & \textbf{50} & \textbf{100} & \textbf{200} & \textbf{500}  \\ \hline
         \multirow{6}{*}{2} & \multirow{3}{*}{$H_0$} & $\langle R_h \rangle$ & 0.020 & 0.020 &0.025& 0.025 &0.070\\
                              &                          & $\langle R_v \rangle$ & 0.015 &0.045 &0.030 &0.020 &0.030\\
                              &                          & $\langle R_\pi \rangle$ & 0.035 & 0.035& 0.035 &0.015 &0.060\\ \cline{2-8}
                              & \multirow{3}{*}{$H_1$} & $\langle R_/ \rangle$ & 0.585 &0.825 &0.835 &0.815 &0.830\\
                              &                          & $\langle R_\backslash \rangle$ & 0.580 &0.810 &0.805 &0.785 &0.845 \\
                              &                          & $\langle R_{\pi / 2} \rangle$ & 0.540 &0.810 &0.790& 0.770& 0.810 \\ \hline
        \multirow{6}{*}{4} & \multirow{3}{*}{$H_0$} & $\langle R_h \rangle$ &0.010 & 0.015& 0.015 &0.010& 0.030 \\
                              &                          & $\langle R_v \rangle$ & 0.010 &0.040& 0.035 &0.020 &0.045\\
                              &                          & $\langle R_\pi \rangle$ & 0.030 &0.040& 0.035 &0.055 &0.045\\ \cline{2-8}
                              & \multirow{3}{*}{$H_1$} & $\langle R_/ \rangle$ & 0.190& 0.610 &0.875 &0.955 &0.985\\
                              &                          & $\langle R_\backslash \rangle$ & 0.155 &0.605 &0.870 &0.950& 0.980\\
                              &                          & $\langle R_{\pi / 2} \rangle$ & 0.155 & 0.535 &0.875 &0.950 &0.975 \\ \hline
        \multirow{6}{*}{10} & \multirow{3}{*}{$H_0$} & $\langle R_h \rangle$ & 0.000 & 0.000 & 0.000&  0.005& 0.005\\
                              &                          & $\langle R_v \rangle$ & 0.000 & 0.005 &0.020 &0.020 &0.005\\
                              &                          & $\langle R_\pi \rangle$ & 0.000 &0.000& 0.000 &0.015 &0.010\\ \cline{2-8}
                              & \multirow{3}{*}{$H_1$} & $\langle R_/ \rangle$ & 0.040 & 0.080 &0.145 &0.275 &0.620\\
                              &                          & $\langle R_\backslash \rangle$ & 0.015 & 0.105 &0.140 &0.365 &0.620\\
                              &                          & $\langle R_{\pi / 2} \rangle$ & 0.015 &0.070 &0.130 &0.315 &0.600
    \end{tabular} }
\end{table}

\subsection{Icosahedral vertices}
\label{app:icosahedral_vectors}

In example \ref{eg:est_f_sims} our lattice base consisted of circle groups $S^1_{u_i}$ around axes corresponding to the vertices of a regular icosahedron (shown in figure \ref{fig:protiens}c). To be precise, these axes are given by the normalisation of:
\begin{align}
    u_1 &= (0, 1, \varphi) &  u_2 &= (0, 1, -\varphi) & u_3 &= (1, \varphi, 0) \\
    u_4 &= (1, -\varphi, 0) &  u_5 &= (\varphi, 0, 1) & u_6 &= (\varphi, 0, -1)
\end{align}
where $\varphi = \tfrac{1}{2}( 1 + \sqrt{5} )$ is the golden ratio. In the real world data applications (section \ref{sec:app}, we also tested for the circle groups around the additional six axes: 
\begin{align}
    u_7 &= (1, 0, \varphi) & u_8 &= (1, 0, -\varphi) & u_9 &= (\varphi, 1, 0) \\
    u_{10} &=  -\varphi, 1, 0) & u_{11} &= ( 0, \varphi, 1) & u_{12} &= ( 0, \varphi, -1)
\end{align}

\subsection{Downloading Magnetic Field Data from VirES}
\label{app:vires}

The data in the file ``SWARM\_DATA.CSV'' used in section \ref{ssec:magnets} was downloaded from the European Space Agency's VirES client (\url{https://vires.services/}). The variable names as in our csv file correspond to the variable names that can be selected using the custom download parameters, i.e., $F_{NT}$ is the magnetic field intensity and $F_{error}$ is their reported standard error on this measurement. We have downloaded only the data from satellite ``A'' on the 25th of February 2023.

\subsection{Q-Q plots in section \ref{ssec:magnets}}
\label{app:qq_plots_magnets}
	
	Here we check the appropriateness of the sampling distributions for the real data example in section 
	\ref{ssec:magnets}. As suggested in section \ref{sssec:choice_of_mu_g}, we examine the $Q$-$Q$ plots of the marginal distributions of $d( g_i \cdot X_i , g_j \cdot X_j)$ against the original $d( X_i, X_j)$ where the angle of rotation is sampled from $N(0, 0.05^2)$ and uniformly on $[0, 2 \pi]$. These plots are supplemented by the $p$-values of Kolmogorov-Smirnoff tests for the equality of these distributions. We see that the low angle distributions in figure \ref{ fig:qq_plots_all_low_angle_magnets } appear suitable with no measurable difference in the pairwise difference distributions, whilst the uniform angles in figure \ref{ fig:qq_plots_all_uniform_angle_magnets } are rejected for all groups except for $S^1_z$. In these figures the ``Group Number'' corresponds to the index of $G_i$ in the lattice base. 
	
\begin{figure}[h]
	\centering
	\includegraphics[scale=0.5]{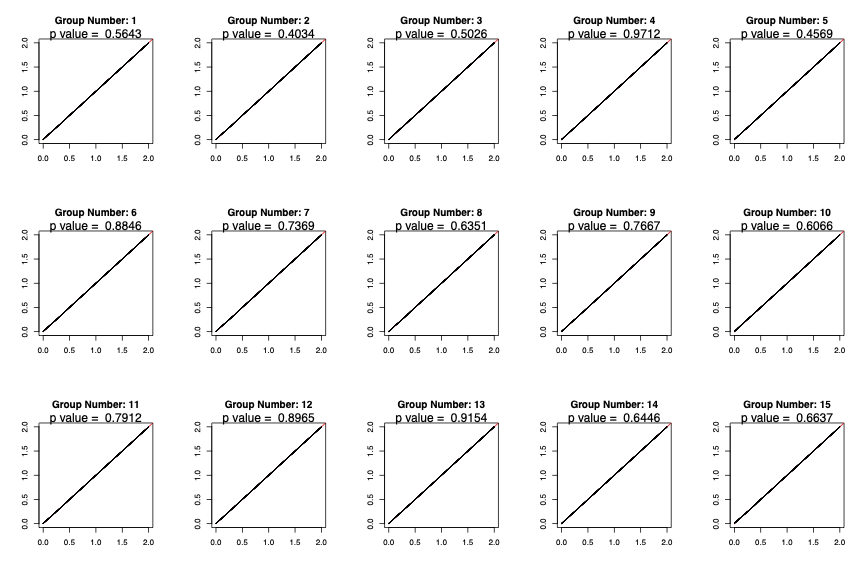}
	\caption{ Quantile-Quantile plots of $d_{\mathcal{X}}( X_i, X_j)$ vs $d_{\mathcal{X}}( g_i \cdot X_i, g_j \cdot X_j )$ for each of the groups in section \ref{ssec:sunspots}. Here the sampling of the group elements is from the low angle regime, where $g_i$ has an rotational angle sampled from $N(0, 0.05^2)$ in radians. Also presented are $p$-values for Kolmogorov-Smirnov tests for equality of the distributions.}
	\label{ fig:qq_plots_all_low_angle_magnets }
\end{figure}

\begin{figure}[h]
	\centering
	\includegraphics[scale=0.5]{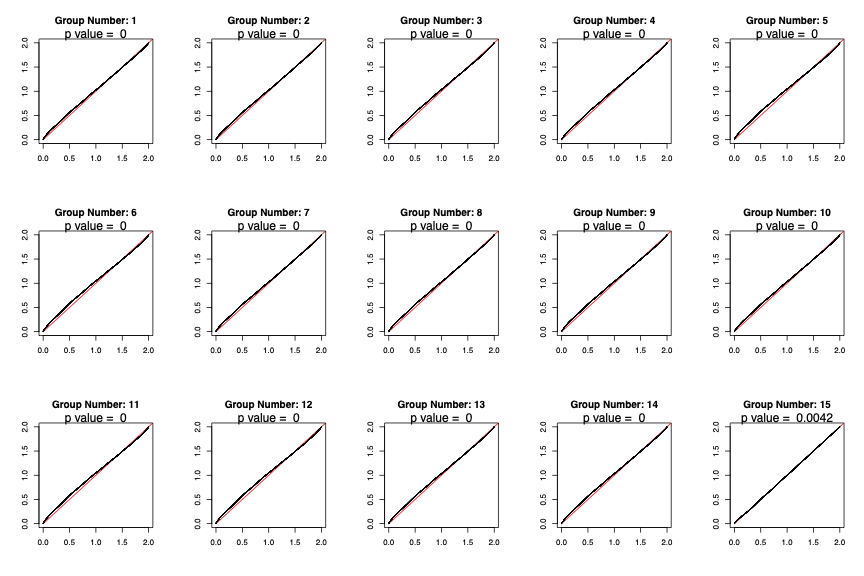}
	\caption{ Quantile-Quantile plots of $d_{\mathcal{X}}( X_i, X_j)$ vs $d_{\mathcal{X}}( g_i \cdot X_i, g_j \cdot X_j )$ for each of the groups in section \ref{ssec:sunspots}. Here the sampling of the group elements is from the uniform angle regime, where $g_i$ has an rotational angle sampled from $U( [0, 2\pi])$ in radians. Also presented are $p$-values for Kolmogorov-Smirnov tests for equality of the distributions.}
	\label{ fig:qq_plots_all_uniform_angle_magnets }
\end{figure}

\subsection{Q-Q plots in section \ref{ssec:sunspots}}
\label{app:qq_plots_sunspots}
	
	Here we check the appropriateness of the sampling distributions for the real data example in section 
	\ref{ssec:sunspots}. As suggested in section \ref{sssec:choice_of_mu_g}, we examine the $Q$-$Q$ plots of the marginal distributions of $d( g_i \cdot X_i , g_j \cdot X_j)$ against the original $d( X_i, X_j)$ where the angle of rotation is sampled from $N(0, 0.05^2)$ and uniformly on $[0, 2 \pi]$. These plots are supplemented by the $p$-values of Kolmogorov-Smirnoff tests for the equality of these distributions. We see that the low angle distributions in figure \ref{ fig:qq_plots_all_low_angle } appear suitable with no measurable difference in the pairwise difference distributions, whilst the uniform angles in figure \ref{ fig:qq_plots_all_uniform_angle } are rejected for all groups except for $S^1_z$. In these figures the ``Group Number'' corresponds to the index of $G_i$ in the lattice base. 
	
\begin{figure}[h]
	\centering
	\includegraphics[scale=0.5]{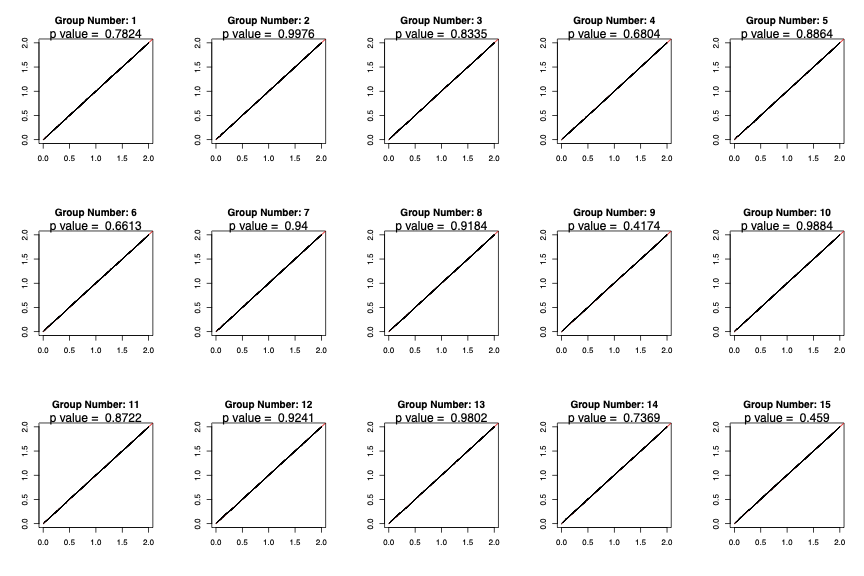}
	\caption{ Quantile-Quantile plots of $d_{\mathcal{X}}( X_i, X_j)$ vs $d_{\mathcal{X}}( g_i \cdot X_i, g_j \cdot X_j )$ for each of the groups in section \ref{ssec:sunspots}. Here the sampling of the group elements is from the low angle regime, where $g_i$ has an rotational angle sampled from $N(0, 0.05^2)$ in radians. Also presented are $p$-values for Kolmogorov-Smirnov tests for equality of the distributions.}
	\label{ fig:qq_plots_all_low_angle }
\end{figure}

\begin{figure}[h]
	\centering
	\includegraphics[scale=0.5]{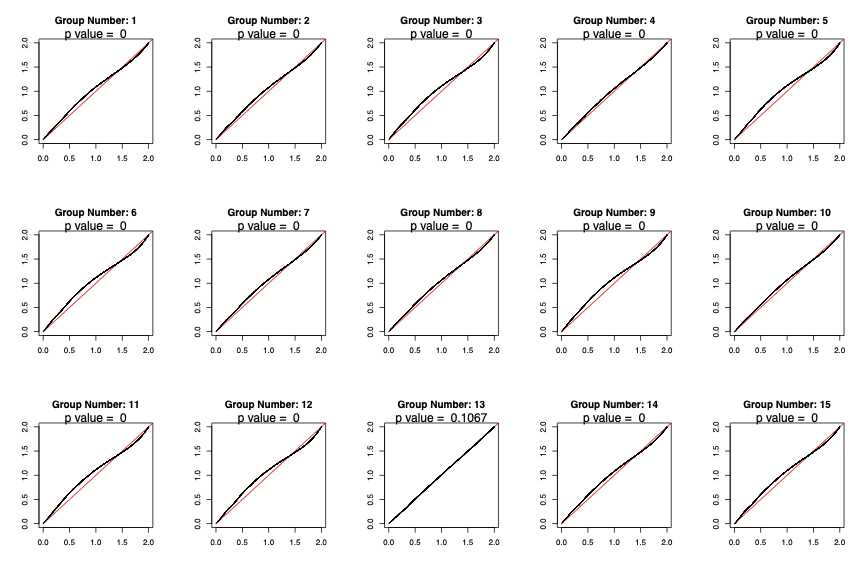}
	\caption{ Quantile-Quantile plots of $d_{\mathcal{X}}( X_i, X_j)$ vs $d_{\mathcal{X}}( g_i \cdot X_i, g_j \cdot X_j )$ for each of the groups in section \ref{ssec:sunspots}. Here the sampling of the group elements is from the uniform angle regime, where $g_i$ has an rotational angle sampled from $U( [0, 2 \pi ] )$ in radians. Also presented are $p$-values for Kolmogorov-Smirnov tests for equality of the distributions.}
	\label{ fig:qq_plots_all_uniform_angle }
\end{figure}

\end{document}